\newtheorem{lemma}{Lemma}[section]
\newtheorem{theorem}{Theorem}
\title{Travel Time Reliability in Stochastic Kinematic Wave Models}
\author[1]{Alexander Hammerl\thanks{Corresponding author: \texttt{asiha@dtu.dk}}}
\author[1]{Ravi Seshadri}
\author[1]{Thomas Kjær Rasmussen}
\author[1]{Otto Anker Nielsen}
\affil[1]{Department of Transport, Technical University of Denmark}
\date{\vspace{-5ex}}
\begin{document}
\maketitle
\renewcommand{\abstractname}{Abstract}

\begin{abstract}
This paper analyzes the time-dependent relationship between the mean and variance of travel time of vehicular traffic on a single corridor under rush hour like congestion patterns. To model this phenomenon, we apply the LWR (\cite{Lighthill1955}, \cite{Richards1956}) theory on a homogeneous freeway with a discontinuous bottleneck at its downstream end, assuming a uni-modal demand profile with a stochastic peak. We establish conditions for typical counterclockwise hysteresis loops under these assumptions and provide a general mathematical characterization of systems which exhibit counterclockwise dynamical behavior. It is demonstrated that shapes of the fundamental diagram which produce a counterclockwise loop can be interpreted as an indication of aggressive driving behavior, while deviations may occur under defensive driving. This classification enables an identification of the qualitative physical mechanisms behind this pattern, as well as an analysis of the causes for quantitatively limited deviations. An empirical validation using loop detector data from the PeMS dataset supports the prevalence of counterclockwise loops under typical rush hour conditions. The obtained results improve the understanding of the causes of this hysteresis pattern and of the properties of kinematic flow models under stochastic boundary conditions.
\end{abstract}

\textbf{Keywords}: traffic flow theory, kinematic waves, hyperbolic conservation laws, hysteresis.

\section{INTRODUCTION}
Travel time reliability is a critical aspect of traveler route choice in urban areas, with empirical analyses showing it can be nearly as important to travelers as the expected travel time itself (e.g. \cite{Li2010,Prato2014,seshadri2017robust,prakash2018consistent}). Travel time variability provides
a convenient measure of reliability and can be analyzed in different reference frames: vehicle-to-vehicle, period-to-period and day-to-day. To model day-to-day variability, which is the focus of this article, a linear relationship between mean and variance of travel time is often assumed (e.g. \cite{Kim2015}, \cite{VanLint2008}). However, empirical data show an anti-clockwise hysteresis loop between these two quantities, both at the level of individual links (\cite{Fosgerau2010}, \cite{Kim2015}, \cite{Yildirimoglu2015}) and at a network level (\cite{Bates2004}, \cite{Gayah2015}). There have been only a few attempts to explain the nature of this relationship and hysteresis pattern. Fosgerau \cite{Fosgerau2010} theoretically proves the existence of such loops in a queueing system with decreasing arrival rate. Yildirimoglu et al. \cite{Yildirimoglu2015} attribute the hysteresis in the day-to-day travel time variability to stochastic parameters of vehicle travel time, but do not analyze which aspects of traffic flow cause or influence this effect. \cite{Gayah2015} demonstrates that travel time variability exhibits counterclockwise hysteresis in networks with clockwise hysteresis in its macroscopic fundamental diagram (MFD).

Separate from the hysteresis in travel time variability are other phenomena in traffic research that bear the same name: Edie \cite{Edie1963} and Treiterer and Myers \cite{Treiterer1974} define hysteresis as the separation of speed-density curves into an accelerating and a decelerating branch ahead of traffic disturbances. Zhang \cite{Zhang1999} and Yeo and Skabradonis \cite{Yeo2009}, among others, offer theoretical explanations for this effect. Due to the high relevance and operational importance of the MFD as a modeling tool, an extensive discussion has developed in the literature regarding the conditions under which MFD curves are well-defined and when they exhibit hysteresis effects (\cite{Geroliminis2008,Daganzo2008,Helbing2009,Geroliminis2011, Geroliminis2011b, Rempe2016}). Various factors driving hysteresis loops have been identified across different studies. For instance, \cite{Buisson2009, Mazloumian2010} emphasize spatial heterogeneity in congestion distribution during different traffic phases as a primary mechanism, while \cite{Buisson2009,Yi2010,mahal13,leqal15} demonstrate that highway-like network topologies with limited route choice options play an equally important role. Network instabilities triggered by exceptional events represent another significant contributor \cite{Gayah2011}. Building on these findings, \cite{Gayah2015} show that in networks exhibiting MFD hysteresis, fluctuations in macroscopic flow quantities can additionally lead to hysteresis effects in the mean-variance plane.

Although MFD hysteresis has been extensively discussed in the literature, hysteresis associated with the mean and variance of travel time (in the context of day-to-day variability) has received relatively little attention. \cite{Fosgerau2010} demonstrates that at least one counterclockwise loop occurs in a simplified queue-based model of rush hour traffic. However, the precise shape of the dynamic relationship remains unresolved in their analysis: "It is possible for the curve to move to the North-East, back towards the origin and North-East again, after turning clockwise some to the West of the first extreme point to the East" (\cite{Fosgerau2010}, p.5). The model developed in the present work comprises \cite{Fosgerau2010}'s model as a special case. We prove that the subloops described above cannot occur, establishing that a simple counterclockwise loop is the only possible trajectory. Furthermore, we contribute to the existing literature by investigating the time-dependent relationship between mean and variance of travel time in a single corridor under rush hour specific traffic dynamics using LWR theory. We analytically derive conditions for the occurrence of anti-clockwise hysteresis loops and provide a rigorous theoretical foundation to explain which traffic flow variables determine the shape and size of the hysteresis loop. The theoretical analysis is supported by empirical data from Interstate 880 in San Francisco.

\section{Model and General Solution}
The LWR theory is a macroscopic model that describes traffic flow on highways using the variables \textit{traffic density} \(k\) (the number of vehicles per unit length) and \textit{traffic flow} \(q\) (the number of vehicles passing a point per unit time). It asserts that the rate of change in the total number of vehicles contained in any road segment \( [x_1, x_2] \) where $x_2>x_1$ and there are no entries or exits,  is equal to the net flow of vehicles out of the segment, i.e.
\begin{equation}
\frac{d}{dt} \int_{x_1}^{x_2} k(x,t) \, dx = - \left[ q(x,t) \right]_{x_1}^{x_2},
\label{eq:integralconservation}
\end{equation}
If \( k \) and \( q \) are differentiable functions, we can, on the left, perform the differentiation under the integral sign and on the right apply the fundamental theorem of calculus to obtain 
\[
\int_{x_1}^{x_2} \left\{ \frac{\partial k}{\partial t} + \frac{\partial q}{\partial x} \right\} dx = 0.
\]
Since this relation is valid for any arbitrary road segment $[x_1, x_2]$, by letting $x_2 \to x_1$ and dividing by the segment's length, the expression simplifies to the partial differential equation
\begin{equation}
	\frac{\partial k}{\partial t} + \frac{\partial q}{\partial x} = 0.
	\label{eq:conservation}
\end{equation}
In addition to \ref{eq:integralconservation}, the LWR theory assumes the existence of a functional relationship between $q$ and $k$ under differentiable conditions:
\begin{equation}
	q(x,t)=Q(x,k(x,t)),
	\label{eq:fundamental}
\end{equation}
where $Q$ is a concave, non-negative function that is equal to zero at $k=0$ and at the \textit{jam density} $k=k_j$.
On substituting equation \ref{eq:conservation} into \ref{eq:fundamental}, we obtain 
\begin{equation}
	\frac{\partial k}{\partial t} + \frac{dQ}{dk} \cdot \frac{\partial k}{\partial x} = 0,
	\label{eq:differentiallwr}
\end{equation}
which defines a unique solution for $k(x,t)$ and $q(x,t)$ for given initial and boundary conditions if $q$ and $k$ are differentiable. We denote by $k_{\text{crit}}$ the density that maximizes the fundamental flow-density relationship, i.e., $k_{\text{crit}} = \arg\max_k q(k)$.
Flow and density are related to the cumulative flow \(N(x,t)\) as follows:

\begin{equation}
	q(x,t) = \frac{\partial N}{\partial t} (x,t), \quad k(x,t) = -\frac{\partial N}{\partial x} (x,t)
\end{equation}

In cases where \(k\) has a discontinuity at \((x,t)\), known as a shockwave, the shockwave's speed \(u\) is specified as:
\begin{equation}
	u = \frac{[q]}{[k]} = \frac{q_2 - q_1}{k_2 - k_1}.
	\label{eq:rhjump}
\end{equation}

It should be emphasized that customary solutions of the LWR model do \textbf{not} solve the system of equations \ref{eq:integralconservation} and \ref{eq:fundamental}. Instead, they solve equation \ref{eq:differentiallwr}, which assumes the fundamental diagram holds only where traffic variables are differentiable in $x$ and $t$ (not at shocks). For solutions at discontinuities, additional entropy conditions may be required to ensure uniqueness. For details on entropy solutions in traffic flow models, see for example \cite{Lebacque1996}, \cite{Ansorge1990}, \cite{Jin2009}.

However, entropy solutions are not necessary for our analysis. We study a homogeneous version of the LWR model where the fundamental diagram remains constant throughout the spatial domain, except at the downstream bottleneck: 
\begin{equation}
    q(k(x,t),x) = q(k(x,t)) \quad \text{for all } x \in [0, l).
\end{equation}
Moreover, our analysis considers only continuous boundary conditions. In this setting, every point in spacetime is intersected by at least one kinematic wave, allowing us to uniquely determine $N(x,t)$ \cite{Newell1993, Newell1993b, Daganzo2005}. When a solution exists for $q(x,t)$ and $k(x,t)$, it is also unique.

For the purposes of our analysis, we need only assume that equation \ref{eq:differentiallwr} holds where $k$ and $q$ are differentiable, and that equation \ref{eq:integralconservation} holds at all points. This framework is sufficient to derive all results presented in this paper.

The third variable, the \textit{average speed}, is defined as $v=\frac{q}{k}$. We model aggressive driving by assuming \( v''(k) \leq 0 \), while defensive driving is characterized by \( v''(k) \geq 0 \). This implies that defensive drivers reduce their speed following a density increase even in light traffic, whereas aggressive drivers maintain higher speeds until they are closer to the jam density.

Traffic moves along a road segment of length $l$, which ends in a bottleneck with a maximum capacity $q_{bn}$. The travel time $\tau(t)$ for a vehicle entering the segment at time $t$ is described by:
\begin{equation}
	\tau(t) = \inf \{ T \geq 0 : N(l, t + T) > N(0, t) \},
\end{equation}

It is assumed that the length of the queue never exceeds the physical space available on the road segment. To model rush hour traffic, the upstream boundary flow $q(0,t)$ is represented as a trapezoidal function with a randomly distributed peak $q_p \sim \phi$, defined as:

\begin{equation}
	q(0,t) = 
	\begin{cases} 
		q_b + a \cdot t, & \text{for } 0 \leq t \leq \frac{q_p-q_b}{a}, \\
		q_p, & \text{for } \frac{q_p-q_b}{a} \leq t \leq t_{pe}, \\
		q_p - b \cdot (t_{e} - t), & \text{for } t_{pe} \leq t \leq \frac{q_p}{q_e \cdot b}+t_{pe}, \\
		q_e, & \text{for } \frac{q_p}{q_e \cdot b}+t_{pe} \leq t \leq \infty.
	\end{cases}
\end{equation}

for suitably chosen parameters \( q_b \) (initial flow), \( q_e \) (end flow), $t_{pe}$ (end time of peak congestion), \( a \) (flow increase rate at the onset of congestion), and \( b \) (flow reduction rate at the offset of congestion). We further assume that $q_e<q_{bn}$, so that the expected travel time for very late departure times, as $t_{\text{dep}} \to \infty$, approaches the free flow travel time $\tau_{\text{free}}=\frac{l}{v(0)}$. The travel time of a vehicle departing at time $t$, given that the peak boundary flow is $q_p$, is denoted as $\tau(t,q_p)$. 

Analytical solutions of the LWR model often rely on the method of characteristics, which describes the conservation of density \( k(x,t) \) along a path with location-dependent wave speed \( q'(k) \). Under smooth conditions, the density remains constant along this path:

\[
\frac{dk}{dt} = \nabla k \cdot \begin{bmatrix} q'(k) \\ 1 \end{bmatrix} = \frac{\partial k}{\partial x} q'(k) + \frac{\partial k}{\partial t} = 0,
\]

where the second equation follows from the conservation law \ref{eq:conservation}. If a point in space-time is traversed by more than one characteristic, a shock wave forms at that point. The works by \cite{Newell1993}, \cite{Daganzo2003}, and \cite{Daganzo2005} provide the following rule to ensure the physical correctness of individual characteristics:

\begin{lemma}[\cite{Newell1993}, \cite{Daganzo2005}]
\label{lemma:newell}
The integral conservation law \ref{eq:integralconservation} is satisfied at a point \( (x, t) \) by the characteristic that intersects this point associated with the lowest cumulative flow \( N(x,t) \).
\end{lemma}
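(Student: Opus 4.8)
The plan is to derive the minimum principle directly from the integral conservation law \ref{eq:integralconservation} together with the fundamental relation \ref{eq:fundamental}, via a moving-observer argument, so that no appeal to entropy theory is required. The central object is the rate at which vehicles overtake an observer who traverses spacetime along an arbitrary trajectory $x = X(s)$. First I would introduce the \emph{relative capacity} $R(w) = \max_{\kappa}\,[\,Q(\kappa) - \kappa w\,]$, the largest rate at which vehicles can pass an observer moving at speed $w$; concavity of $Q$ guarantees the maximum is attained, at the density $\kappa$ satisfying $Q'(\kappa) = w$.

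Next I would bound the growth of $N$ along any such path. Where $k$ and $q$ are differentiable, the chain rule together with $q = \partial_t N$ and $k = -\partial_x N$ gives $\tfrac{d}{ds} N(X(s),s) = Q(k) - k\,\dot X \le R(\dot X)$. Since $k$ and $q$ are bounded, $N$ is Lipschitz, hence absolutely continuous along the path, so integrating the almost-everywhere bound — the shock crossings form a set of measure zero in $s$, and the one-sided rates on either side of a shock are each dominated by $R(\dot X)$ — yields
\begin{equation}
N(x,t) \;\le\; N(A) + \int_{t_0}^{t} R(\dot X(s))\, ds
\label{eq:obsbound}
\end{equation}
for every admissible trajectory from a boundary point $A = (X(t_0),t_0)$ to $(x,t)$. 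Inequality \eqref{eq:obsbound} is the quantitative form of ``vehicles cannot be created'': the count at $(x,t)$ can never exceed what the most restrictive upstream condition delivers.

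I would then specialize \eqref{eq:obsbound} to straight characteristic lines. A characteristic carrying density $k_i$ travels at $\dot X = Q'(k_i)$, for which $Q(k_i) - k_i Q'(k_i) = R(Q'(k_i))$ by the tangency defining $R$; propagating $k_i$ along that line therefore assigns the candidate value $N_i(x,t) = N(A_i) + R(Q'(k_i))\,(t - t_{0,i})$, and \eqref{eq:obsbound} gives $N(x,t) \le N_i(x,t)$ for every characteristic reaching $(x,t)$, so $N(x,t) \le \min_i N_i(x,t)$. To obtain equality I would exhibit a trajectory achieving the bound: the characteristic of the \emph{actual} solution through $(x,t)$ carries a genuinely constant density, so the rate equals $R$ at every $s$ and \eqref{eq:obsbound} holds with equality along it. Hence the minimum is attained, $N(x,t) = \min_i N_i(x,t)$, and the characteristic realizing it is precisely the one with the lowest cumulative flow, which is the assertion of the lemma.

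The main obstacle I anticipate is not \eqref{eq:obsbound} but the attainment step: one must guarantee that among the characteristics emanating from the admissible boundary and initial data there is one reaching $(x,t)$ with density constant along its entire length, i.e.\ that the extremal trajectory in the variational bound is a genuine characteristic of the solution rather than a path forced to cross shocks. For the homogeneous corridor with the continuous trapezoidal boundary data specified above, every point of spacetime is crossed by at least one such characteristic, so the minimum is always realized; making this existence statement precise, and verifying that the observer inequality survives the finitely many shock crossings, is where the care is required.
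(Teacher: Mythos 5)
Your proposal is correct and, in substance, is the proof the paper implicitly relies on: the paper gives no proof of this lemma at all, attributing it to \cite{Newell1993} and \cite{Daganzo2005}, and your moving-observer argument with the relative capacity $R(w)=\max_{\kappa}\,[\,Q(\kappa)-\kappa w\,]$, the Legendre-type tangency $Q(k_i)-k_iQ'(k_i)=R\big(Q'(k_i)\big)$ along each candidate characteristic, and attainment along a constant-density wave of the actual solution is precisely the variational formulation of \cite{Daganzo2005}. Two small points: your measure-zero dismissal of shock crossings tacitly assumes the path crosses shocks transversally, but the bound survives even if the path rides along a shock, since by the jump condition \ref{eq:rhjump} the rate of change of $N$ along a shock of speed $u$ equals $q_1-k_1u=q_2-k_2u\le R(u)$; and the attainment step you flag as delicate is exactly what the paper's setting (homogeneous fundamental diagram, continuous boundary data, every point of spacetime intersected by at least one kinematic wave) is designed to guarantee, so no additional entropy machinery is needed.
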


While \cite{Daganzo2005b} addresses the computational efficiency of identifying valid characteristics, the following lemma provides an even more efficient method for our specific problem:

\begin{lemma}
\label{lemma:shockpriority}
Let $q'(k)$ be a concave function in k. Then, for every point \((x,t)\) satisfying \( x < \psi(t) \) which is reached by at least one characteristic curve, the physically correct characteristic is the latest emanating one.
\end{lemma}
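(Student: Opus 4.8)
The plan is to reduce the claim to the minimum principle of Lemma~\ref{lemma:newell} and then to show that, among all characteristics passing through a fixed point \((x,t)\) with \(x<\psi(t)\), the one emanating latest is precisely the one carrying the smallest cumulative count \(N(x,t)\). First I would record that the restriction \(x<\psi(t)\) places \((x,t)\) strictly upstream of the congested region induced by the bottleneck, so that every characteristic reaching it originates on the free-flow branch at the upstream boundary \(x=0\). This lets me parametrize characteristics by their emanation time \(s\): the characteristic leaving \((0,s)\) carries the constant density \(\kappa(s)\) defined by \(q(0,s)=Q(\kappa(s))\) on the free branch, travels with wave speed \(w(s)=q'(\kappa(s))\), and transmits the count according to \(dN=\big(Q(k)-k\,q'(k)\big)\,dt\) along its path (obtained from \(dN=q\,dt-k\,dx\) together with \(dx=q'(k)\,dt\)). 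The case of a single reaching characteristic is trivial, so the content lies in points reached by several.

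For a fixed horizontal position \(x\), I would introduce the value transmitted by characteristic \(s\) as it crosses the vertical line at \(x\), at time \(t(s)=s+x/w(s)\),
\[
\mathcal{N}(s)=N(0,s)+\big(Q(\kappa(s))-\kappa(s)\,w(s)\big)\frac{x}{w(s)} .
\]
A short computation, using \(Q'(\kappa)=q'(\kappa)=w\) and \(\tfrac{d}{ds}N(0,s)=Q(\kappa(s))\), then yields the clean identity
\[
\frac{d\mathcal{N}}{ds}=Q(\kappa(s))\,\frac{dt}{ds},\qquad\text{equivalently}\qquad \frac{d\mathcal{N}}{dt}=Q(\kappa(s))=q\ge 0 ,
\]
which merely re-expresses \(\partial N/\partial t=q\) along the line \(X=x\). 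The parametric curve \(s\mapsto(t(s),\mathcal{N}(s))\) therefore always advances with nonnegative slope in the \((t,N)\)-plane and can fold back on itself only where \(dt/ds=0\). A point \((x,t)\) reached by several characteristics corresponds to several parameters \(s\) with the same \(t(s)=t\); by Lemma~\ref{lemma:newell} the physical value is the lowest \(\mathcal{N}\) among them, so it suffices to prove that the largest such \(s\) attains this minimum.

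The final and hardest step is to control the fold structure of \(t(s)=s+x/w(s)\). I would write \(dt/ds=1-x\,w'(s)/w(s)^2\) and show that the concavity hypothesis on \(q'\) (together with the structure of \(\kappa(s)\) inherited from the trapezoidal boundary profile) forces \(dt/ds\) to change sign in a single, simple pattern, so that the curve admits at most one fold and its two admissible branches are ordered with the later-emanating (faster, lower-density) branch lying strictly below the earlier one. Concretely, writing \(\mathcal{N}(s_{\mathrm{late}})-\mathcal{N}(s_{\mathrm{early}})=\int Q(\kappa)\,dt\) along the connecting arc and converting it into a signed area between the flow values on the two branches, the sign of this area is governed by \(q''\le 0\) and \(q'''\le 0\); the crux is to verify that concavity of \(q'\) renders it nonpositive for every admissible \(x\) and \(t\). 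Establishing this sign is the main obstacle, since it is the only place where the third-order information in \(q'\) is genuinely needed, and it is exactly what rules out the spurious multi-fold (sub-loop) behaviour and pins down the latest characteristic as the physically correct one.
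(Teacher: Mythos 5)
Your set-up is sound and genuinely different from the paper's argument: you reduce the claim to the minimum principle of Lemma~\ref{lemma:newell} and your computations are correct as far as they go — the carried count $\mathcal{N}(s)=N(0,s)+\bigl(Q(\kappa(s))-\kappa(s)w(s)\bigr)\tfrac{x}{w(s)}$ does satisfy $\mathcal{N}'(s)=Q(\kappa(s))\,t'(s)$ with $t(s)=s+x/w(s)$, so the fold-curve picture in the $(t,N)$-plane is legitimate. But the proposal stops exactly where the lemma begins. Everything up to the last paragraph only restates Lemma~\ref{lemma:newell} in the parametrization by emanation time; the entire content of Lemma~\ref{lemma:shockpriority} is the assertion that the \emph{largest} $s$ with $t(s)=t$ attains the minimal $\mathcal{N}$, and this you announce (``I would write \dots and show \dots'') rather than prove, conceding yourself that ``establishing this sign is the main obstacle.'' To see concretely what is missing: if $s_1<s_2<s_3$ all satisfy $t(s_i)=t$, then since $\int_{s_1}^{s_3}t'(s)\,ds=0$, integration by parts gives $\mathcal{N}(s_3)-\mathcal{N}(s_1)=\int_{s_1}^{s_3}\bigl(t-t(s)\bigr)\,Q'(s)\,ds$. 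On the first lobe of the fold one has $t(s)\ge t$ and $Q'(s)\le 0$, so the integrand is \emph{nonnegative} there (indeed this immediately yields $\mathcal{N}(s_2)\ge\mathcal{N}(s_1)$, so the middle branch is harmless); on the second lobe it is nonpositive. The lemma is therefore equivalent to an area comparison between the two lobes of the fold of $t(s)$ about the level $t$, uniformly in $x$ and $t$, and it is precisely this lobe-area inequality — the only place the concavity of $q'$ genuinely enters — that the proposal leaves unestablished. Without it, nothing beyond the reduction is proved, and the single-fold claim alone does not suffice: a single fold is compatible with either ordering of $\mathcal{N}(s_1)$ and $\mathcal{N}(s_3)$ absent the sign estimate.

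It is worth noting how the paper avoids this global estimate. Its proof discretizes the decreasing branch of the boundary flow into intervals $I_1,\dots,I_n$, so the boundary steps propagate as shock waves, and uses the concavity of $q'(k)$ only through an elementary local ordering: the intersection of the characteristics from $I_{n-1}$ and $I_n$ occurs before that of the characteristics from $I_n$ and $I_{n+1}$. Hence any earlier-emanating characteristic reaching a point $(x,t)$ that is also reached by a later one must already have crossed a shock and is invalid, and the continuous statement follows by letting the interval lengths tend to zero. So the paper trades your single third-order integral inequality for a pairwise, first-order comparison plus a limit. If you want to complete your route, you must actually prove the lobe-area inequality — e.g., exploit that on the trapezoidal profile $Q'(s)=-b$ is constant on the ramp, so the claim becomes $\int\bigl(t(s)-t\bigr)\,ds\le 0$ over the fold, and then derive this from the monotonicity of $w'(s)/w(s)^2$ along the ramp, which does follow from $q'''\le 0$; alternatively, adopt the paper's discretization, which reaches the same conclusion with strictly less analytic machinery.
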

\begin{proof}[Proof (Sketch)]
The lemma is proven by analyzing a discretized approximation of the upstream boundary condition. By the concavity of \(q(k)\), characteristics may intersect only if they originate from the descending part of the boundary. We partition the decreasing branch into intervals \(I_1, \ldots, I_n\). 
Due to the concavity of \( q'(k) \), the characteristic speed \( q'(k) \) increases more strongly between \( I_{n-1} \) and \( I_n \) than between \( I_n \) and \( I_{n+1} \). Consequently, the intersection point of the lines starting in \( I_{n-1} \) and \( I_n \) occurs before the intersection point of the lines starting in \( I_n \) in \( I_{n+1} \).
Define \(q_{\text{discr}}(t)\) as \(q(0, t_I)\) where \(t_I\) is the lower bound of the interval \(I\) containing \(t\). Additionally, we linearize \(q(k)\) over the decreasing branch. Suppose two characteristic lines intersect at \((x, t)\), with \(c_1\) from \(I_1\) and \(c_2\) from \(I_2\), \(I_1 < I_2\), and \(c_2\) is most recent. \(c_1\) must have crossed a shockwave, representing the physically valid solution at this point in space-time. The continuous boundary condition solution derives from this discretization method as intervals approach zero length.
\end{proof}

Figure~\ref{fig:approximation_method} illustrates the approximation method of the lemma. Figure~\ref{fig:transformation} displays the transformation of a continuously decreasing boundary flow (green) into three (blue) or six (orange) discrete steps. Figures~\ref{fig:solution_a} and~\ref{fig:solution_b} show the resulting solutions. These steps propagate as shock waves. The characteristics that intersect the point $(\frac{4}{3}, 25)$ are shown as dotted lines. It is straightforward to verify graphically that only the later emanating characteristic represents a feasible solution of the LWR theory in both cases.

\begin{figure}[H]
	\centering
	\begin{subfigure}{.5\textwidth}
		\centering
		\includegraphics[height=0.7\linewidth]{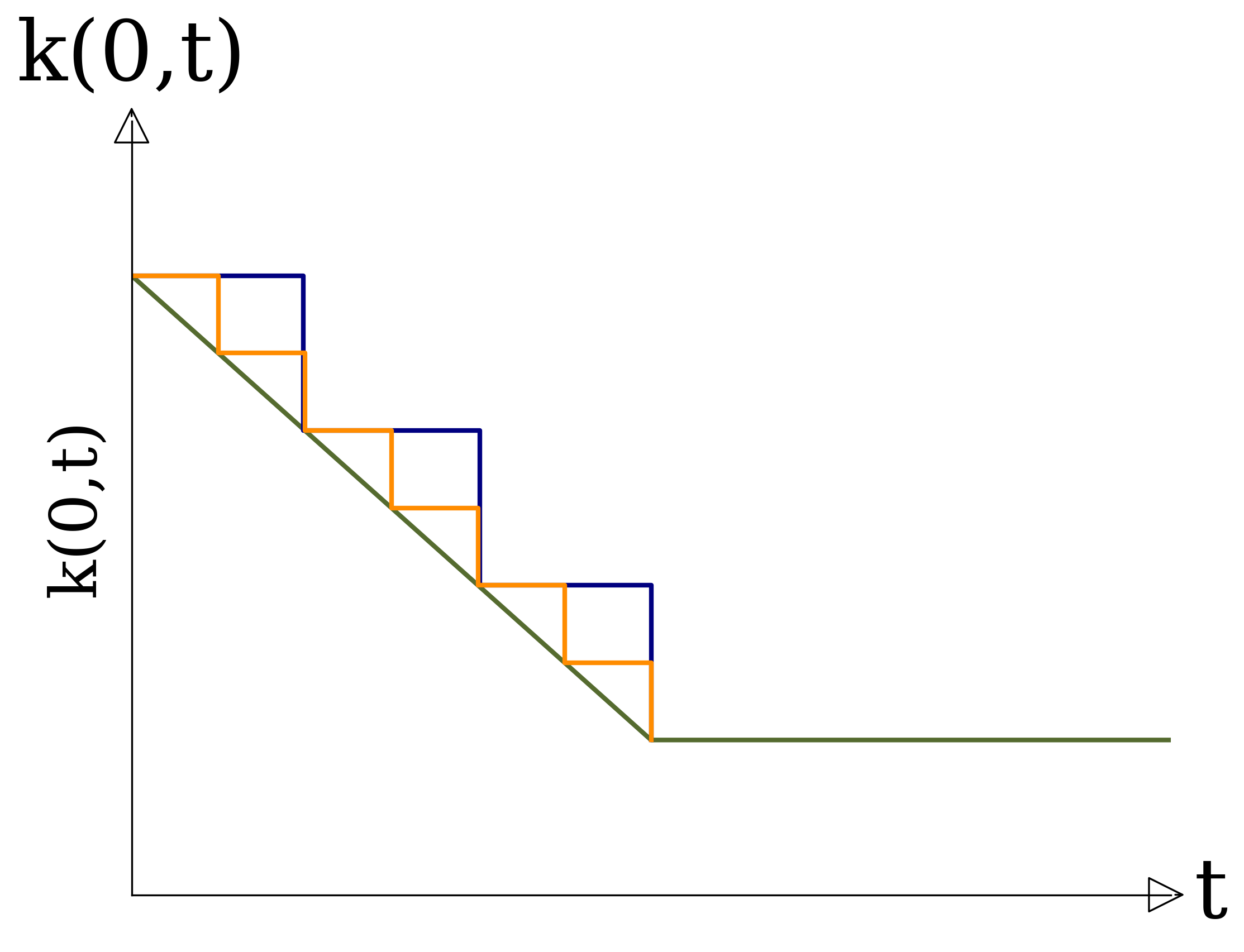} 
		\caption{Discretized flow}
		\label{fig:transformation}
	\end{subfigure}%
	\hfill
	\begin{subfigure}{.5\textwidth}
		\centering
		\includegraphics[height=0.7\linewidth]{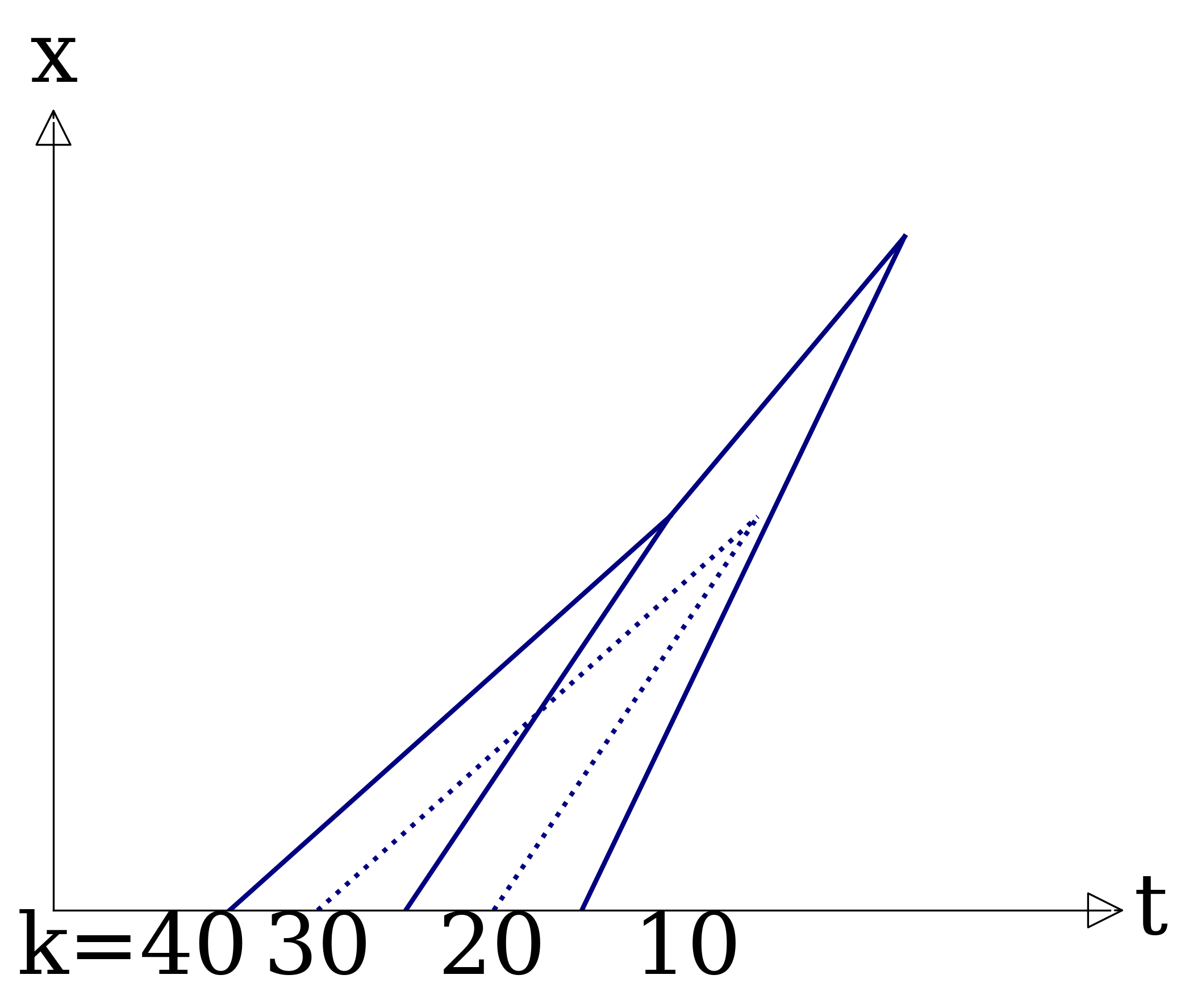} 
		\caption{Approximate solution 1}
		\label{fig:solution_a}
	\end{subfigure}%
	\hfill
	\begin{subfigure}{.5\textwidth}
		\centering
		\includegraphics[height=0.7\linewidth]{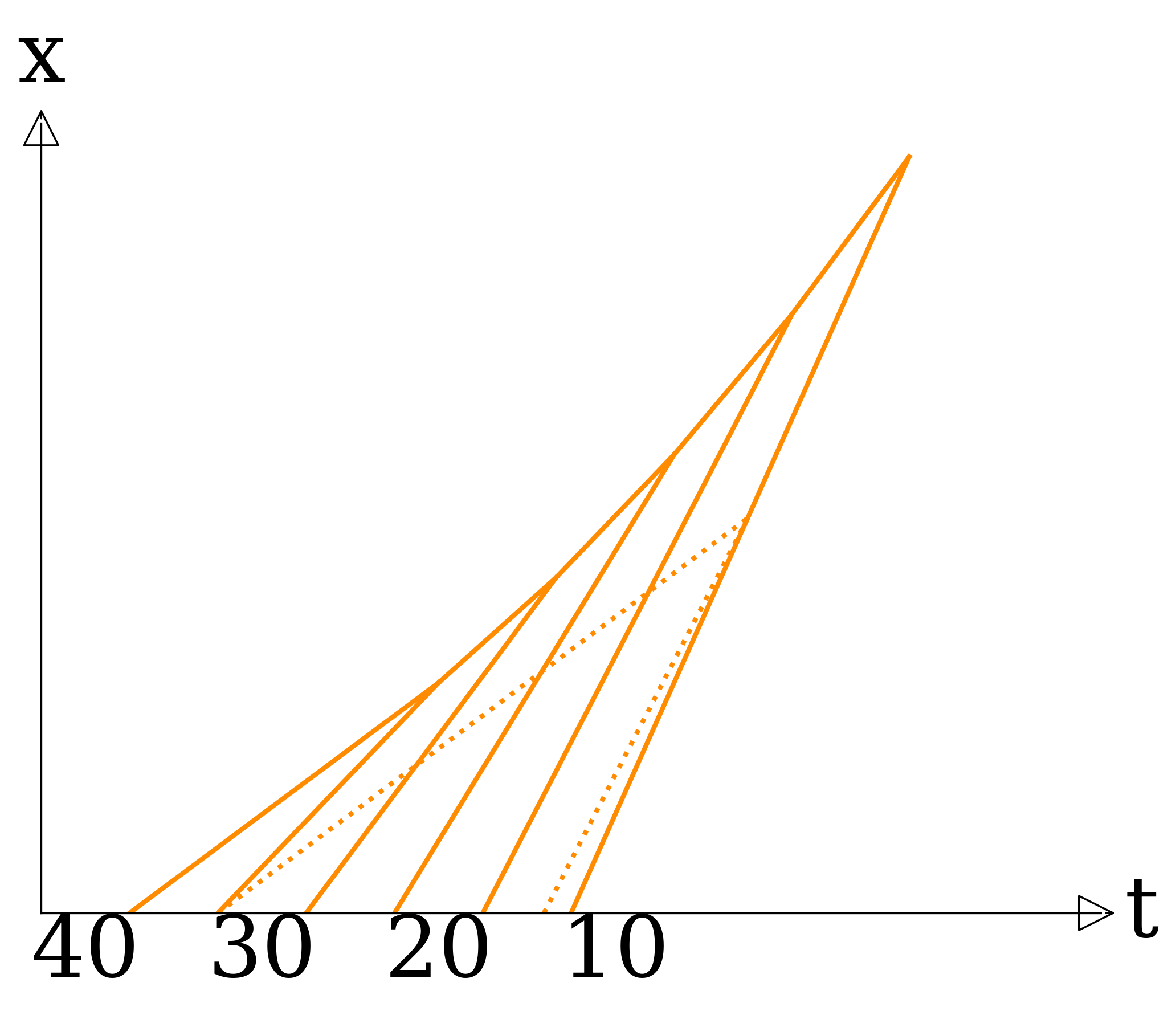} 
		\caption{Approximate solution 2}
		\label{fig:solution_b}
	\end{subfigure}
	\caption{Visualization of the discretized flow and approximate solutions}
	\label{fig:approximation_method}
\end{figure}

\section{The Shape of the Mean-Variance Curve}
\label{section:shape}

We consider two vehicles, A and B, entering a traffic corridor at times $t_1$ and $t_2$ respectively, where $t_1 < t_2$. Both vehicles have identical expected travel times which implies that $t_1 \leq t_{pe} \leq t_2$.
Throughout this section, we assume aggressive driving behavior on the uncongested branch of the fundamental diagram, i.e., $v''(k) \leq 0$ for $ k \leq k_{\text{crit}}$, which implies $q^{(3)}(k) \leq 0$ and $k^{(3)}(q) \geq 0$ in this interval.

Let us define $q_p^-$ and $q_p^+$ such that $\tau(t_1, q_p^-)=\tau(t_2, q_p^-)$ and $q_p^+>q_p^-$. We denote the arrival times of both vehicles at these peak flow values as \( t_{1,a}^- \) and \( t_{2,a}^- \), and \( t_{1,a}^+ \) and \( t_{2,a}^+ \), respectively. For analytical convenience, we normalize the corridor length to 1 without loss of generality of the presented proof and set the time scale such that $b=1$ applies at a peak flow of $q_p^+$. The extension of our results to arbitrary units of space and time can then be achieved by simply scaling the relevant terms in the proof by the factors $l$ and $b$, respectively.

To assess the difference in travel time variance, we consider travel time difference $\Delta\tau\coloneqq\tau(t_2)-\tau(t_1)$ as a function of $q_p$, for which the following relationship holds: 

\begin{lemma}
	\label{lemma:characterization}
	Assume there exists an \( x_0 \) such that \( \Delta \tau(x) < 0 \) for \( x < x_0 \) and \( \Delta \tau(x) > 0 \) for \( x > x_0 \), where \(\tau_1(q_p) > 0\) and \(\tau_1\) is an increasing function of \( q_p \). Then, the following inequality holds:
	\[
	\operatorname{Var}[\tau_2] \geq \operatorname{Var}[\tau_1].
	\]
\end{lemma}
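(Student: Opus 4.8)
The plan is to exploit the standing assumption of this section that the two vehicles have identical expected travel times, which collapses the variance comparison into the sign of a single expectation. Writing $\tau_1(q_p)$ and $\tau_2(q_p)$ for the two (deterministic) travel times as functions of the random peak $q_p\sim\phi$, and interpreting the crossing point $x_0$ as a value of $q_p$, the equality $\mathbb{E}[\tau_1]=\mathbb{E}[\tau_2]$ makes the squared-mean terms cancel, so that
\[
\operatorname{Var}[\tau_2]-\operatorname{Var}[\tau_1]=\mathbb{E}[\tau_2^2-\tau_1^2]=\mathbb{E}\bigl[(\tau_2-\tau_1)(\tau_2+\tau_1)\bigr]=\mathbb{E}\bigl[\Delta\tau\,(\tau_1+\tau_2)\bigr].
\]
It therefore suffices to show that this last expectation is nonnegative.

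The key device is that $\Delta\tau$ has mean zero: since $\mathbb{E}[\Delta\tau]=\mathbb{E}[\tau_2]-\mathbb{E}[\tau_1]=0$, I may subtract an arbitrary constant $2c$ from the second factor without altering the value,
\[
\mathbb{E}\bigl[\Delta\tau\,(\tau_1+\tau_2)\bigr]=\mathbb{E}\bigl[\Delta\tau\,(\tau_1+\tau_2-2c)\bigr].
\]
I would then make the decisive choice $c\coloneqq\tau_1(x_0)$, the common travel time at the crossing (common because $\Delta\tau(x_0)=0$ forces $\tau_2(x_0)=\tau_1(x_0)$), and set $g(q_p)\coloneqq\tau_1(q_p)+\tau_2(q_p)-2\tau_1(x_0)$, so that the quantity of interest is $\mathbb{E}[\Delta\tau\cdot g]$.

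The heart of the argument, and the step I expect to require the most care, is showing that $g$ carries the same sign as $\Delta\tau$ at every $q_p$, so that the integrand $\Delta\tau\cdot g$ is pointwise nonnegative. For $q_p\le x_0$ the hypothesis gives $\Delta\tau(q_p)\le 0$, hence $\tau_2(q_p)\le\tau_1(q_p)$; combined with monotonicity of $\tau_1$, which yields $\tau_1(q_p)\le\tau_1(x_0)$, both $\tau_1(q_p)$ and $\tau_2(q_p)$ lie at or below $\tau_1(x_0)$, so $g(q_p)\le 0$. Symmetrically, for $q_p\ge x_0$ we have $\Delta\tau(q_p)\ge 0$ and $\tau_1(q_p)\ge\tau_1(x_0)$, forcing both travel times at or above $\tau_1(x_0)$ and hence $g(q_p)\ge 0$. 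Thus $\Delta\tau(q_p)\,g(q_p)\ge 0$ for all $q_p$, and integrating against $\phi$ gives $\mathbb{E}[\Delta\tau\cdot g]\ge 0$, which is precisely $\operatorname{Var}[\tau_2]\ge\operatorname{Var}[\tau_1]$.

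It is worth noting that this route never requires $\tau_2$ to be monotone: the single-crossing sign structure of $\Delta\tau$ together with monotonicity of $\tau_1$ alone pins down the sign of $g$, and it is exactly this that makes the constant choice $c=\tau_1(x_0)$ succeed. The only genuinely substantive inputs are the equal-means assumption, the single crossing of $\Delta\tau$, and the monotonicity of $\tau_1$; the remainder is the mean-zero shifting trick and a pointwise sign check.
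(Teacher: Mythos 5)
Your proof is correct and is essentially the paper's own argument in lightly repackaged form: the paper establishes $\mathbb{E}[\tau_1\,\Delta\tau]\ge \tau_1(x_0)\,\mathbb{E}[\Delta\tau]=0$ by the same single-crossing/monotonicity sign check (i.e., the pointwise inequality $(\tau_1-\tau_1(x_0))\,\Delta\tau\ge 0$) and then applies $\operatorname{Var}(\tau_2)=\operatorname{Var}(\tau_1)+\operatorname{Var}(\Delta\tau)+2\mathbb{E}[\tau_1\,\Delta\tau]$, which is precisely your identity $\operatorname{Var}(\tau_2)-\operatorname{Var}(\tau_1)=\mathbb{E}\bigl[\Delta\tau\,(\tau_1+\tau_2-2\tau_1(x_0))\bigr]$ with the nonnegative term $\operatorname{Var}(\Delta\tau)=\mathbb{E}[\Delta\tau^2]$ split off rather than absorbed into the pointwise check. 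Your mean-zero shift by $c=\tau_1(x_0)$ is the same device the paper uses, so the two arguments differ only cosmetically.
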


\begin{proof}
\begin{align*}
	\mathbb{E}[\tau_1 \Delta \tau] &= \int \tau_1(\Delta \tau \phi(x)) \, dx \\
	&= \int_0^{x_0} \tau_1 \Delta \tau \phi(x) \, dx + \int_{x_0}^\infty \tau_1 \Delta \tau \phi(x) \, dx \\
	&\geq \tau_1(x_0) \cdot \int_0^{x_0} \Delta \tau \phi(x) \, dx + \tau_1(x_0) \cdot \int_{x_0}^\infty \Delta \tau \phi(x) \, dx \\
	&= \tau_1(x_0) \cdot \mathbb{E}[\Delta \tau]= 0.
\end{align*}

\begin{align*}
	\operatorname{Var}(\tau_2) &= \operatorname{Var}(\tau_1 + \Delta \tau) \\
	&= \operatorname{Var}(\tau_1) + \operatorname{Var}(\Delta \tau) + 2\operatorname{Cov}(\tau_1, \Delta \tau) \\
	&= \operatorname{Var}(\tau_1) + \operatorname{Var}(\Delta \tau) + 2(\mathbb{E}[\tau_1 \Delta \tau] - \mathbb{E}[\tau_1] \mathbb{E}[\Delta \tau]) \\
	&= \operatorname{Var}(\tau_1) + \operatorname{Var}(\Delta \tau) + 2\mathbb{E}[\tau_1 \Delta \tau] \ge \operatorname{Var}(\tau_1).
\end{align*}
\end{proof}

The expected travel time is a uni-modal function of the departure time and approaches the minimum travel time for both very early and very late departures. Therefore, if \(\operatorname{Var}[\tau_2] \geq \operatorname{Var}[\tau_1]\), this corresponds to a counterclockwise movement of the curve in the $(\mathbb{E}, \operatorname{Var})$ plane.

For the remainder of the proof, we define \(\Delta \tau_f\) as the difference in travel time between A and B, if no queue is active at the arrival time of either vehicle. Similarly, we define \(\Delta \tau_q\) as the difference in travel time between A and B if a queue is active at the arrival time of both vehicles. We first establish that, at peak flow $q_p^+$,  $\Delta \tau_q$ and $\Delta \tau_f$ are non-negative. This implies a single sign change from negative to positive, thus satisfying the condition of Lemma \ref{lemma:characterization}. We then demonstrate that this property is preserved in the combined residual delay $\Delta \tau=\tau(t_2)-\tau(t_1)$.

In the following section, we present a three-part proof to demonstrate by case distinction that the conditions of Lemma \ref{lemma:characterization} are satisfied for the following classes of peak flows: 
(i) both vehicles encounter an active queue (Lemma \ref{lemma:c_final}), 
(ii) no vehicle encounters a queue (Lemma \ref{lemma:f_final}), 
(iii) exactly one vehicle encounters a queue (Lemma \ref{lemma:fc_final}). 
The application of Lemma \ref{lemma:characterization} then leads to the main result of the paper:

\begin{theorem}
\label{theorem:main}
Consider aggressive driving behavior, characterized by a concave speed-density relationship in the free-flow regime (i.e., $v''(k) \leq 0$ for densities $k \leq k_{\text{crit}}$). Under these conditions, the path described by the expected value $\mathbb{E}[\tau(t)]$ and variance $\operatorname{Var}[\tau(t)]$ forms a counterclockwise loop in the mean-variance phase plane.
\end{theorem}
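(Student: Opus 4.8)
The plan is to derive Theorem \ref{theorem:main} as a direct consequence of Lemma \ref{lemma:characterization}. First I would fix an arbitrary pair of departure times $t_1 < t_2$ with equal expected travel times; by the unimodality of $\mathbb{E}[\tau(t)]$ in the departure time this forces $t_1 \le t_{pe} \le t_2$, so that A sits on the rising flank of the trapezoidal demand and B on the falling flank. A counterclockwise orientation of the curve in the $(\mathbb{E}, \operatorname{Var})$-plane is then equivalent to the statement $\operatorname{Var}[\tau_2] \ge \operatorname{Var}[\tau_1]$ for every such pair, which is precisely the conclusion of Lemma \ref{lemma:characterization}. It therefore suffices to verify the two hypotheses of that lemma with $x = q_p$ and $x_0 = q_p^-$: namely that $\tau_1(q_p)$ is positive and increasing, and that $\Delta\tau(q_p) = \tau(t_2,q_p) - \tau(t_1,q_p)$ changes sign exactly once, passing from negative to positive as $q_p$ increases through $q_p^-$.

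The positivity and monotonicity of $\tau_1$ are the easy half: $\tau_1 \ge \tau_{\text{free}} > 0$, and raising the peak $q_p$ can only increase the density, and hence the delay, encountered by the early vehicle A, so $\tau_1$ is nondecreasing in $q_p$. The substance lies in the single-sign-change property, which I would establish by partitioning the range of $q_p$ according to the queueing status at each vehicle's arrival. As $q_p$ grows the bottleneck is overloaded for longer, so the parameter axis splits into three consecutive regimes: small $q_p$, where neither A nor B meets an active queue; an intermediate band, where exactly one of them does; and large $q_p$, where both do. In the first and third regimes the relevant difference is $\Delta\tau_f$ and $\Delta\tau_q$ respectively; the aggressive-driving hypothesis $v''(k)\le 0$, equivalently $q^{(3)}(k)\le 0$ and $k^{(3)}(q)\ge 0$ on the free-flow branch, is what I would use to control the curvature of the travel-time integral and pin down the sign of each contribution relative to $q_p^-$.

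Concretely, I would first show that at the reference value $q_p^+ > q_p^-$ both $\Delta\tau_q$ and $\Delta\tau_f$ are nonnegative, and then argue monotonicity of each piece in $q_p$ so that the only zero crossing occurs at $q_p^-$. Here Lemma \ref{lemma:shockpriority} is the enabling tool: the latest-characteristic rule makes the cumulative-flow function $N(l,\cdot)$, and hence the travel-time expressions entering $\Delta\tau_f$ and $\Delta\tau_q$, explicit enough that the curvature comparison implied by $v''\le 0$ can actually be carried out. Gluing the three regimes then yields a single sign change, because at each threshold where a vehicle switches from free flow to queued the two adjacent expressions agree by continuity, so no spurious crossing is introduced.

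I expect the genuine obstacle to be the mixed regime together with the gluing across thresholds. When one vehicle is queued and the other is in free flow, $\Delta\tau$ is a difference of two structurally different travel-time formulas, and a priori nothing prevents it from oscillating; this is exactly the possibility of the subloops that \cite{Fosgerau2010} could not rule out. The aggressive-driving assumption is indispensable precisely here: it forces the free-flow and queued contributions to reinforce rather than compete, so that $\Delta\tau$ remains monotone through the transition and the lone crossing at $q_p^-$ survives. Once the single sign change is secured, Lemma \ref{lemma:characterization} delivers $\operatorname{Var}[\tau_2]\ge\operatorname{Var}[\tau_1]$ and, by the unimodality argument above, the counterclockwise loop.
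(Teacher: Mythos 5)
Your skeleton is the same as the paper's: Lemma \ref{lemma:characterization} reduces the theorem to a single sign change of $\Delta\tau(q_p)$, and the case split by queueing status (both queued, neither queued, exactly one queued) is exactly the paper's decomposition into Lemmas \ref{lemma:c_final}, \ref{lemma:f_final} and \ref{lemma:fc_final}. But the proposal stops precisely where the paper's actual work begins. The entire difficulty of the theorem is the free-flow regime: showing $\Delta\tau_f(q_p^+)\geq 0$ occupies Subsections \ref{subsec:shocks}--\ref{subsec:lb_tf} and most of the appendix, and none of that content is present in your sketch. You assert that Lemma \ref{lemma:shockpriority} makes $N(l,\cdot)$ ``explicit enough that the curvature comparison implied by $v''\le 0$ can actually be carried out,'' but this is not a step one can take on faith. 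The paper needs, among other things: convexity and monotonicity of $dN/dx$ in $q$ (Lemma \ref{lemma:incrconv}, which is where $q^{(3)}(k)\le 0$ actually enters), the elimination of shock waves (Lemma \ref{lemma:noshocks} via Lemma \ref{lemma:avgflow}), a reduction to characteristics departing after $t_{pb}$ via an auxiliary non-concave fundamental diagram $q^*$ (Lemmas \ref{lemma:weaklb}--\ref{lemma:lbdeptime}), a normalization of the boundary flow by subtracting $q_b$ together with a mirrored-characteristic symmetry argument (Lemmas \ref{lemma:symmetry_mirror_kw}, \ref{lemma:time_diff_to_char}, \ref{lemma:pipeline}), and a de-normalization step controlled by the auxiliary function $\Lambda$ (Lemmas \ref{lemma:lambda_incr_pre}--\ref{lemma:denorm_pipeline}). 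Your appeal to ``monotonicity of each piece in $q_p$'' is exactly the statement that must be proved, and there is no elementary route to it visible in your outline.

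There is also a concrete conceptual misplacement: you claim the aggressive-driving hypothesis is ``indispensable precisely'' in the mixed regime and in gluing across thresholds, to stop free-flow and queued contributions from competing. The paper says the opposite: queueing delays produce a counterclockwise contribution unconditionally (Lemma \ref{lemma:c_final} uses only that the bottleneck discharges at constant $q_{bn}$ while the upstream cumulative-flow gap grows), and concavity of $v(k)$ is needed only for the purely free-flow delays. The mixed regime (Lemma \ref{lemma:fc_final}) is then a short corollary: decompose $\tau_B = \tau_{B,f} + \tau_{B,c}$, apply the free-flow pipeline result to the first term and observe $\tau_{B,c}^+ \geq \tau_{B,c}^-$ for the second, with Lemma \ref{lemma:queue_order} guaranteeing that the queued vehicle is B. So the aggressive-driving assumption enters the mixed case only through the free-flow component, not through any interaction between the two delay types. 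As written, your proposal would lead you to spend effort on the wrong regime while leaving the genuinely hard free-flow estimate unsupported.
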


We note that demonstrating the existence of a single counterclockwise loop requires the aggressive driving property only in the case of free-flow delays. Stochastic queueing delays always produce a counterclockwise loop, regardless of whether the aggressive driving property is satisfied.

To improve readability, most of the proofs of the lemmas presented in the rest of this section have been provided in the appendix. 
The main symbols and variables used throughout the text are summarized in Table \ref{tab:notation}.

\begin{table}[h!]
\centering
\begingroup
\renewcommand{\arraystretch}{1.2}
\hspace{-1 cm}
\begin{tabular}{@{}ll@{}}
\toprule
\textbf{Symbol} & \textbf{Meaning (units)}\\
\midrule
$x \in [0,l]$ & Longitudinal position along the corridor (length)\\
$t \ge 0$ & Time (time)\\
$l$ & Corridor length (length)\\[2pt]

$k(x,t)$ & Traffic density (veh/length)\\
$q(x,t)$ & Traffic flow (veh/time)\\
$v(x,t)$ & Space–mean speed; $v = q/k$ where defined (length/time)\\
$Q(k)$ & Fundamental diagram (flow–density relation), concave, $Q(0)=Q(k_j)=0$\\
$k_j$ & Jam density (veh/length)\\
$k_c$ & Critical density (arg\,max of $Q$)\\
$v_f$ & Free-flow speed (length/time)\\[2pt]

$N(x,t)$ & Cumulative count/Moskowitz function (veh)\\
$\tau(t)$ & Travel time of a vehicle entering at time $t$; $\tau(t)=\inf\{T\!\ge\!0:\, N(l,t+T)>N(0,t)\}$ (time)\\
$\Delta \tau$ & Difference in travel times between scenarios/vehicles (time)\\
$\Delta \tau_q$ & Travel time difference due to queuing delays (time)\\
$\Delta \tau_f$ & Travel time difference due to free-flow delays (time)\\
$\Delta \tau_{f,q}$ & Travel time difference when one vehicle is in free flow and the other in a queue (time)\\[2pt]

$q(0,t)$ & Upstream boundary inflow (veh/time)\\
$q(l,t)$ & Downstream boundary discharge (veh/time)\\
$\bar q(\cdot)$ & Average flow (veh/time) (precise definition given where used)\\[2pt]

$q_b$ & Base (pre-peak) inflow level (veh/time)\\
$q_p$ & Peak boundary flow (veh/time)\\
$q_e$ & Post-peak/end inflow level (veh/time)\\
$a$ & Morning ramp-up rate of inflow (veh/time$^2$)\\
$b$ & Evening ramp-down rate of inflow (veh/time$^2$)\\
$t_{pb}$ & Start time of the peak inflow (time)\\
$t_{pe}$ & End time of the peak inflow (time)\\
$t_p$ & Peak time (time)\\[2pt]

$A,B$ & Labels for two probe vehicles considered in comparisons\\
$\tau_A,\ \tau_B$ & Travel times of vehicles $A$ and $B$ (time)\\
$t_1,\ t_2$ & Entry times of vehicles $A$ and $B$ (time)\\[2pt]
$t_{c,A},\ t_{c,B}$ & Departure times of the characteristic curves whose arrivals coincide with vehicles $A$ and $B$ (time)\\[2pt]

$s_1,\ s_2$ & Characteristic curves used in the analysis (definition given in text)\\[2pt]

$q^*(\cdot),\ v^*(\cdot)$ & Counterfactual/auxiliary flow– and speed–relations used for comparison\\
$q_s(0,t)$ & Simplified upstream boundary flow (piecewise definition in text)\\[2pt]

$N_s,\ N_e$ & Cumulative counts used to define average flow windows (veh)\\
$\phi$ & Distribution of the random peak $q_p \sim \phi$\\[4pt]
\multicolumn{2}{@{}l@{}}{\textit{Conventions and qualifiers}}\\
$\bar{\square}$ & Overbar denotes averaging (context-specific)\\
$(\cdot)^{-},\ (\cdot)^{+}$ & Left/right limits or pre/post-peak quantities (defined in context)\\
$(\cdot)^*$ & Counterfactual/auxiliary quantity (defined in context)\\
\bottomrule
\end{tabular}
\endgroup
\caption{Notation.}
\label{tab:notation}
\end{table}

\subsection{The Congested Regime}
First, we consider the case where both vehicles encounter an active queue (congested regime).
\begin{lemma}
	Assume that an active queue exists at the arrival of both vehicles for a peak demand \( q_p^- \), and that \( \Delta\tau_q(q_p^-) = 0 \) holds. Then,  \( \Delta\tau_q(q_p^+) > 0 \).
	\label{lemma:c_final}
\end{lemma}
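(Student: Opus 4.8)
The plan is to reduce the claim to a monotonicity property of the cumulative inflow, using that in the congested regime the exit times of both probe vehicles are set by the bottleneck discharging at its capacity $q_{bn}$. Since an active queue is present at the arrival of both vehicles, the downstream cumulative count $N(l,\cdot)$ grows at the constant rate $q_{bn}$ over a time window containing both exit instants. Writing $n_i = N(0,t_i)$ for the cumulative labels of vehicles $A$ and $B$, each leaves the corridor at the unique time $t_i+\tau(t_i)$ at which $N(l,t_i+\tau(t_i))=n_i$; subtracting the two identities and invoking the constant slope $q_{bn}$ of $N(l,\cdot)$ yields the closed form
\[
\Delta\tau_q(q_p)\;=\;\frac{n_2-n_1}{q_{bn}}-(t_2-t_1)\;=\;\frac{1}{q_{bn}}\int_{t_1}^{t_2} q(0,s;q_p)\,ds\;-\;(t_2-t_1),
\]
where $n_2-n_1=\int_{t_1}^{t_2} q(0,s;q_p)\,ds$ is the number of vehicles entering in $[t_1,t_2]$. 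Note that this expression is independent of the shape of the fundamental diagram, matching the remark that queueing delays produce the loop regardless of the aggressive-driving assumption.

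With this formula the hypothesis $\Delta\tau_q(q_p^-)=0$ states exactly that the average inflow over $[t_1,t_2]$ equals capacity, i.e.\ $\int_{t_1}^{t_2} q(0,s;q_p^-)\,ds=q_{bn}(t_2-t_1)$. It therefore suffices to show that $G(q_p)\coloneqq\int_{t_1}^{t_2} q(0,s;q_p)\,ds$ is strictly increasing in $q_p$, since then $\Delta\tau_q(q_p^+)=\tfrac{1}{q_{bn}}G(q_p^+)-(t_2-t_1)>\tfrac{1}{q_{bn}}G(q_p^-)-(t_2-t_1)=0$. I would prove the monotonicity pointwise: for each fixed $s$ the map $q_p\mapsto q(0,s;q_p)$ is non-decreasing. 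This follows branch by branch, since on the ramp-up branch the value $q_b+as$ is independent of $q_p$, whereas on the plateau (value $q_p$) and on the ramp-down branch the $q_p$-derivative equals $1$; the only delicate point is the behaviour at the moving breakpoints, where $s$ migrates from one branch to another as $q_p$ grows, but the two adjacent branch expressions agree at each transition, so $q_p\mapsto q(0,s;q_p)$ is continuous and non-decreasing. Because $t_1\le t_{pe}\le t_2$, the interval $[t_1,t_2]$ meets the plateau/ramp-down region in a set of positive measure on which the $q_p$-derivative is $1$, giving strict monotonicity of $G$.

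The main obstacle I anticipate is not this computation but licensing the capacity-discharge formula for $\Delta\tau_q$ simultaneously at $q_p^-$ and at the larger value $q_p^+$; that is, showing the active queue assumed at $q_p^-$ persists at $q_p^+$ so that both vehicles still exit while the bottleneck discharges at $q_{bn}$. I would close this by a monotone-comparison argument: since $q(0,s;q_p)$ is pointwise non-decreasing in $q_p$, the cumulative arrival curve $N(0,\cdot;q_p)$ increases with $q_p$, so the queue forms no later and clears no earlier, and hence the window on which $N(l,\cdot)$ has slope $q_{bn}$ only enlarges as $q_p$ increases and still contains the exit instants of $A$ and $B$ at $q_p^+$. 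With the formula licensed at both peak values, the strict monotonicity of $G$ delivers $\Delta\tau_q(q_p^+)>\Delta\tau_q(q_p^-)=0$. Finally I would record that the argument nowhere uses $v''(k)\le 0$, consistent with the claim that the congested regime alone generates the counterclockwise contribution.
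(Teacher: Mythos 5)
Your proposal is correct and takes essentially the same route as the paper's (much terser) proof: both arguments rest on the downstream count $N(l,\cdot)$ growing at the constant rate $q_{bn}$ between the two exit instants while the cumulative inflow gap $N(0,t_2)-N(0,t_1)$ increases with $q_p$, which forces the arrival gap, and hence $\Delta\tau_q$, to increase. Your additional details --- the closed form $\Delta\tau_q(q_p)=\frac{1}{q_{bn}}\int_{t_1}^{t_2}q(0,s;q_p)\,ds-(t_2-t_1)$, the branch-by-branch strict monotonicity using $t_1\le t_{pe}\le t_2$, and the queue-persistence check at $q_p^+$ --- simply make rigorous what the paper's sketch leaves implicit.
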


\begin{proof}[Proof of Lemma \ref{lemma:c_final}]
Under the given assumptions, the flow between the arrival times of A and B at the downstream boundary is \( q_{bn} \), while the difference in cumulative flow between the two vehicles is increasing in \( q_{bn} \). Since the local flow at the downstream boundary between the arrival times remains constant, but the cumulative flow increases, the time gap between the arrivals must increase with \( q_p \).
\end{proof}

\subsection{The Uncongested Regime}
Next, we consider the case where neither vehicle encounters an active queue (uncongested regime). Subsection~\ref{subsec:shocks} establishes that if any two characteristic pairs satisfy the condition 
\[
\Delta \tau_f(q_p^+) > 0,
\] 
then this inequality holds universally, eliminating the need to consider shock waves in uncongested conditions. 
Subsection~\ref{subsec:lb_char} proves that 
\[
\Delta \tau_f(q_p^+) > 0
\] 
when the characteristic corresponding to vehicle~B departs before the vehicle itself, i.e.,
\[
t_{c,B} < t_{pb},
\]
indicating a slower characteristic speed. 
Finally, Subsection~\ref{subsec:lb_tf} extends this result to all remaining cases.

\subsubsection{The Irrelevance of Shock Waves}
\label{subsec:shocks}
Consider two characteristics, \( c_1 \) and \( c_2 \), represented as straight lines in spacetime. We aim to demonstrate that if \( \Delta \tau_f(q_p^+) > 0 \) holds under the assumption of a constant flow \( q \) along \( c_1 \) and \( c_2 \), this condition remains valid even when the flow \( q \) varies due to the occurrence of a shock wave along either characteristic. To prove the statement, we analyze the behavior of the cumulative flow along a characteristic’s trajectory. Assuming the physical correctness of a given characteristic,
and writing $N := N(x,t)$ for brevity, the change in cumulative flow can be described by the following equation:

\[
\frac{dN}{dx} = \frac{dt}{dx} \cdot \frac{\partial N}{\partial t} + \frac{\partial N}{\partial x} = \frac{1}{q'(k)} \cdot q(k) - k = q \cdot k'(q) - k(q).
\]

The following lemma characterizes the monotonic behavior of \( \frac{dN}{dx} \) with respect to \( q \) along the trajectory \( r: t \to x \) of a kinematic wave with transported flow \( q_t \). Since the flow remains constant along \( r \), we have

\[
\frac{dN}{dx} (x,t) = \frac{dN}{dx} (q_t) \quad \text{for all } (t,x) \in r.
\]

\begin{lemma}
\label{lemma:incrconv}
    $\frac{dN}{dx}$ is monotonically increasing and convex in $q$.
\end{lemma}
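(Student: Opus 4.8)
The plan is to treat $\frac{dN}{dx}$ as a function of the transported flow $q$ alone along the uncongested branch and to verify both properties by computing its first and second derivatives directly. Writing $g(q) := q\,k'(q) - k(q)$, differentiation once makes the two copies of $k'(q)$ cancel, so that $g'(q) = k'(q) + q\,k''(q) - k'(q) = q\,k''(q)$, and differentiating a second time gives $g''(q) = k''(q) + q\,k'''(q)$. It therefore suffices to control the signs of $k''(q)$ and $k'''(q)$ on the uncongested branch, where $q \ge 0$.

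For monotonicity I would first establish $k''(q) \ge 0$. On the uncongested branch the fundamental diagram $q = Q(k)$ is strictly increasing ($Q'(k) > 0$) and, being concave, satisfies $Q''(k) \le 0$. Inverting, $k'(q) = 1/Q'(k) > 0$ and $k''(q) = -Q''(k)/Q'(k)^{3} \ge 0$. Since $q \ge 0$, this yields $g'(q) = q\,k''(q) \ge 0$, so $g$ is increasing. I emphasize that this step uses only concavity of $Q$, not the aggressive-driving hypothesis.

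For convexity I would bound $g''(q) = k''(q) + q\,k'''(q)$ term by term. The first summand is non-negative by the computation above, and the second is non-negative because $q \ge 0$ together with the aggressive-driving assumption in force throughout this section ($v''(k) \le 0$ for $k \le k_{\text{crit}}$), which gives $k^{(3)}(q) \ge 0$ on the uncongested branch. Hence $g''(q) \ge 0$ and $g$ is convex.

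The only genuine obstacle is the bookkeeping of the inverse-function derivatives and the correct invocation of the driving hypothesis: monotonicity follows from concavity alone, whereas convexity truly relies on $k^{(3)}(q) \ge 0$, i.e.\ on aggressive driving. This is consistent with the remark following Theorem~\ref{theorem:main} that the aggressive-driving property is needed precisely in the free-flow analysis. A minor point to confirm is that the inversion $k = k(q)$ is valid and smooth, which holds because $Q'(k) > 0$ strictly on the open uncongested branch.
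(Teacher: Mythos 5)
Your proof is correct and takes essentially the same route as the paper: both compute $\frac{d}{dq}\left[q\,k'(q)-k(q)\right]=q\,k''(q)\ge 0$ from concavity of $Q$ (via $k''(q)=-Q''(k)/Q'(k)^3$), and both derive convexity from the aggressive-driving hypothesis, which the section preamble states as $q^{(3)}(k)\le 0$ and $k^{(3)}(q)\ge 0$. Your convexity step, $g''(q)=k''(q)+q\,k'''(q)\ge 0$ invoking $k^{(3)}(q)\ge 0$ directly, is a cleaner execution of the same idea than the paper's, which re-expands the third derivative in terms of $q'(k)$, $q''(k)$, $q'''(k)$ via the quotient rule.
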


It is also expedient for our analysis to consider the following relationship between the average flow within an interval bounded by two characteristics at the upstream and downstream ends.

\begin{lemma}
\label{lemma:avgflow}
Let $t_1^d \leq t_2^d$ be arbitrary time points. Define $t_1^a$ as the time when the characteristic line originating from the upstream end at $t_1^d$ reaches the downstream end; similarly, define $t_2^a$ for $t_2^d$. Then:

\begin{enumerate}
    \item If no shock waves occur in the interval $[t_1^d, t_2^a]$ and that $t_2^d \leq t_{pe}$ holds, the following inequality holds:
    \[
        \frac{N(1,t_2^a) - N(1,t_1^a)}{t_2^a - t_1^a} \geq \frac{N(0,t_2^d) - N(0,t_1^d)}{t_2^d - t_1^d}
    \]
    
    \item If $t_1^d \geq t_{pe}$, the following inequality holds:
    \[
        \frac{N(1,t_2^a) - N(1,t_1^a)}{t_2^a - t_1^a} \leq \frac{N(0,t_2^d) - N(0,t_1^d)}{t_2^d - t_1^d}
    \]
\end{enumerate}
\end{lemma}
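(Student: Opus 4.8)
The plan is to reduce both inequalities to a single comparison between two different averages of the boundary inflow, and then to close that comparison using the aggressive-driving hypothesis. I would work first under the no-shock hypothesis of part~1, where the flow is transported unchanged along each characteristic, so the downstream flow at an arrival time $t^a$ equals the upstream flow at the corresponding departure time $t^d$, and the characteristic traversal time is $T(q)=1/q'(k(q))=k'(q)$, which is increasing in $q$ on the uncongested branch (since $q(k)$ concave increasing makes $k(q)$ convex increasing). By Lemma~\ref{lemma:incrconv}, $\tfrac{dN}{dx}=g(q):=q\,k'(q)-k(q)$ is constant along a characteristic, so integrating across the unit corridor gives $N(1,t_i^a)-N(0,t_i^d)=g(q_i)$ for $i=1,2$, with $q_i=q(0,t_i^d)$. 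Writing $\Delta N_{\mathrm{up}}=N(0,t_2^d)-N(0,t_1^d)$ and $\Delta t_{\mathrm{up}}=t_2^d-t_1^d$, this yields $\Delta N_{\mathrm{down}}=\Delta N_{\mathrm{up}}+[g(q_2)-g(q_1)]$ and $\Delta t_{\mathrm{down}}=\Delta t_{\mathrm{up}}+[T(q_2)-T(q_1)]$, together with the identity $g'(q)=q\,T'(q)$.

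\textbf{Reduction via a mediant identity.} Let $\bar q_{\mathrm{up}}=\Delta N_{\mathrm{up}}/\Delta t_{\mathrm{up}}$ be the time-averaged inflow, and let
\[
\hat q:=\frac{g(q_2)-g(q_1)}{T(q_2)-T(q_1)}=\frac{\int_{q_1}^{q_2} q\,T'(q)\,dq}{\int_{q_1}^{q_2} T'(q)\,dq},
\]
which, since $T'\ge 0$, is a weighted average of $q$ between $q_1$ and $q_2$. A direct mediant computation (using $\Delta t_{\mathrm{down}}>0$, read as the physical arrival window) gives $\operatorname{sign}(\bar q_{\mathrm{down}}-\bar q_{\mathrm{up}})=\operatorname{sign}(T(q_2)-T(q_1))\cdot\operatorname{sign}(\hat q-\bar q_{\mathrm{up}})$. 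Because $T$ is increasing, $T(q_2)-T(q_1)$ is positive in the loading phase of part~1 ($q_2\ge q_1$) and negative in the unloading phase of part~2 ($q_2\le q_1$); hence \emph{both} statements follow at once from the single inequality $\hat q\ge\bar q_{\mathrm{up}}$, the sign flip of $T(q_2)-T(q_1)$ automatically reversing the direction of the conclusion.

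\textbf{The key inequality.} It remains to prove $\hat q\ge\bar q_{\mathrm{up}}$. On a linear inflow ramp (slope $a$ in part~1, slope $-b$ in part~2) the boundary flow is affine in time, so its time-average equals the midpoint $\bar q_{\mathrm{up}}=\tfrac12(q_1+q_2)$, which is exactly the unweighted mean of $q$ over $[q_1,q_2]$. The aggressive-driving assumption $v''\le 0$ gives $k^{(3)}(q)\ge 0$, i.e. $T'(q)=k''(q)$ is non-decreasing; since $q$ is itself increasing, Chebyshev's integral (correlation) inequality shows that the $T'$-weighted mean $\hat q$ is at least the unweighted mean $\tfrac12(q_1+q_2)=\bar q_{\mathrm{up}}$. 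This settles part~1, and part~2 in the sub-case where no shock intervenes.

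\textbf{Main obstacle: shocks in part~2.} The hard part is part~2 \emph{without} a no-shock hypothesis: in the unloading phase later characteristics are faster and overtake earlier ones, shocks form, and $t_i^a$ must be interpreted as the shock-corrected arrival time, so the transport identities above no longer hold verbatim. I would transfer the argument through the discretization of Lemma~\ref{lemma:shockpriority}, approximating the decreasing branch by finitely many constant-flow characteristics, and use the convexity of $g(q)=\tfrac{dN}{dx}$ in $q$ from Lemma~\ref{lemma:incrconv} to show that replacing two merging characteristics by their shock can only lower the cumulative flow delivered across the window, so the downstream average remains below $\bar q_{\mathrm{up}}$; the continuous statement then follows as the mesh tends to zero. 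I would also flag one delicate point in part~1: the reduction relied on $\bar q_{\mathrm{up}}=\tfrac12(q_1+q_2)$, which needs the inflow to be affine (in particular monotone) on $[t_1^d,t_2^d]$; the no-shock hypothesis secures this by confining the interval to the pre-queue loading ramp and excluding a congested peak plateau, where a constant segment at $q_p$ would otherwise push $\bar q_{\mathrm{up}}$ above $\hat q$.
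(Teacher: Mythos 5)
Your proposal is correct on the same effective hypotheses the paper uses, but it takes a genuinely different route. The paper's proof is pointwise: for each interpolation parameter it compares the characteristic launched at the interpolated departure time $t_0 = r\,t_1^d + (1-r)\,t_2^d$ with the interpolated arrival time $r\,t_1^a + (1-r)\,t_2^a$, uses convexity of $t \mapsto k'(q(0,t))$ (convex $k'$ composed with the affine ramp, i.e.\ the same aggressive-driving input $k^{(3)}(q)\ge 0$ you invoke) to show that this characteristic arrives no later than the interpolated arrival time, deduces $q(1,i(u)) \ge q(0,h(u))$ pointwise, and integrates over $u$; part 2 is dispatched with ``analogous arguments with reversed inequalities.'' You instead work only with endpoint increments, $\Delta N_{\mathrm{down}} = \Delta N_{\mathrm{up}} + g(q_2)-g(q_1)$ and $\Delta t_{\mathrm{down}} = \Delta t_{\mathrm{up}} + T(q_2)-T(q_1)$ with $g = \frac{dN}{dx}$ from Lemma~\ref{lemma:incrconv} and $T=k'$, reduce both parts to the single inequality $\hat q \ge \bar q_{\mathrm{up}}$ via the mediant identity (with the sign of $T(q_2)-T(q_1)$ flipping the conclusion automatically), and close with Chebyshev's correlation inequality using $T'=k''$ non-decreasing. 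Your version buys a unified treatment of both parts and isolates exactly the two ingredients that matter (affine monotone inflow; monotone $k''$); the paper's pointwise version buys slightly more generality in the boundary profile, since it needs only convexity of $k'\circ q(0,\cdot)$ rather than literal affineness, at the cost of proving the two parts separately.

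Two caveats, one of which is a genuine (if small) inaccuracy in your write-up. First, your closing flag about affineness is apt but misattributed: the no-shock hypothesis does \emph{not} confine the window to the loading ramp. A ramp followed by a sub-capacity plateau is entirely shock-free (on the uncongested branch later characteristics are slower, so ramp-up characteristics diverge), yet there statement~1 can fail exactly as you anticipate: take $k(q)=q+q^2$, $q_1=0.2$ at $t_1^d=0$, an almost instantaneous rise to $q_p=0.8$, and a plateau until $t_2^d=1$; then $\bar q_{\mathrm{up}} \to 0.8$ while $\Delta N_{\mathrm{down}} = 0.8 + g(0.8)-g(0.2) = 1.4$ and $\Delta t_{\mathrm{down}} = 2.2$, so $\bar q_{\mathrm{down}} \approx 0.64 < \bar q_{\mathrm{up}}$. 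So restricting to the affine ramp is an unstated hypothesis of the lemma, not a consequence of the stated ones; the paper's own proof carries the same restriction silently, since $t\mapsto k'(q(0,t))$ loses convexity at the ramp--plateau kink (the plateau is only removed later, via Lemma~\ref{lemma:peakplateau}). Note also that part~2 tolerates the terminal plateau at $q_e$, since there $\bar q_{\mathrm{up}} \le \tfrac12(q_1+q_2) \le \hat q$, consistent with your sign analysis. Second, your treatment of shocks in part~2 is only a sketch (the discretization via Lemma~\ref{lemma:shockpriority} and convexity of $g$ is proposed but not executed, and the shock-corrected meaning of $t_i^a$ is left open); in fairness, the paper is equally silent on this point, so here you go beyond the paper in identifying the issue without fully repairing it.
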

In the next step, we provide a condition for the relationship between the initial speeds of $A$ and $B$.

\begin{lemma}
\label{lemma:initspeed}
	\( q(0, t_1) \ge q(0, t_2) \), implying that the initial speed of vehicle A is lower than that of vehicle B.
\end{lemma}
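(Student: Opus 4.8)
The plan is to compare each vehicle's entry speed with the common average speed that equal travel times force upon them. I would first dispose of the degenerate configurations in which $A$ enters on the plateau of the trapezoid, where $q(0,t_1)=q_p$: since $t_2\ge t_{pe}$ yields $q(0,t_2)\le q_p$, the inequality $q(0,t_1)\ge q(0,t_2)$ is immediate there. By unimodality of the travel-time profile, two instants with $\tau(t_1)=\tau(t_2)$ and $t_1<t_2$ must lie on opposite monotone branches, so in the remaining case $A$ enters strictly on the increasing branch and $B$ strictly on the decreasing branch of $q(0,\cdot)$, and it is this case that carries the content.

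Next I would establish that each vehicle's speed is monotone along its trajectory. Since $v'(k)\le 0$, the characteristic speed $q'(k)=v(k)+k\,v'(k)$ is at most $v(k)$, so every vehicle overtakes the characteristics it rides and, as it advances, samples characteristics emitted from ever earlier boundary times. By Subsection~\ref{subsec:shocks} it suffices to argue in the shock-free picture, where the density field is the continuous fan generated by the boundary data. For $A$ on the increasing branch these earlier characteristics carry strictly lower densities, so $A$ meets monotonically decreasing density and hence monotonically increasing speed; for $B$ on the decreasing branch they carry strictly higher densities, so $B$ meets monotonically increasing density and monotonically decreasing speed. Concavity of $q$ is what guarantees that the increasing branch produces a genuine rarefaction fan with no crossings, while the compression on the decreasing branch is exactly the configuration that Subsection~\ref{subsec:shocks} permits us to replace by its shock-free surrogate.

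Finally I would convert this into the speed comparison. Writing $\tau^{*}$ for the common travel time and using the normalised length $l=1$, the identity $\tau=\int_0^1 dx/v(x)$ gives the average speed $\bar v=1/\tau^{*}$, which is the same for both vehicles. A strictly increasing speed profile forces the entry speed strictly below this harmonic-type average, $v(k(0,t_1))<\bar v$, whereas a strictly decreasing profile forces $v(k(0,t_2))>\bar v$; hence $v(k(0,t_1))<v(k(0,t_2))$. Because $v$ is decreasing and $q$ increasing on the uncongested branch, this is equivalent to $k(0,t_1)>k(0,t_2)$ and therefore to $q(0,t_1)>q(0,t_2)$, which is the asserted inequality, and it simultaneously yields the claim that $A$'s initial speed is the lower one.

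The main obstacle is the middle step: rigorously identifying, at each point of a vehicle's path, which boundary characteristic passes through it, and proving that this assignment is monotone in the emission time. The delicacy is concentrated on the decreasing branch, where genuine shocks form, so the argument must genuinely lean on the reduction of Subsection~\ref{subsec:shocks} to work with the shock-free density field; one must also confirm the strict placement of $A$ and $B$ on the two monotone branches (the plateau and boundary configurations being absorbed into the degenerate case above) before the monotonicity of the speed profiles can be invoked.
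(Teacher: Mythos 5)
Your strategy---monotone speed profiles along each trajectory plus the harmonic-mean conversion $\tau=\int_0^1 dx/v(x)$---is appealing, and the final conversion step is internally sound, but the proposal has two genuine gaps relative to what the lemma must deliver. First, your argument is confined to free flow. The lemma must also hold when the downstream bottleneck queue is active at one or both arrivals, and the paper's proof is built around exactly that: it argues by contradiction from $q(0,t_1)<q(0,t_2)$, introduces the characteristics through the intersections of the two vehicle trajectories with the \emph{queue tail}, and compares bottleneck delays ($\tau_{bn}(t_1)\le\tau_{bn}(t_2)$) and average upstream/downstream flows with the downstream flow pegged at $q_{bn}$---cumulative-count bookkeeping that is insensitive to both shocks and congestion, concluding $\tau(t_1)<\tau(t_2)$ in every case and hence contradicting equality of the (expected) travel times. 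In your picture, a vehicle approaching an active queue decelerates, so A's speed profile is no longer increasing and the key inequality $v(k(0,t_1))<\bar v$ can fail; nothing in the proposal excludes this regime, and the lemma is invoked later precisely in congested configurations (e.g.\ in the proof of Lemma \ref{lemma:queue_order}).

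Second, even absent queues, B's speed profile need not be monotone. As B advances it crosses characteristics emitted at successively earlier times, from $t_2$ back to $t_{c,B}$; your monotonicity claim requires all of these emission times to lie on the decreasing branch. If $t_{c,B}<t_{pb}$---the configuration the paper must treat separately in Subsection \ref{subsec:lb_char}---vehicle B traverses the peak fan, its sampled density first rises and then falls, and its speed first decreases and then increases, so the entry speed is no longer pinned above the average $1/\tau^{*}$. Your appeal to Subsection \ref{subsec:shocks} does not repair this: Lemma \ref{lemma:noshocks} shows that a travel-time inequality established along shock-free characteristics survives the formation of shocks; it does not supply a shock-free density field along which your \emph{pointwise} speed-monotonicity claims are valid. (The paper does prove the monotonicity you want, but only later, in Lemma \ref{lemma:accdec}, under additional hypotheses including a lower bound on $t_{c,B}^{+}$; using it here would be circular, since Lemma \ref{lemma:initspeed} feeds into that chain.) A smaller mismatch of hypotheses: the lemma's premise is equality of \emph{expected} travel times over the random peak $q_p$, whereas you assume pathwise equality $\tau(t_1)=\tau(t_2)$, which holds only at the particular realization $q_p^-$; the paper's contradiction argument against $\mathbb{E}[\Delta\tau]=0$ avoids this restriction.
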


Let \( t_{c,A} \) and \( t_{c,B} \) denote the times at which kinematic waves originate from the upstream end and arrive at the downstream end simultaneously with vehicles A and B, respectively. For peak flows of \( q_p^- \) and \( q_p^+ \), the times \( t_{c,A}^- \), \( t_{c,B}^- \), \( t_{c,A}^+ \), and \( t_{c,B}^+ \) are defined analogously. Thus:

\begin{align*}
N(0,t_{c,A}) + \frac{dN}{dx}(q_{c,A}) = N(0,t_1), \\
N(0,t_{c,B}) + \frac{dN}{dx}(q_{c,B}) = N(0,t_1),
\end{align*}

The flow transported by these waves is denoted as \( q_{c,A} \) and \( q_{c,B} \). Therefore:

\begin{align*}
q(0,t_{c,A}) = q_{c,A}, \quad q(0,t_{c,B}) = q_{c,B}
\end{align*}

Again, the definitions of \( q_{c,A}^- \), \( q_{c,B}^- \), \( q_{c,A}^+ \), and \( q_{c,B}^+ \) follow analogously. Relying on the results from lemmas \ref{lemma:avgflow} and \ref{lemma:initspeed}, the statement introduced at the beginning of the subsection can then be proven as follows. An illustration of the argument used in the proof of lemma \ref{lemma:noshocks} can be found in Figure \ref{figure:hatchedareas}.

\begin{lemma}
\label{lemma:noshocks}
Assume
\[
t_{c,A} + k'(q_{c,A}) - t_1 \leq t_{c,B} + k'(q_{c,B}) - t_2.
\]
Then \( \tau_1 \leq \tau_2 \) follows.

\end{lemma}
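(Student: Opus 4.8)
The plan is to read the two quantities in the hypothesis as the arrival-minus-departure times of the straight, shock-free characteristics singled out by the defining relations $N(0,t_{c,\cdot})+\frac{dN}{dx}(q_{c,\cdot})=N(0,t_\cdot)$ together with $q(0,t_{c,\cdot})=q_{c,\cdot}$. Because the corridor has unit length and the reciprocal wave speed equals $k'(q)$, a straight characteristic leaving $(0,t_c)$ with flow $q_c$ reaches the downstream end at time $t_c+k'(q_c)$ carrying cumulative count $N(0,t_c)+\frac{dN}{dx}(q_c)$. Thus $t_{c,A}+k'(q_{c,A})-t_1$ and $t_{c,B}+k'(q_{c,B})-t_2$ are the travel times A and B would experience if their characteristics propagated undisturbed, and the hypothesis asserts that the first does not exceed the second. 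I would establish the lemma by proving that this undisturbed value equals $\tau_1$ for the pre-peak vehicle A and lower-bounds $\tau_2$ for the post-peak vehicle B, and then chaining the two facts through the hypothesis.

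The easier half concerns B. The straight characteristic attached to $(t_{c,B},q_{c,B})$ is, by construction, one admissible characteristic through the downstream point $(1,\,t_{c,B}+k'(q_{c,B}))$, and it assigns there the value $N(0,t_2)$. By Lemma \ref{lemma:newell} the physical cumulative count at any point is the minimum over all characteristics reaching it, so $N\bigl(1,\,t_{c,B}+k'(q_{c,B})\bigr)\le N(0,t_2)$. Since $N(1,\cdot)$ is nondecreasing and B (label $N(0,t_2)$) leaves the corridor at the first instant $N(1,\cdot)$ exceeds $N(0,t_2)$, this forces $t_{a,B}\ge t_{c,B}+k'(q_{c,B})$, i.e. $\tau_2\ge t_{c,B}+k'(q_{c,B})-t_2$. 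This is exactly the content of the cumulative-count (hatched-area) comparison and of the post-peak branch of Lemma \ref{lemma:avgflow}: shocks may withhold arrivals but never advance them.

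The A-side is the crux, because here I need the reverse bound $\tau_1=t_{c,A}+k'(q_{c,A})-t_1$, which demands that A's characteristic really is the physically selected one at its downstream crossing, i.e. that no faster, later-emanating characteristic overtakes it and lowers the count. The key point is that A departs at $t_1\le t_{pe}$, so its characteristic carries the count $N(0,t_1)$ at $x=1$, whereas every characteristic emanating after the peak carries a strictly larger count there: such a characteristic originates at some $s>t_{pe}\ge t_1$, so its upstream count $N(0,s)$ already exceeds $N(0,t_1)$, and since $\frac{dN}{dx}$ is nondecreasing and vanishes at $q=0$ (Lemma \ref{lemma:incrconv}), hence nonnegative, its downstream count $N(0,s)+\frac{dN}{dx}(q(0,s))$ exceeds $N(0,t_1)$ as well. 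The remaining pre-peak characteristics, being slower, reach $x=1$ strictly earlier and so never occupy the same downstream point at the same instant (the map $s\mapsto s+k'(q(0,s))$ is strictly increasing on the rising branch). Consequently A's characteristic uniquely attains the minimal count $N(0,t_1)$ at $(1,\,t_{c,A}+k'(q_{c,A}))$, and by Lemma \ref{lemma:newell} it is physical, yielding $t_{a,A}=t_{c,A}+k'(q_{c,A})$ and the exact identity $\tau_1=t_{c,A}+k'(q_{c,A})-t_1$. The flow ordering of Lemma \ref{lemma:initspeed} and the no-shock branch of Lemma \ref{lemma:avgflow} supply the quantitative backbone certifying this count ordering.

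Combining the two halves with the hypothesis gives $\tau_1=t_{c,A}+k'(q_{c,A})-t_1\le t_{c,B}+k'(q_{c,B})-t_2\le\tau_2$, which is the claim. The main obstacle I anticipate is precisely the A-side: making watertight the assertion that A's characteristic is never superseded before the downstream end, even though the pre-peak and post-peak families of characteristics do intersect in spacetime. The resolution is to argue entirely at the level of cumulative counts through the minimum-count selection of Lemma \ref{lemma:newell}, rather than tracking the geometry of individual shocks, since any overtaking characteristic necessarily carries the larger count and is therefore non-physical exactly where it would matter.
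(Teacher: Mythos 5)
Your proof is correct, and it takes a genuinely different route from the paper's. The paper splits on the earliest shock time $t_s$: for $t_s > t_{c,A}+k'(q_{c,A})$ it argues exactly as you do (A's characteristic arrives undisturbed, and Lemma \ref{lemma:newell} gives $N\bigl(1,\,t_{c,B}+k'(q_{c,B})\bigr) \le N(0,t_2)$, so B arrives no earlier than its straight characteristic); but for $t_s \le t_{c,A}+k'(q_{c,A})$ it abandons the characteristic travel-time comparison altogether and proves $t_{2,a}-t_{1,a} \ge t_2-t_1$ directly, comparing the average upstream flow between the departures with the average downstream flow between the arrivals via Lemma \ref{lemma:avgflow} and the rearrangement argument of Figure \ref{figure:hatchedareas} --- in that branch the hypothesis of the lemma is not even invoked. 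You instead collapse both cases into a single argument by showing that A's characteristic can never be undercut at its downstream crossing: any descending-branch characteristic through $\bigl(1,\,t_{c,A}+k'(q_{c,A})\bigr)$ emanates at some $s > t_{pe} \ge t_1$ and carries count $N(0,s)+\frac{dN}{dx}\bigl(q(0,s)\bigr) \ge N(0,t_1)$ there, since $\frac{dN}{dx}(0)=0$ and $\frac{dN}{dx}$ is nondecreasing (Lemma \ref{lemma:incrconv}); the pointwise minimum principle of Lemma \ref{lemma:newell} then yields $N\bigl(1,\,t_{c,A}+k'(q_{c,A})\bigr) = N(0,t_1)$ and the exact identity $\tau_1 = t_{c,A}+k'(q_{c,A})-t_1$ whether or not a shock forms. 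This is slightly stronger than what the paper establishes in its first case, it leans only on the standing assumption $t_1 \le t_{pe} \le t_2$, and it renders Lemma \ref{lemma:avgflow} (and your pro forma citations of Lemma \ref{lemma:initspeed}) unnecessary here. What the paper's two-case route buys in exchange is that it never needs to rule out the shock reaching A's characteristic: even if A's arrival were delayed past its nominal characteristic time, the average-flow comparison would still deliver the conclusion.

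Two small repairs would make your write-up watertight. First, the clause that the remaining pre-peak characteristics, ``being slower,'' reach $x=1$ ``strictly earlier'' is garbled (earlier rising-branch characteristics are \emph{faster} and earlier); what you need, and state correctly just afterwards, is that $\sigma \mapsto \sigma + k'\bigl(q(0,\sigma)\bigr)$ is strictly increasing --- and you should extend this monotonicity through the plateau $[t_{pb},t_{pe}]$, because a plateau characteristic emanating at $s < t_1$ carries upstream count \emph{below} $N(0,t_1)$ when $t_1$ lies inside the plateau, so it must be excluded from the candidate set at A's arrival point by this monotonicity rather than by your count comparison. Second, note that the chain only requires the one-sided bound $N\bigl(1,\,t_{c,A}+k'(q_{c,A})\bigr) \ge N(0,t_1)$, i.e.\ $\tau_1 \le t_{c,A}+k'(q_{c,A})-t_1$; since the minimum principle is applied pointwise at $\bigl(1,\,t_{c,A}+k'(q_{c,A})\bigr)$, your closing worry about characteristics being superseded en route resolves itself and needs no separate treatment.
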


\begin{figure}
    \centering
    \includegraphics[width=0.5\linewidth]{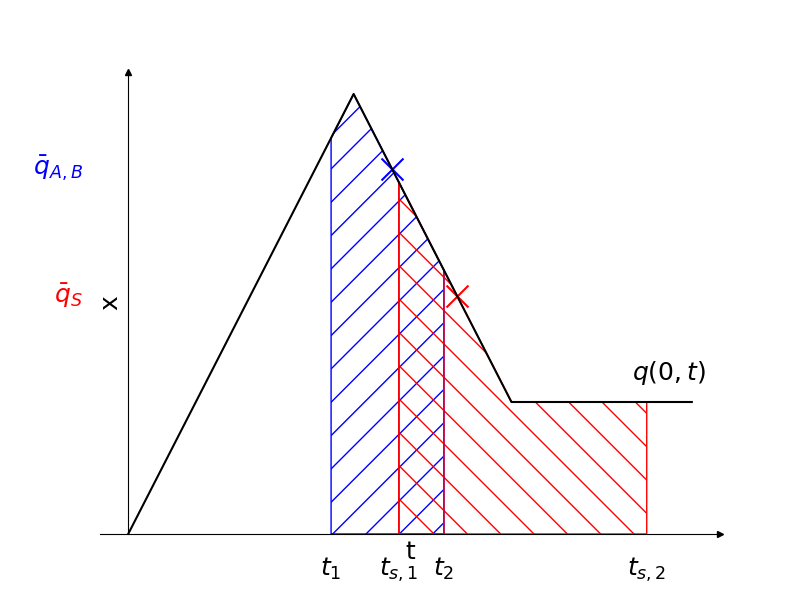}
    \caption{Since the red-hatched area is larger and positioned further to the right than the blue-hatched area, the vertical mean of the blue area (marked with crosses) exceeds that of the red.}
    \label{figure:hatchedareas}
\end{figure}

\subsubsection{A Lower Bound on $t_{c,B}^+$}
\label{subsec:lb_char}
In this subsection, we show the sufficiency of considering cases where \( t_{c,B}^+ \geq t_{pb} \). The proof idea is that a very early departure of the characteristic reaching the downstream boundary at \( t_{B,a}^+ \) - hence a large temporal gap between vehicle departure and its characteristic - arises from the concavity of the fundamental diagram, leading to vehicle speeds well below free-flow. For any instance with \( t_{c,B}^+ < t_{pb} \), there exists a corresponding case with \( t_{c,B}^+ \geq t_{pb} \) that yields an equal or greater \( \Delta \tau_f \). Therefore, we may restrict ourselves to cases with \( t_{c,B}^+ \geq t_{pb} \) without loss of generality, implying that a transformation to satisfy this condition has already been applied where necessary. Denote by $q(x,t, q_p)$ the flow at position x and time t and peak flow realization $q_p$. To prove the result, we begin by assuming \( t_{c,B}^+ \leq t_{pb}\) and \( q(0, t_{c,B}^+, q_p^+) \geq q(0, t_B, q_p^+) \) and construct the adapted instance as described. We first establish the following weaker lower bound for \( t_{c,B}^+ \).

\begin{lemma}
\label{lemma:weaklb}
If \( q(0, t_{c,B}^+, q_p^+) < q(0, t_B, q_p^+) \), then \( \Delta \tau_f(q_p^+) \geq 0 \).
\end{lemma}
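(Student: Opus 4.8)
The plan is to prove the equivalent travel-time inequality $\tau_1 \le \tau_2$, i.e. $\Delta\tau_f(q_p^+)\ge 0$, by verifying the comparison criterion of Lemma~\ref{lemma:noshocks}. First I would invoke the shock-irrelevance result of Subsection~\ref{subsec:shocks}: since it suffices to fix the sign of $\Delta\tau_f$ under the assumption that the flow is constant along each of the two governing characteristics, I may pass to the shock-free setting and write the two vehicle arrivals exactly as $t_{A,a}^+ = t_{c,A}^+ + k'(q_{c,A}^+)$ and $t_{B,a}^+ = t_{c,B}^+ + k'(q_{c,B}^+)$. The target $\tau_2 \ge \tau_1$ then collapses to the single inequality
\[
t_{c,A}^+ + k'(q_{c,A}^+) - t_1 \;\le\; t_{c,B}^+ + k'(q_{c,B}^+) - t_2,
\]
which is precisely the hypothesis of Lemma~\ref{lemma:noshocks}. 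Equivalently, combining the cumulative-count identities $N(0,t_1)=N(0,t_{c,A}^+)+\tfrac{dN}{dx}(q_{c,A}^+)$ and $N(0,t_2)=N(0,t_{c,B}^+)+\tfrac{dN}{dx}(q_{c,B}^+)$ with vehicle conservation $N(1,t_{B,a}^+)-N(1,t_{A,a}^+)=N(0,t_2)-N(0,t_1)$, the statement is that the mean downstream flow over the arrival window must not exceed the mean upstream flow over $[t_1,t_2]$ (here $t_2$ is the entry time $t_B$ of vehicle~B).

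Next I would extract the geometric content of the hypothesis $q_{c,B}^+ < q(0,t_2)$. Because $k'$ is increasing on the uncongested branch, this says that the wave delivering $B$ to the downstream end is faster than $B$'s own entry wave, so $B$'s arrival is set by the already-recovered, low-flow part of the profile rather than by the peak plateau. Together with Lemma~\ref{lemma:initspeed} ($q(0,t_1)\ge q(0,t_2)$), this fixes the ordering of the two governing flows and, through the monotonicity of $\tfrac{dN}{dx}$ in Lemma~\ref{lemma:incrconv}, of the associated count-increments. I would then split the analysis at the peak instant $t_{pe}$: on the post-peak portion the second part of Lemma~\ref{lemma:avgflow} yields directly that the downstream mean flow is dominated by the upstream mean flow, the favorable direction, while the hypothesis $q_{c,B}^+ < q(0,t_2)$ controls the terminal downstream flow $q(1,t_{B,a}^+)=q_{c,B}^+$ and, via the convexity of $\tfrac{dN}{dx}$ in Lemma~\ref{lemma:incrconv}, keeps the pre-peak portion from reversing it.

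The main obstacle is exactly that $A$ and $B$ lie on opposite sides of the peak, so the interval $[t_1,t_2]$ straddles $t_{pe}$ and no single branch of Lemma~\ref{lemma:avgflow} applies to it wholesale; the rising portion pushes the downstream average up (Lemma~\ref{lemma:avgflow}, part~1), i.e. in the adverse direction. The crux is to show that the hypothesis $q_{c,B}^+<q(0,t_2)$ — physically, that $B$ fails to catch the peak congestion from behind — is strong enough to absorb this pre-peak contribution. I expect the cleanest route is to convert everything to cumulative counts and lean on the increasing-convex structure of $\tfrac{dN}{dx}(q)$: a low governing flow for $B$ forces a small count-increment $\tfrac{dN}{dx}(q_{c,B}^+)$, which through the identities above constrains the placement of $t_{c,B}^+$ relative to $t_{c,A}^+$, and the convexity then delivers the offset inequality required by Lemma~\ref{lemma:noshocks}. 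Establishing this final monotone comparison across the peak, rather than any of the individual one-sided estimates, is where the real work lies.
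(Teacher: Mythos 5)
Your reformulation is the right one---the paper's own proof also reduces the claim to comparing the mean upstream flow over $[t_1,t_2]$ with the mean downstream flow over the arrival window $[t_{A,a}^+,t_{B,a}^+]$, using conservation of the count difference $N(0,t_2)-N(0,t_1)=N(1,t_{B,a}^+)-N(1,t_{A,a}^+)$. But your proposal stops short of proving that comparison: you explicitly defer the decisive estimate (``where the real work lies''), and the route you sketch for it would not be needed and would not close as stated. The obstacle you identify---that $[t_1,t_2]$ straddles the peak, so neither branch of Lemma~\ref{lemma:avgflow} applies to it wholesale---dissolves on the upstream side: by Lemma~\ref{lemma:initspeed}, $q(0,t_1,q_p^+)\ge q(0,t_2,q_p^+)$, and since $q(0,\cdot)$ is unimodal on $[t_1,t_2]$, its pointwise minimum there is $q(0,t_2,q_p^+)$; hence the upstream average is at least $q(0,t_2,q_p^+)$ with no splitting at $t_{pe}$, no appeal to Lemma~\ref{lemma:avgflow}, and no convexity of $\frac{dN}{dx}$.

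On the downstream side you also misread the geometry of the hypothesis. In the standing setting of Subsection~\ref{subsec:lb_char} one has $t_{c,B}^+\le t_{pb}$, and since A arrives before B, also $t_{c,A}^+<t_{c,B}^+$; both governing characteristics therefore emanate from the \emph{rising} branch of the boundary profile, not (as you suggest) from the ``already-recovered, low-flow part'' after the peak. Consequently every wave reaching $x=1$ between the two arrivals transports flow at most $q(0,t_{c,B}^+,q_p^+)$, so the downstream average over $[t_{A,a}^+,t_{B,a}^+]$ is at most $q(0,t_{c,B}^+,q_p^+)$, which by the lemma's hypothesis is strictly less than $q(0,t_2,q_p^+)$ and thus less than the upstream average. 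An equal count difference traversed at a strictly smaller average flow forces $t_{B,a}^+-t_{A,a}^+>t_2-t_1$, i.e.\ $\Delta\tau_f(q_p^+)>0$. In short, the paper needs only these two pointwise bounds; the machinery you invoke (the offset inequality of Lemma~\ref{lemma:noshocks}, convexity from Lemma~\ref{lemma:incrconv}, the case split at $t_{pe}$) is superfluous here, and the cross-peak comparison you leave open is precisely the content of the lemma, so the proposal as written does not constitute a proof.
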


Hence, the property \( \Delta \tau_f > 0 \), has already been shown in Lemma \ref{lemma:weaklb} for \( q(0, t_{c,B}^+, q_p^+) < q(0, t_B, q_p^+) \). As discussed in the following lemma, the peak plateau in the trapezoidal boundary flow can be neglected in the analysis of the relevant properties. To achieve this, we consider a suitable instance of the boundary flow with the property $t_{pb}=t_{pe}$, while keeping all other relevant properties unchanged. We then show that the travel time of vehicle B does not increase in this modified instance compared to the original trapezoidal boundary flow. We consider a boundary flow $q(0,t)$ with $q_{pb} < q_{pe}$. Based on this, the simplified version $q_s(0,t)$ is defined as:
\[
q_s(0, t) = 
\begin{cases}
q(0, t), & \text{for } t \leq t_{pb}, \\
q\big(0, t - (t_{pe} - t_{pb})\big), & \text{for } t \geq t_{pe}.
\end{cases}
\]
Furthermore, let \( q_s(x)=0 \) and \( q'_s(x) \) be defined by its derivative as:
\[
q'_s(k) = 
\begin{cases}
q'(k), & \text{for } k \leq q^{-1}\big(q(0, t_{c,B}^+)\big), \\
\text{such that } \frac{dN_s}{dx} q(0, t_{c,B}^+) = \int_{t_{c,B}^+}^{t_B} q_s(0, t) \, dt, & \text{if } k = q_{B,c}^+.
\end{cases}
\]
Then, the following lemma holds:
\begin{lemma}
\label{lemma:peakplateau}
The travel time of vehicle B under the flow-density relationship $q(k)$ and the boundary flow $q(0,t)$ is at least as high as the travel time of vehicle B under the flow-density relationship $q_s(k)$ and the boundary flow $q_s(0,t)$.
\end{lemma}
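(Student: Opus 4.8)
The plan is to reduce the comparison of the two travel times to a single scalar inequality about upstream inflow, exploiting that the auxiliary pair $(q_s,q_s(0,\cdot))$ has been engineered precisely so that the characteristic reaching vehicle $B$ is left unchanged. Within the framework used in Lemma~\ref{lemma:noshocks} (and granting the shock-irrelevance established in Subsection~\ref{subsec:shocks}), the travel time of $B$ is
\[
\tau_B = t_{c,B} + k'(q_{c,B}) - t_B ,
\]
where $t_{c,B}$ is the departure time of the wave reaching $B$ and $q_{c,B}=q(0,t_{c,B})$ is the flow it transports. Since we are in the reduction of Subsection~\ref{subsec:lb_char}, we have $t_{c,B}^+\le t_{pb}$, and on $[0,t_{pb}]$ the two boundary inflows agree and the two fundamental diagrams agree by construction. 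The defining condition of $q_s$ forces the count identity $\frac{dN_s}{dx}(q_{c,B}^+)=\int_{t_{c,B}^+}^{t_B}q_s(0,t)\,dt$, i.e. the wave reaching $B$ under $(q_s,q_s(0,\cdot))$ departs at the same time $t_{c,B}^+$ and carries the same flow $q_{c,B}^+$. Hence the two travel times differ only through the slope of the density relation at that flow, namely $\tau_B-\tau_B^{\,s}=k'(q_{c,B}^+)-k_s'(q_{c,B}^+)$.

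Next I would convert this slope difference into an inflow difference. Using $\frac{dN}{dx}(q)=q\,k'(q)-k(q)$ together with the two count-matching identities
\[
\frac{dN}{dx}(q_{c,B}^+)=N(0,t_B)-N(0,t_{c,B}^+),\qquad \frac{dN_s}{dx}(q_{c,B}^+)=N_s(0,t_B)-N_s(0,t_{c,B}^+),
\]
and the fact that $q_s\equiv q$ below the threshold density implies $k_s(q_{c,B}^+)=k(q_{c,B}^+)$, subtracting the two identities yields
\[
q_{c,B}^+\bigl(k'(q_{c,B}^+)-k_s'(q_{c,B}^+)\bigr)=\int_{t_{c,B}^+}^{t_B}\bigl(q(0,t)-q_s(0,t)\bigr)\,dt .
\]
Therefore $\tau_B\ge\tau_B^{\,s}$ is equivalent to the statement that, over the window $[t_{c,B}^+,t_B]$, the original boundary admits at least as many vehicles as the simplified one.

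I would close by proving that inflow inequality directly from the geometry of the trapezoid. The two inflows coincide on $[t_{c,B}^+,t_{pb}]$. Collapsing the plateau slides the declining branch earlier by $\delta=t_{pe}-t_{pb}$ while preserving its slope $b$, so on $[t_{pb},t_B]$ the original flow dominates the simplified one pointwise: on $[t_{pb},t_{pe}]$ the original sits at the peak $q_p$ while the simplified is already declining, and on $[t_{pe},t_B]$ the original decline lags the simplified decline by $\delta$ and is therefore higher. Integrating the pointwise inequality gives $\int_{t_{c,B}^+}^{t_B}\bigl(q(0,t)-q_s(0,t)\bigr)\,dt\ge 0$, and hence $\tau_B\ge\tau_B^{\,s}$, as claimed.

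The main obstacle I anticipate is not the bookkeeping above but the legitimacy of the coupling: the auxiliary diagram $q_s$ need not be concave (the defining condition generally demands a faster wave speed at $q_{c,B}^+$ than the incoming slope would allow), so Lemma~\ref{lemma:shockpriority} cannot be invoked to certify that the wave departing at $t_{c,B}^+$ is the physically valid characteristic under $q_s$. I would therefore verify this point by a direct Newell-type minimum argument on the cumulative counts, showing that the count transported by this wave is indeed the lowest one reaching the downstream end at $B$'s arrival, rather than appealing to concavity. A secondary, routine check is that no shock along $B$'s trajectory invalidates the representation $\tau_B=t_{c,B}+k'(q_{c,B})-t_B$, which is covered by the shock-irrelevance result of Subsection~\ref{subsec:shocks}.
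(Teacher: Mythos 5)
The first thing you should know is that the paper contains no proof of Lemma~\ref{lemma:peakplateau}: the corresponding proof environment in the appendix is empty. There is therefore nothing to compare your argument against, and it has to stand on its own --- which it essentially does. Your reduction is correct and, in fact, sharper than the lemma demands: with the corridor normalized to unit length and shocks dismissible per Subsection~\ref{subsec:shocks}, $\tau_B = t_{c,B}+k'(q_{c,B})-t_B$; the defining condition $\frac{dN_s}{dx}\bigl(q(0,t_{c,B}^+)\bigr)=\int_{t_{c,B}^+}^{t_B}q_s(0,t)\,dt$ pins the wave reaching $B$ under $(q_s,q_s(0,\cdot))$ to the same departure time and transported flow; and since $q_s$ and $q$ coincide up to the density $q^{-1}\bigl(q(0,t_{c,B}^+)\bigr)$, so that $k_s(q_{c,B}^+)=k(q_{c,B}^+)$ (up to the $\varepsilon$-smoothing the paper itself employs for the analogous $q^*$), subtracting the two count identities yields the exact accounting $q_{c,B}^+\bigl(k'(q_{c,B}^+)-k_s'(q_{c,B}^+)\bigr)=\int_{t_{c,B}^+}^{t_B}\bigl(q(0,t)-q_s(0,t)\bigr)\,dt$. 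The lemma is thereby equivalent to nonnegativity of the demand removed on the window, which your pointwise comparison of the trapezoid with its collapsed-plateau version settles. Note that here you silently corrected a defect in the paper: as printed, $q_s(0,t)=q\bigl(0,t-(t_{pe}-t_{pb})\bigr)$ for $t\ge t_{pe}$ shifts the descending branch \emph{later} and lengthens the plateau, contradicting the announced goal $t_{pb}=t_{pe}$; your reading (descending branch advanced by $\delta=t_{pe}-t_{pb}$) is the one consistent with the paper's subsequent assumption of a ``triangular shape without a constant plateau,'' and only under that reading does the pointwise domination $q(0,t)\ge q_s(0,t)$ hold.

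The one step you leave as a plan rather than a proof --- certifying that the engineered wave departing at $t_{c,B}^+$ is the physically valid characteristic under $q_s$ --- is a genuine obligation, not a formality: $q_s$ is deliberately non-concave at the modified density, so Lemma~\ref{lemma:shockpriority} is indeed unavailable, and a direct minimum-count argument via Lemma~\ref{lemma:newell} (checking in particular that no wave from the advanced descending branch undercuts the engineered wave's count at its arrival point) is the right tool. You should be aware that the paper never discharges this obligation either; for the analogous construct $q^*$ it merely asserts that non-concavity ``is irrelevant to the validity of the argument.'' So your proposal is not a variant of the paper's proof but the only proof on the table: correct in its reduction, with one honestly flagged verification outstanding that any complete write-up of this lemma would need to supply.
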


Following lemma \ref{lemma:peakplateau}, we assume without loss of generality that the boundary flow exhibits a triangular shape without a constant plateau and introduce the simplified notation $t_p := t_{pb} = t_{pe}$.
We proceed by making the following definitions:
\begin{enumerate}

\item \( t^{+*}_{B,c} \) as the time point for which \( q(0,t^{+*}_{B,c}) = q(0,t_{c,B}) \) and \( t^{+*}_{B,c} > t_{c,B} \) holds.
\item \( \bar{q} \) as \( \frac{N(0,t_B)-N(0,t_{c,B}^+)}{t_B-t_{c,B}^+} \) and \( \bar{q}^* \) as \( \frac{N(0,t_B)-N(0,t^{+*}_{B,c}}{t_B-t_{pe}}\).    
\item The modified boundary flow \( \tilde{q}(0,t) \) as:
    \[
    \tilde{q}(0,t) = \begin{cases} 
        q(0,t) & \text{for } t \notin [t_{c,B}^+, t_B] \\
        \bar{q} & \text{for } t \in [t_{c,B}^+, t_B]
    \end{cases}
    \]
\item The boundary flow \( \tilde{q}^*(0,t) \) as:
    \[
    \tilde{q}^*(0,t) = \begin{cases}
        q(0,t) & \text{for } t \notin [t_{c,B}^{+*}, t_B] \\
        \bar{q}^* & \text{for } t \in [t_{c,B}^{+*}, t_B]
    \end{cases}
    \]
    
\item The flow-density relationship \( q^*(k) \) and its associated speed-flow curve \( v^*(q) \) are defined to satisfy
\[
\frac{dN}{dx}\big|_{q(0, t_{c,B})} = \bar{q}^*,
\]
with the conditions \( v^*(\bar{q}) \geq v(\bar{q}^*) \) and \( v^*(\bar{q}) \geq v^*(\bar{q}^*) \), while all other properties of $q^*(k)$ remain arbitrary.
\item The boundary conditions \( \tilde{q}^{\xi,\infty}(0,t) \) for \( \xi \in \{ \cdot, * \} \) as:
\[
\tilde{q}^{\xi,\infty}(0,t) = 
\begin{cases}
\tilde{q}^{\xi}(0,t), & t \leq t_B, \\
\infty, & t > t_B.
\end{cases}
\]
\end{enumerate}

$\bar{q}^* \geq \bar{q}$ holds due to lemma \ref{lemma:weaklb}. An approximation of the properties specified for $q^*(k)$ which is sufficient for the following argument is obtained by:
\begin{itemize}
    \item $q^*(0) = 0$
    \item $q'^*(k) = q'(k)$ for $k$ outside an $\varepsilon$-neighborhood of $q^{-1}(\bar{q}^*)$
    \item $q'^*(k) = (\bar{q}^* - dN/dx(q(\cdot), \bar{q}^*))/\bar{q}^*$ otherwise,
\end{itemize}
where $dN/dx(q(\cdot), \bar{q}^*)$ denotes the function value at $\bar{q}^*$ under the flow-density relationship $q(\cdot)$ and $\varepsilon > 0$ is a sufficiently small real number. It should be noted that the specified construction of \( q^*(k) \) is generally not concave over its entire domain and, therefore, does not represent a valid fundamental diagram according to our own model assumptions or the conventions of traffic flow theory. However, this is irrelevant to the validity of the argument, as \( q^*(k) \) merely serves as an analytical auxiliary construct to demonstrate the desired property of the permissible fundamental diagram \( q(k) \). The uniqueness of the solution \( q(x,t) \), resulting from the definition of \( q^*(k) \) in conjunction with the other model parameters, is sufficient for this purpose. 

Next, we prove that the travel time of vehicle \( B \) under the flow-density relation \( q^* \) establishes an upper bound for its travel time under \( q \). To this end, we first analyze the characteristic curves
\[
s_1 := q'(0,t_{c,B}^+)\cdot(t - t_{c,B}^+) \quad \text{and} \quad s_2 := q'^*(0,t_{c,B}^{+})\cdot(t - t_{c,B}^{+*}).
\]

\begin{lemma}
\label{lemma:char_curves_tt}
The travel time along the characteristic curve $s_1$ is greater than that of $s_2$.
\end{lemma}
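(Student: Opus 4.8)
The plan is to read ``travel time along a characteristic'' as the time the wave needs to cross the corridor; since the corridor length is normalized to $1$ and each $s_i$ is a straight line of constant slope, this equals the reciprocal of the wave speed. Writing $w_1$ for the speed of $s_1$ and $w_2$ for the speed of $s_2$, the assertion is therefore equivalent to $w_1 < w_2$, i.e.\ the auxiliary wave $s_2$ (which departs later, at $t_{c,B}^{+*}>t_{c,B}^{+}$ on the descending branch) is strictly faster than the real wave $s_1$, even though both carry the same upstream value $q(0,t_{c,B}^{+})=q(0,t_{c,B}^{+*})$. I would express each speed through the cumulative-count increment it accumulates over the corridor, using the identity $\tfrac{dN}{dx}=q/w-k$ rearranged as $w=q/\big(\tfrac{dN}{dx}+k\big)$, so that the speed comparison becomes a comparison of the pairs $\big(\tfrac{dN}{dx},k\big)$ realized by the two waves at their common transported flow.

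The key inputs are: the defining property of the auxiliary relation, $\tfrac{dN}{dx}\big|^{*}=\bar q^{*}$ at the flow $q(0,t_{c,B})$; the bound $\bar q^{*}\ge\bar q$ from Lemma~\ref{lemma:weaklb}; and the monotone, convex dependence of $\tfrac{dN}{dx}$ on $q$ from Lemma~\ref{lemma:incrconv}. First I would record the count bookkeeping along each wave. Under the true diagram the relation $N(0,t_{c,B}^{+})+\tfrac{dN}{dx}(q_{c,B})=N(0,t_{B})$ fixes the increment realized by $s_1$, which by definition equals $\bar q\,(t_{B}-t_{c,B}^{+})$; under $q^{*}$ the increment realized by $s_2$ is governed by $\bar q^{*}$. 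Since $\bar q^{*}\ge\bar q$ and $s_2$ departs later, the longer time window over which $s_2$ must deliver its count to the downstream end is what makes room for a faster wave. The geometric content of the subsection --- that $t_{c,B}^{+}<t_{pb}$ forces $q(0,t_{c,B}^{+})$ close to the peak and hence makes $s_1$ a slow, near-critical wave --- then pins down the ordering of these two increments.

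With the bookkeeping in place, I would substitute the explicit construction $(q^{*})'(k)=\big(\bar q^{*}-\tfrac{dN}{dx}(q(\cdot),\bar q^{*})\big)/\bar q^{*}$ on the $\varepsilon$-segment (and $(q^{*})'=q'$ elsewhere) into the speed formula evaluated under $q^{*}$, and compare it term by term with the formula for $w_1$ under $q$. The calibration of $q^{*}$ through $\bar q^{*}$, together with $\bar q^{*}\ge\bar q$, is exactly what makes the density at the common transported flow smaller under $q^{*}$, so that $w_2$ strictly exceeds $w_1$; letting $\varepsilon\to0$ then transfers the strict inequality from the approximating diagram to $q^{*}$ by continuity, legitimized by the uniqueness of $N(x,t)$ guaranteed by the model. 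This gives $w_1<w_2$, and hence the claim.

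The hard part will be tying the \emph{faster} wave speed $w_2$ to the \emph{larger} average flow $\bar q^{*}$ without a sign slip: in $w=q/\big(\tfrac{dN}{dx}+k\big)$ a larger increment $\tfrac{dN}{dx}$ would by itself \emph{lower} the speed, so the gain in speed must come entirely from the density shift that the construction of $q^{*}$ imposes near $\bar q^{*}$, and the comparison must be carried out at a matching transported flow. A secondary obstacle is admissibility: $q^{*}$ is deliberately non-concave and is not a valid fundamental diagram, so Lemma~\ref{lemma:incrconv} and the well-posedness of the LWR solution may be invoked only on the smooth portions and only as an analytic comparison device, exactly as the surrounding text stresses. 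Finally, I would verify that placing $s_1$ on the rising branch and $s_2$ on the descending branch at the common flow $q(0,t_{c,B}^{+})$ is consistent with the relation $N(0,t_{c})+\tfrac{dN}{dx}=N(0,t_{B})$ under each diagram, since that consistency is what secures $t_{c,B}^{+*}>t_{c,B}^{+}$ in the first place.
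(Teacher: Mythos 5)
Your reading of the statement is right---with the corridor length normalized to $1$, the travel time of $s_i$ is the reciprocal wave speed, i.e.\ $k'(q_{B,c})$ under $q(\cdot)$ versus $k'^{*}(q_{B,c})$ under $q^{*}(\cdot)$, compared at the common transported flow through the identity $\tfrac{dN}{dx} = q\,k'(q) - k(q)$---and this matches the opening of the paper's proof. But the mechanism you then build is inverted. You anchor the comparison on $\tfrac{dN}{dx}\big|^{*}=\bar q^{*}$ together with $\bar q^{*}\ge\bar q$ from Lemma~\ref{lemma:weaklb}, conclude in effect that $s_2$ carries the larger increment, and are therefore forced to extract the speed gain from a ``density shift'' making the density smaller under $q^{*}$ at the common flow. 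The construction precludes exactly that: $q'^{*}=q'$ outside an $\varepsilon$-neighborhood and the deviation inside it is bounded, so $k^{*}(q_{B,c})$ and $k(q_{B,c})$ agree up to a negligible correction---this is precisely the replacement step the paper's proof performs, and if the density genuinely shifted, that step (and your own $\varepsilon\to 0$ transfer) would fail rather than succeed.

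The quantity that actually decides the lemma is not the time-averaged rate but the count increment each wave accumulates over the unit corridor. Both waves are calibrated to arrive together with vehicle $B$, so $\tfrac{dN}{dx}(q(\cdot),q_{B,c}^{+}) = N(0,t_B)-N(0,t_{c,B}^{+})$ and $\tfrac{dN}{dx}(q^{*}(\cdot),q_{B,c}^{+*}) = N(0,t_B)-N(0,t_{c,B}^{+*})$; since $t_{c,B}^{+*}>t_{c,B}^{+}$ and $N(0,\cdot)$ is increasing, the increment of $s_1$ is \emph{strictly larger}. This coexists without contradiction with $\bar q^{*}\ge\bar q$: the later window is shorter, so its average \emph{rate} is higher while its total \emph{count} is smaller, and your inference from the rate ordering to the ordering of the $\tfrac{dN}{dx}$ values has the sign backwards. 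Your own observation that a larger increment $\tfrac{dN}{dx}$ by itself lowers the wave speed (equivalently, raises $k'(q)$ and hence the travel time) should have flagged this: the larger increment belongs to $s_1$, which is exactly why $s_1$ is the slower wave. With the increments so ordered and $k^{*}(q_{B,c})$ replaced by $k(q_{B,c})$, the identity yields $q_{B,c}\,k'(q_{B,c}) - k(q_{B,c}) > q_{B,c}\,k'^{*}(q_{B,c}) - k(q_{B,c})$ and thus $k'(q_{B,c}) > k'^{*}(q_{B,c})$ directly, as in the paper; neither Lemma~\ref{lemma:incrconv} nor a term-by-term substitution of the $\varepsilon$-segment construction with a limiting argument is needed.
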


We analyze the trajectories of vehicle \( B \) under the following four scenarios:

\begin{enumerate}
    \item[(S$_0$)] Flow-density relation \( q \) with boundary condition \( q(0,t) \)
    \item[(S1)] Flow-density relation \( q \) with boundary condition \( q^\infty(0,t) \)
    \item[(S2)] Modified flow-density relation \( q^* \) with boundary condition \( q^{*,\infty}(0,t) \)
    \item[(S3)] Modified flow-density relation \( q^* \) with boundary condition \( q(0,t) \)
\end{enumerate}

Let \( t_{B,a}^{+,\infty} \) and \( t_{B,a}^{+,\infty,*} \) denote the respective arrival times of \( B \) at the downstream end. We first consider the case in which the characteristic curve \( s_2 \) reaches the downstream end after \( t_{B,a}^{+,\infty,*} \).

\begin{lemma}
\label{lemma:char_curves_tt_2}
    If the characteristic curve \( s_1 \) reaches the downstream boundary at a time later than vehicle \( B \) in S2, then \( s_1 \) reaches the downstream boundary later than \( s_2 \).
\end{lemma}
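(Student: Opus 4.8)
The plan is to reduce the claim to a single-crossing comparison of the two straight characteristic lines $s_1$ and $s_2$. By Lemma~\ref{lemma:char_curves_tt} the travel time along $s_1$ exceeds that along $s_2$, so, writing $m_1$ and $m_2$ for the constant propagation speeds of $s_1$ and $s_2$, we have $m_2>m_1$; moreover $s_2$ leaves the upstream end at the strictly later time $t_{c,B}^{+*}>t_{c,B}^{+}$. Two non-parallel lines meet in a single point, whose spatial coordinate (with corridor length normalised to $1$) is
\[
x^\ast=\frac{m_1 m_2\,\bigl(t_{c,B}^{+*}-t_{c,B}^{+}\bigr)}{m_2-m_1}>0 .
\]
Comparing the arrival times $t_{c,B}^{+}+1/m_1$ and $t_{c,B}^{+*}+1/m_2$ shows that $s_1$ reaches the downstream boundary $x=1$ later than $s_2$ if and only if $x^\ast<1$, i.e. if and only if the faster line $s_2$ overtakes $s_1$ strictly inside the corridor. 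The entire lemma thus reduces to establishing $x^\ast<1$.

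To localise the crossing I would evaluate both lines at the arrival time $T_B\coloneqq t_{B,a}^{+,\infty,*}$ of vehicle $B$ in scenario (S2). Since $s_1$ starts earlier but is slower, it lies spatially ahead of $s_2$ before the crossing and behind it afterwards; hence $x^\ast<1$ is equivalent to $s_2$ having already caught up somewhere inside $[0,1]$. Concretely, if one can show the spatial ordering $x_{s_2}(T_B)\ge x_{s_1}(T_B)$, then the crossing occurs at a time at most $T_B$, so $x^\ast\le x_{s_2}(T_B)$; and because, in the case under consideration, $s_2$ reaches $x=1$ only after $T_B$, we have $x_{s_2}(T_B)<1$, whence $x^\ast<1$ as required. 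The hypothesis ``$s_1$ reaches the downstream boundary after $B$ in (S2)'' enters here to guarantee likewise that $x_{s_1}(T_B)<1$, so that both lines are genuinely interior at time $T_B$ and the order-reversing crossing they undergo happens inside the corridor.

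The key step is therefore the spatial ordering $x_{s_2}(T_B)\ge x_{s_1}(T_B)$ at $B$'s arrival instant. To obtain it I would exploit the kinematics of scenario (S2): on the free-flow branch of a concave fundamental diagram through the origin a vehicle trajectory is never slower than the characteristics it meets, since $v(k)=q(k)/k\ge q'(k)$, so $B$ overtakes both $s_1$ and $s_2$ between the two ends. Combining this with Newell's minimum principle (Lemma~\ref{lemma:newell}), which identifies $B$'s downstream arrival in (S2) with the lowest-count characteristic through $(1,T_B)$, and with the common upstream datum $q(0,t_{c,B}^{+})=q(0,t_{c,B}^{+*})$ used to define the two lines, lets me compare the positions of $s_1$ and $s_2$ at $T_B$ through the trajectory of $B$ that separates them.

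I expect this last step to be the main obstacle: converting the two ``arrives after $B$'' hypotheses into the clean spatial inequality $x_{s_2}(T_B)\ge x_{s_1}(T_B)$ requires carefully tracking how the characteristic realising $B$'s arrival in (S2) sits relative to the auxiliary lines $s_1$ and $s_2$, including the effect of the infinite post-$t_B$ inflow encoded in the boundary datum $q^{*,\infty}$. Once that ordering is secured, the remainder is the elementary two-line argument above, and the conclusion that $s_1$ reaches the downstream boundary later than $s_2$ is immediate.
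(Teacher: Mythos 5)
There is a genuine gap, and it sits exactly where you flagged it — but the situation is worse than an unproven step: one of your auxiliary assumptions is false in general. Your reduction needs both lines to be strictly interior at $B$'s arrival instant $T_B := t_{B,a}^{+,\infty,*}$, in particular $x_{s_2}(T_B) < 1$, i.e.\ that $s_2$ reaches the downstream boundary only \emph{after} $B$. That is not a hypothesis of the lemma (the hypothesis concerns $s_1$ only), and it is contradicted by the very construction of $q^*$: by the defining property $\frac{dN}{dx}\big|_{q(0,t_{c,B})} = \bar{q}^*$, the cumulative count transported along $s_2$ to the downstream boundary equals exactly $N(0,t_B)$, and Newell's minimum principle (Lemma~\ref{lemma:newell}) then gives $N(1,t_{s_2}) \leq N(0,t_B)$ in scenario (S2). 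Since vehicle $B$ arrives when the downstream count first reaches $N(0,t_B)$, this forces $t_{s_2} \leq T_B$: at $B$'s arrival, $s_2$ has already reached (or is exactly reaching) $x=1$, so $x_{s_2}(T_B) \geq 1$ and your single-crossing-inside-the-corridor picture degenerates. Your heuristic for the missing ordering step ($v(k)=q(k)/k \geq q'(k)$, so $B$ overtakes the characteristics it meets) also cannot deliver it, because the comparison required is between characteristics of two \emph{different} flow--density relations, $q$ and $q^*$, not between $B$'s trajectory and a characteristic of the medium it travels through.

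The repair is that the inequality $t_{s_2} \leq T_B$ which breaks your interiority assumption is by itself the entire proof: combined with the hypothesis $T_B \leq t_{s_1}$, transitivity gives $t_{s_2} \leq t_{s_1}$ immediately. This is exactly the paper's argument — evaluate the path integral along $s_2$ to get the count $N(0,t_B)$, invoke Lemma~\ref{lemma:newell} to make $t_{s_2}$ a lower bound for $B$'s arrival in (S2), and use the hypothesis to make $B$'s arrival a lower bound for $t_{s_1}$. Once that sandwich is in place, your crossing-point formula for $x^*$, the speed comparison $m_2 > m_1$ drawn from Lemma~\ref{lemma:char_curves_tt}, and the equivalence of the lemma with $x^* < 1$ are all correct but dispensable: they merely restate the desired arrival-time ordering in geometric form without supplying the one fact ($t_{s_2} \leq T_B$) that proves it.
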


We now prove that $s_1$ reaches the downstream boundary later than $s_2$ even when arriving before vehicle B in $S2$. To this end, let \( t_{s_i} \) denote the arrival time of characteristic \( s_i \). Let \( t_{B,a,i} \) be vehicle B's arrival time in scenario \( (S_i) \), \( i \in \{0,1,2,3\} \). We first establish the following lemma:

\begin{lemma}
\label{lemma:diffcumflow} 
If
\begin{enumerate}
\item[(i)] \( t_{s_1} \leq t_{B,a,2} \) and
\item[(ii)] \( q_{B,c}^{+} \geq \bar{q} \),
\end{enumerate}
then the difference in cumulative flow at the downstream boundary between times \( t_{B,a,2} \) and \( t_{B,a,1} \) in scenario \( (S_1) \) equals:
\[
\bar{q} \cdot \left(\frac{k(\bar{q}^*)}{\bar{q}^*} - \frac{k(\bar{q})}{\bar{q}}\right).
\]
\end{lemma}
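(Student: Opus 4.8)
The plan is to compute the two downstream cumulative counts separately and subtract, exploiting the fact that the right-hand side is simply a flow multiplied by a difference of paces. Writing $\frac{k(q)}{q}=\frac{1}{v(q)}$, the target expression reads $\bar q\bigl(\tfrac{1}{v(\bar q^*)}-\tfrac{1}{v(\bar q)}\bigr)$, i.e. a flow times a travel-time difference, which is dimensionally a vehicle count. This already signals the structure of the argument: one term should be $\bar q$ times the pace $\tfrac{1}{v(\bar q)}$ of vehicle $B$ in $(S_1)$, and the other $\bar q$ times its pace $\tfrac{1}{v(\bar q^*)}$ in the counterfactual governed by $q^*$.

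First I would fix the reference count. Since $B$ carries the label $N(0,t_B)$ and reaches the downstream end at $t_{B,a,1}$ in $(S_1)$, conservation of the vehicle index gives $N_{S_1}(1,t_{B,a,1})=N(0,t_B)$. Next I would locate $B$'s arrival time: because the boundary $\tilde q^{\cdot,\infty}$ is flat at $\bar q$ on $[t_{c,B}^+,t_B]$, the uncongested density along $B$'s trajectory is constant and equal to $k(\bar q)$, so $B$ travels at the constant speed $v(\bar q)$ and, with unit corridor length, $t_{B,a,1}=t_B+\tfrac{k(\bar q)}{\bar q}$. The same reasoning applied to $(S_2)$, together with the defining property $\frac{dN}{dx}\big|_{q(0,t_{c,B})}=\bar q^*$ of $q^*$ and the speed inequalities $v^*(\bar q)\ge v(\bar q^*)$ and $v^*(\bar q)\ge v^*(\bar q^*)$, is designed precisely so that $B$ acquires the matching pace $\tfrac{k(\bar q^*)}{\bar q^*}$ under $q^*$; condition~(ii), $q_{B,c}^{+}\ge\bar q$, combined with $\bar q^*\ge\bar q$ from Lemma~\ref{lemma:weaklb}, then fixes the ordering $t_{B,a,2}\ge t_{B,a,1}$.

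With both arrival times in hand, I would evaluate $N_{S_1}(1,t_{B,a,2})$ by transporting the count along the characteristic that reaches the downstream end at $t_{B,a,2}$ in $(S_1)$, using $N(1,t)=N(0,t^d)+\frac{dN}{dx}(q)$, where $t^d$ and $q$ are that characteristic's departure time and transported flow. Condition~(i), $t_{s_1}\le t_{B,a,2}$, guarantees that the trailing plateau wave $s_1$ has already crossed the downstream boundary, so the count at $t_{B,a,2}$ is controlled by the flattened plateau of height $\bar q$ together with the wave structure triggered by the infinite-inflow tail after $t_B$. Substituting the definitions of $\bar q$ and $\bar q^*$ and the construction of $q^*$ should then collapse $N_{S_1}(1,t_{B,a,2})-N(0,t_B)$ to $\bar q\bigl(\tfrac{k(\bar q^*)}{\bar q^*}-\tfrac{k(\bar q)}{\bar q}\bigr)$; the monotonicity and convexity of $\frac{dN}{dx}$ in $q$ (Lemma~\ref{lemma:incrconv}) are what keep the intermediate flow comparisons pointing in the right direction.

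The step I expect to be the main obstacle is exactly this last piece of bookkeeping: identifying without ambiguity which characteristic governs the downstream count at $t_{B,a,2}$ once the infinite-inflow tail of $\tilde q^{\cdot,\infty}$ injects its wave structure behind $B$, and then verifying that the separate contributions — the plateau discharge at $\bar q$, the counterfactual pace under $q^*$, and the index of $B$ — cancel \emph{exactly} rather than merely up to sign. In the spirit of the hatched-area comparison of Figure~\ref{figure:hatchedareas}, the equality of counts is really the statement that two regions in the $(t,x)$ plane carry the same number of vehicle trajectories, and converting that geometric equality into the displayed algebraic identity, using only the definitions of $\bar q$, $\bar q^*$ and the defining property of $q^*$, is where the care is needed.
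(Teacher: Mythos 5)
Your setup coincides with the paper's own proof in its first two thirds: fixing the reference count $N_{S_1}(1,t_{B,a,1})=N(0,t_B)$, observing that under the flattened boundary vehicle $B$ moves through homogeneous traffic at flow $\bar q$ in $(S_1)$ and $\bar q^*$ in $(S_2)$, and deriving the paces $\tfrac{k(\bar q)}{\bar q}$ and $\tfrac{k(\bar q^*)}{\bar q^*}$, hence the arrival gap $t_{B,a,2}-t_{B,a,1}=\tfrac{k(\bar q^*)}{\bar q^*}-\tfrac{k(\bar q)}{\bar q}$. The paper justifies the homogeneity exactly as you do: the characteristics emitted after $t_B$ under the $\infty$-tail propagate too slowly to intersect $B$'s trajectory, and no wave emitted before $t_{c,B}^{+}$ can catch $B$ because $s_i$ already lower-bounds its arrival via its associated cumulative flow.

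The step you flag as "the main obstacle," however, is where your proposal stops short, and the resolution is much simpler than the characteristic-transport bookkeeping you sketch. You do not need to evaluate $N_{S_1}(1,t_{B,a,2})$ via $N(1,t)=N(0,t^d)+\tfrac{dN}{dx}(q)$, nor to invoke the convexity of $\tfrac{dN}{dx}$ from Lemma~\ref{lemma:incrconv}, nor to disentangle the wave structure injected by the infinite tail. Condition (ii), $q_{B,c}^{+}\ge\bar q$, places $(S_1)$ in the shock configuration, so $B$ arrives after $s_1$ and thus $t_{s_1}\le t_{B,a,1}\le t_{B,a,2}$; consequently every characteristic reaching the downstream boundary during $[t_{B,a,1},t_{B,a,2}]$ emanates from the flattened plateau and carries flow $\bar q$, the post-$t_B$ waves and the ensuing shock being too slow to arrive in this window. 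The downstream local flow in $(S_1)$ is therefore identically $\bar q$ on that interval, and the count difference is simply flow times time gap, $\bar q\bigl(\tfrac{k(\bar q^*)}{\bar q^*}-\tfrac{k(\bar q)}{\bar q}\bigr)$ --- which is the paper's one-line conclusion. As written, your proof is genuinely incomplete at this point ("should then collapse" is not an argument); replacing the intended transport computation by this constancy-of-downstream-flow observation closes the gap, and the remainder of your argument is sound. One further remark: in $(S_2)$ the pace is strictly $k^*(\bar q^*)/\bar q^*$ under the modified relation; like the paper, you may identify it with $k(\bar q^*)/\bar q^*$ only because the $\varepsilon$-construction of $q^*$ makes the two differ by a vanishing amount, and this deserves an explicit sentence.
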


Consider the classification of boundary conditions and resulting wave propagation patterns illustrated in Figure~\ref{figure:trajectories}:
In the upper configuration, where $\bar{q}>q_{B,c}^+$, waves in the second segment of $\bar{q}^\infty(0,t)$ propagate more slowly than in the first segment, generating a rarefaction fan (shown in orange). Vehicle B's trajectory (depicted by black dotted lines) intersects this fan, causing its velocity to increase upon reaching the characteristic associated with flow $\bar{q}$. This represents the only case where the arrival of characteristic $s_i$ (here $s_1$) coincides with vehicle B's arrival.
The middle configuration, where $\bar{q} \leq q_{B,c}^+$, exhibits faster wave propagation in the second segment, resulting in a shock wave. Consequently, the cumulative flow at $s_1$'s arrival is less than $N(0, t_B)$, implying that vehicle B arrives after $s_1$. Since B's trajectory intersects only characteristics carrying flow $\bar{q}$, it maintains a constant velocity $v(\bar{q})$.
The lower configuration, depicting scenario $(S2)$, behaves analogously to the middle case.

\begin{figure}[h!]
    \centering
    \subfloat[S1, $\bar{q}>q_{B,c}^+$]{
        \begin{subfigure}{0.45\textwidth}
            \centering
            \includegraphics[height=5cm]{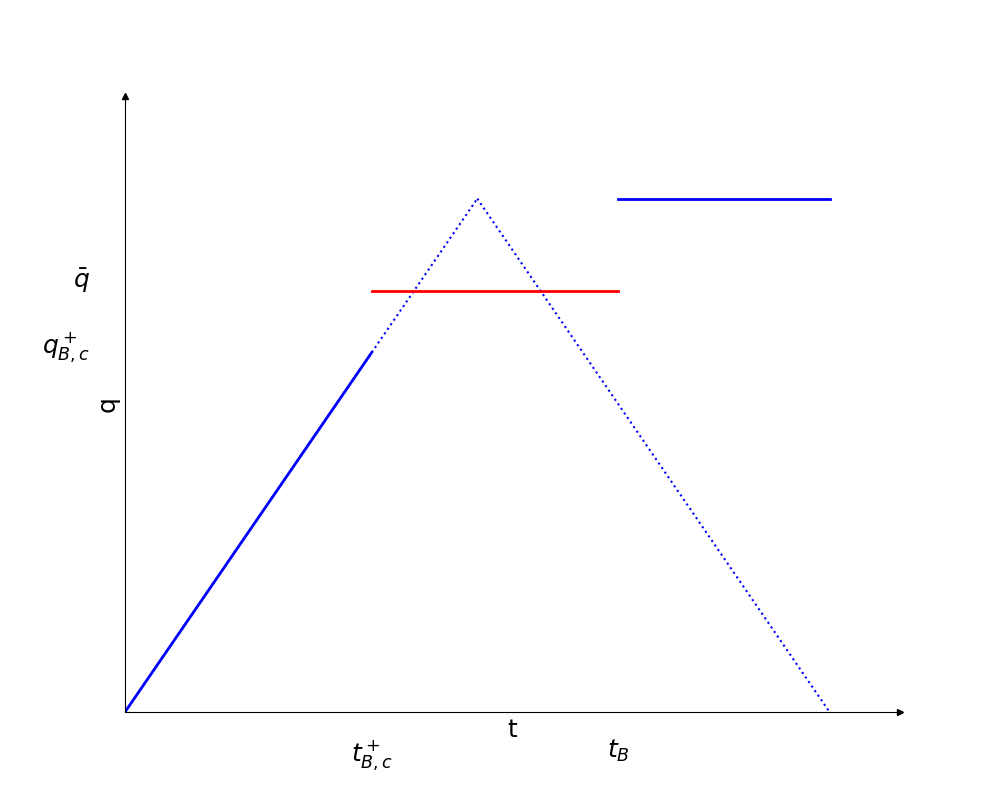}
        \end{subfigure}
        \hfill
        \begin{subfigure}{0.45\textwidth}
            \centering
            \includegraphics[height=5cm]{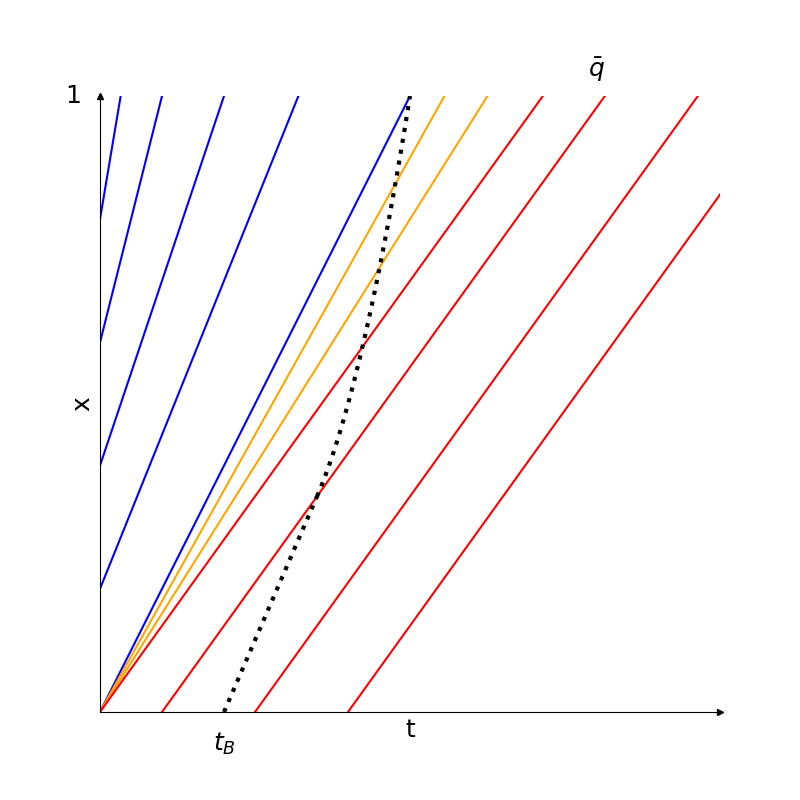}
        \end{subfigure}
    }

    \vskip\baselineskip
    \subfloat[S1, $\bar{q} \leq q_{B,c}^+$]{
        \begin{subfigure}{0.45\textwidth}
            \centering
            \includegraphics[height=5cm]{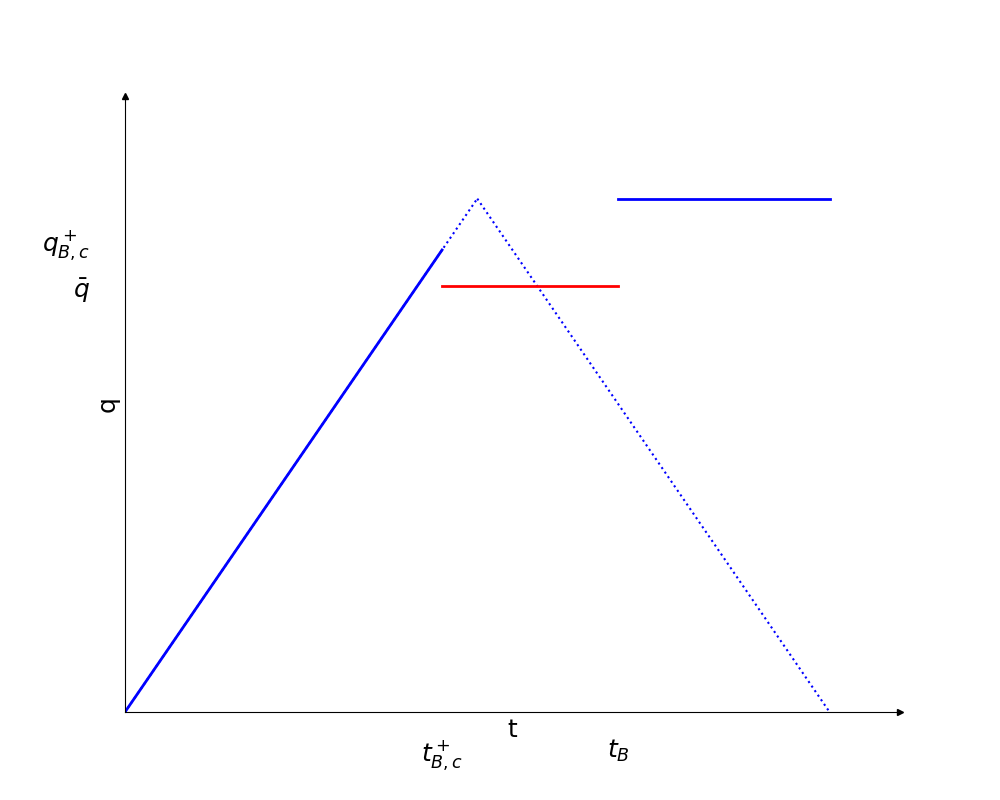}
        \end{subfigure}
        \hfill
        \begin{subfigure}{0.45\textwidth}
            \centering
            \includegraphics[height=5cm]{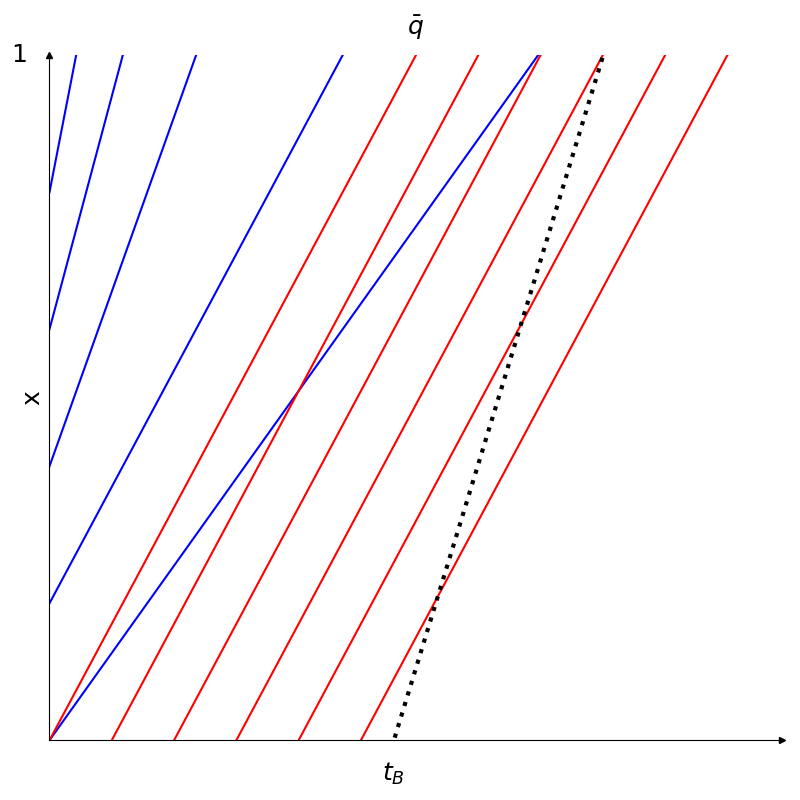}
        \end{subfigure}
    }

    \vskip\baselineskip
    \subfloat[S2]{
        \begin{subfigure}{0.45\textwidth}
            \centering
            \includegraphics[height=5cm]{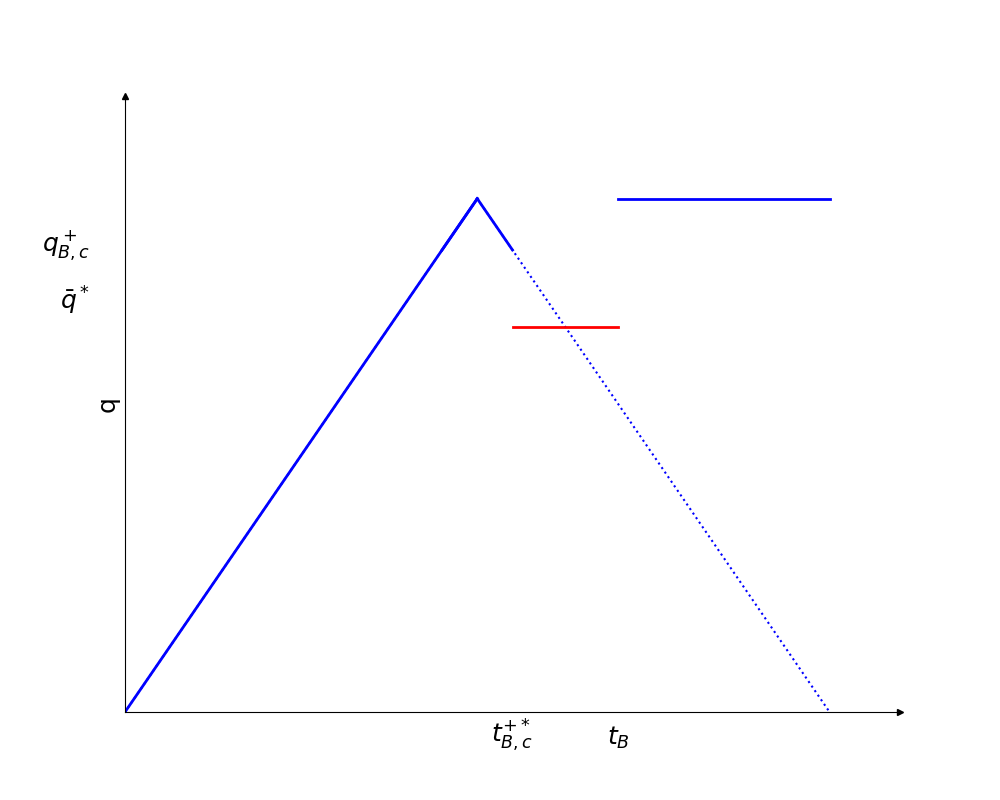}
        \end{subfigure}
        \hfill
        \begin{subfigure}{0.45\textwidth}
            \centering
            \includegraphics[height=5cm]{char_shock.png}
        \end{subfigure}
    }

    \caption{Traffic fLow dynamics under various boundary conditions.}
    \label{figure:trajectories}
\end{figure}

To proceed, we establish bounds for the differential cumulative flow at the downstream boundary between temporal points \( t_{s_1} \) and \( t_{s_2} \) across the respective scenarios. Additionally, we define:
\[
E := \bigg[ k'(q_{B,c}^{+*}) \cdot (q_{B,c}^+ - \bar{q}^*) - k'(q_{B,c}^+) \cdot (q_{B,c}^+ - \bar{q}) \bigg] 
     - \bigg[ k(\bar{q}^*) - k(\bar{q}) \bigg].
\]

For the following steps, denote $N_i(x,t)$ the cumulative flow at position $x$ and time $t$ in scenario $S_i$.
\begin{lemma}
\label{lemma:flowdiffmax}
If $\bar{q} \leq q^+_{B,c}$, $N_1(1, t_{s_2}) - N_1(1, t_{s_1}) \leq E.$ holds in $S_1$.
\end{lemma}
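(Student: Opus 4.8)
The plan is to pin down the wave structure of scenario $S_1$ in the regime $\bar q\le q_{B,c}^+$, read off the downstream cumulative count $N_1(1,\cdot)$ at the two times $t_{s_1},t_{s_2}$, and then dominate the resulting increment by the closed-form quantity $E$ using the convexity supplied by Lemma~\ref{lemma:incrconv}. In this regime the modified boundary $\tilde q(0,t)$ carries a downward jump at $t_{c,B}^+$, from the first-segment value $q_{B,c}^+$ down to the plateau value $\bar q$, so (as in the middle panel of Figure~\ref{figure:trajectories}) a single forward shock is issued from $(0,t_{c,B}^+)$. First I would record the ordering facts forced by concavity of $q(k)$: the Rankine--Hugoniot speed \ref{eq:rhjump}, $u=(q_{B,c}^+-\bar q)/(k(q_{B,c}^+)-k(\bar q))$, satisfies $q'(q_{B,c}^+)<u<q'(\bar q)$, so the line $s_1$, which travels at the slower speed $q'(q_{B,c}^+)$, is absorbed by the shock and reaches $x=1$ only after the shock does. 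Locating $t_{s_1}$ and $t_{s_2}$ relative to the shock's arrival at $x=1$ then fixes which flow feeds the downstream boundary on the whole window $[t_{s_1},t_{s_2}]$.

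The second step evaluates the two counts and exhibits $E$. Along any shock-free characteristic the identity $\tfrac{dN}{dx}(q)=q\,k'(q)-k(q)$ gives $N(1,t_{\mathrm{arr}})=N(0,t_{\mathrm{dep}})+\tfrac{dN}{dx}(q)$, and continuity of $N$ across the shock links the post-shock values transported at $\bar q$ to the pre-shock values transported at $q_{B,c}^+$. Carrying this through with the traversal times $k'(q_{B,c}^+)=1/q'(q_{B,c}^+)$ and $k'(q_{B,c}^{+*})$ of $s_1$ and $s_2$, the quantity $E$ is recognised as the downstream increment that would arise in the counterfactual in which $s_1$ and $s_2$ reach $x=1$ as free characteristics: a difference of traversal-time-weighted flow gaps $k'(q_{B,c}^{+*})(q_{B,c}^+-\bar q^*)-k'(q_{B,c}^+)(q_{B,c}^+-\bar q)$, corrected by the shock density jump $k(\bar q^*)-k(\bar q)$.

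The third step is the inequality itself, and its mechanism is that the shock lowers throughput: across $[t_{s_1},t_{s_2}]$ the physical downstream boundary is fed by the low plateau flow $\bar q$ rather than by the high flow $q_{B,c}^+$, so the realised increment cannot exceed its free-characteristic counterpart. I would formalise this with the lower-envelope principle of Lemma~\ref{lemma:newell}, which guarantees that the physical count never exceeds the counterfactual characteristic count, and with the monotonicity and convexity of $\tfrac{dN}{dx}$ in $q$ from Lemma~\ref{lemma:incrconv}: convexity controls the secant-versus-tangent comparison between the $q_{B,c}^+$ and $q_{B,c}^{+*}$ contributions, while $\bar q^*\ge\bar q$ (established in Lemma~\ref{lemma:weaklb}) together with monotonicity of $k$ on the free-flow branch makes the subtracted jump term $k(\bar q^*)-k(\bar q)$ act in the favourable direction. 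The average-flow inequality of Lemma~\ref{lemma:avgflow}, applied on the decreasing branch that contains $t_{c,B}^{+*}$, supplies the remaining control on $\bar q^*$ needed to close the estimate.

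The step I expect to be the main obstacle is the shock bookkeeping of the second paragraph. Because $s_1$ and $s_2$ are geometric lines that are physically absorbed before reaching $x=1$, their nominal arrival times do not obey the naive along-characteristic identity, and $N_1(1,\cdot)$ must instead be read off the genuine post-shock flow profile and reconciled with the shock's Rankine--Hugoniot jump. Keeping the $q$-governed quantities of $S_1$ cleanly separated from the $q^*$-governed quantities $q_{B,c}^{+*}$ and $\bar q^*$ that enter through $s_2$, while still matching both to the single expression $E$, is the delicate part; once that reconciliation is correct, the closing inequality is a routine application of the convexity in Lemma~\ref{lemma:incrconv}.
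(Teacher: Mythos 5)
Your proposal is correct in substance and follows the same underlying approach as the paper: fix the shock geometry created by the downward jump from $q_{B,c}^+$ to $\bar q$ at $(0,t_{c,B}^+)$, note that the Rankine--Hugoniot speed exceeds the slope of $s_1$ so that $s_1$ runs behind the shock and the states actually traversed by the lines $s_1,s_2$ are the post-jump plateau states rather than the flows the lines nominally carry, and then read off $N_1(1,t_{s_1})$ and $N_1(1,t_{s_2})$ by integrating the conservation form along those straight lines. The one genuine difference is in how the estimate is closed. The paper's proof is a pure computation: it evaluates both $N(0,t_B)-N_1(1,t_{s_1})$ and $N(0,t_B)-N_1(1,t_{s_2})$ as path integrals of $q\,dt - k\,dx$ along $s_1$ and $s_2$ with the actual local states (using Lemma~\ref{lemma:shockpriority} to identify those states), subtracts, and obtains $N_1(1,t_{s_2})-N_1(1,t_{s_1})=E$ \emph{exactly}, so the stated inequality holds with equality and no convexity or comparison lemmas are needed. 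You instead propose a one-sided argument: exact evaluation at $t_{s_1}$ from the post-shock profile, an upper bound at $t_{s_2}$ via Newell's minimum principle (Lemma~\ref{lemma:newell}), and then control via Lemma~\ref{lemma:incrconv}, $\bar q^*\ge\bar q$ from Lemma~\ref{lemma:weaklb}, and Lemma~\ref{lemma:avgflow}. That route does deliver the inequality as stated, but the convexity and $\bar q$-versus-$\bar q^*$ machinery you invoke is not needed for this lemma --- it is precisely the content of the \emph{next} step, Lemma~\ref{lemma:flowdiffmin}, which bounds $E$ itself --- and importing it here obscures the fact that the bound is tight. To your credit, the "delicate reconciliation" you flag in your last paragraph (that $s_1,s_2$ are not physically valid characteristics up to $x=1$, so naive along-characteristic identities fail) is exactly the crux, and the paper resolves it the way you anticipate: by pairing the nominal carried state with the actual plateau state in two line integrals along the same geometric trajectory, whose difference telescopes to the traversal-time-weighted flow gaps minus the density-jump term defining $E$.
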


For completing the proof of the statement discussed in this subsection, it is additionally necessary to establish the following upper bound for $E$, initially mentioned in Lemma \ref{lemma:diffcumflow}.

\begin{lemma}
\label{lemma:flowdiffmin}
If $\bar{q} \leq q^+_{B,c}$, $E \leq \bar{q} \cdot \left(\frac{k(\bar{q})}{\bar{q}} - \frac{k(\bar{q}^*)}{\bar{q}^*}\right) $
\end{lemma}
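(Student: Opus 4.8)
The plan is to reduce the asserted bound to a single convexity inequality and then discharge it using the properties of $k(\cdot)$ on the free-flow branch supplied by the aggressive-driving assumption. Writing out the definition of $E$ and expanding the right-hand side, I would first cancel the common $k(\bar{q})$ terms: the claim $E \le \bar{q}\big(\tfrac{k(\bar{q})}{\bar{q}}-\tfrac{k(\bar{q}^*)}{\bar{q}^*}\big)$ is algebraically equivalent to
\[
k'(q_{B,c}^{+*})\,(q_{B,c}^+-\bar{q}^*)\;-\;k'(q_{B,c}^+)\,(q_{B,c}^+-\bar{q})\;\le\;k(\bar{q}^*)\,\frac{\bar{q}^*-\bar{q}}{\bar{q}^*}.
\]
Since $\bar{q}^*\ge\bar{q}$ by Lemma~\ref{lemma:weaklb} and $k\ge0$, the right-hand side is non-negative, and by the standing hypothesis $\bar{q}\le q_{B,c}^+$ the second term on the left is non-negative as well, so the inequality is really a statement about how far the first (tangent-slope) term can exceed the second.

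Next I would pin down the ordering of the four flow values $\bar{q},\ \bar{q}^*,\ q_{B,c}^+,\ q_{B,c}^{+*}$. From the defining relation $\tfrac{dN}{dx}\big|_{q(0,t_{c,B})}=\bar{q}^*$ for $q^*$, together with the monotonicity and convexity of $\tfrac{dN}{dx}(q)=q\,k'(q)-k(q)$ from Lemma~\ref{lemma:incrconv}, I can locate $q_{B,c}^{+*}$ relative to $q_{B,c}^+$ and reconfirm $\bar{q}\le\bar{q}^*$. The key analytic inputs are that $k(\cdot)$ is increasing and convex on the free-flow branch (so $k'\ge0$, $k''\ge0$) and that the aggressive-driving condition gives $k^{(3)}\ge0$, i.e. $k'$ is itself convex; this last fact is what lets me compare the two tangent slopes $k'(q_{B,c}^{+*})$ and $k'(q_{B,c}^+)$ quantitatively rather than merely by sign.

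With the ordering fixed, I would prove the reduced inequality by a tangent-line/secant estimate for the convex function $k$. Rewriting the left-hand side as $q_{B,c}^+\big[k'(q_{B,c}^{+*})-k'(q_{B,c}^+)\big]-\big[\bar{q}^*k'(q_{B,c}^{+*})-\bar{q}\,k'(q_{B,c}^+)\big]$ isolates the contribution of the slope gap, which I bound using $k^{(3)}\ge0$, while the remaining terms I match against $k(\bar{q}^*)\tfrac{\bar{q}^*-\bar{q}}{\bar{q}^*}$ via the first-order convexity inequality $k(\bar{q})\ge k(\bar{q}^*)+k'(\bar{q}^*)(\bar{q}-\bar{q}^*)$ and its companion anchored at $q_{B,c}^+$. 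The residual then reduces to a product of a non-negative slope difference with the non-negative gap $\bar{q}^*-\bar{q}$, which the right-hand side absorbs.

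I expect the main obstacle to be the bookkeeping around $q_{B,c}^{+*}$: it is defined through the auxiliary diagram $q^*$, which is deliberately \emph{not} concave, so $k'(q_{B,c}^{+*})$ is not directly comparable to $k'(q_{B,c}^+)$ through the ordinary fundamental-diagram properties. Getting the sign of $k'(q_{B,c}^{+*})-k'(q_{B,c}^+)$ right, and ensuring the slope gap is controlled by $\bar{q}^*-\bar{q}$ with the correct constant, is exactly where the aggressive-driving hypothesis $k^{(3)}\ge0$ is indispensable; without it the first left-hand term could overshoot the bound. Once that comparison is secured, the inequality closes by the elementary convexity estimates above.
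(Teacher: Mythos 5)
Your algebraic reduction is correct and is in substance the same rearrangement the paper performs: after cancelling the common $k(\bar q)$ terms, the lemma is equivalent to
\[
k'(q_{B,c}^{+*})\,(q_{B,c}^{+}-\bar q^{*})-k'(q_{B,c}^{+})\,(q_{B,c}^{+}-\bar q)\;\le\;k(\bar q^{*})\,\frac{\bar q^{*}-\bar q}{\bar q^{*}}.
\]
The genuine gap lies in how you propose to handle $k'(q_{B,c}^{+*})$. You plan to compare the two tangent slopes \emph{quantitatively} via the aggressive-driving condition $k^{(3)}\ge 0$ and declare that condition indispensable. This misreads what $q_{B,c}^{+*}$ is: by construction $q(0,t_{c,B}^{+*})=q(0,t_{c,B})$, so the flow value is not a different point in the domain of $k$ at which a Taylor-type estimate could be applied; $k'(q_{B,c}^{+*})$ is the characteristic slope under the auxiliary diagram $q^{*}(\cdot)$ evaluated at the common flow. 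You yourself flag that $q^{*}$ is deliberately non-concave, so $k^{(3)}\ge0$ for the original diagram says nothing about it — yet your plan still rests on exactly such an estimate, which is circular. In the paper, the needed comparison $k'(q_{B,c}^{+*})\le k'(q_{B,c}^{+})$ is not a smoothness fact at all: it follows from the defining relation $\frac{dN}{dx}\big|_{q(0,t_{c,B})}=\bar q^{*}$ together with the monotonicity of $\frac{dN}{dx}$ (Lemma~\ref{lemma:incrconv}, made explicit in Lemma~\ref{lemma:char_curves_tt}), i.e.\ from the \emph{construction} of $q^{*}$.

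Moreover, once that sign is in hand, no quantitative control of the slope gap is required, so your claim that without $k^{(3)}\ge0$ "the first left-hand term could overshoot the bound" is contradicted by the paper's own proof. Writing $p^{+*}=k'(q_{B,c}^{+*})$ and $p^{+}=k'(q_{B,c}^{+})$, the paper regroups $E=(p^{+*}-p^{+})(q_{B,c}^{+}-\bar q^{*})+p^{+}(\bar q-\bar q^{*})-\big[k(\bar q^{*})-k(\bar q)\big]$, simply \emph{drops} the first product as non-positive (using $p^{+*}\le p^{+}$ and $q_{B,c}^{+}\ge\bar q^{*}$), and then closes with two elementary convexity facts about $k$ alone: the tangent-line inequality $k(\bar q)\ge k(\bar q^{*})+k'(\bar q^{*})(\bar q-\bar q^{*})$, which you do have, and the secant-below-tangent bound $\frac{k(\bar q^{*})}{\bar q^{*}}\le k'(\bar q^{*})$ (which uses $k(0)=0$ after the normalization), which you do not. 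The secant bound is precisely the mechanism by which the right-hand side "absorbs" the residual, giving $E\le\big(p^{+}-k'(\bar q^{*})\big)(\bar q-\bar q^{*})$ and from there the stated inequality; in your sketch this absorption is merely asserted. So: correct reduction, but the two load-bearing ingredients — the slope comparison coming from the construction of $q^{*}$ rather than from third-derivative assumptions, and the secant estimate that closes the chain — are missing, and the lemma in fact holds under ordinary convexity of $k$ without any appeal to aggressive driving.
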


Given the above conditions, we can now prove the statement on characteristic arrival times stated at the beginning of subsection \ref{subsec:lb_char}:

\begin{lemma}
\label{lemma:kw_arr}
Let \( q(0,t) \) be the flow at the upstream boundary of the corridor. Then, the following inequality holds for the characteristic travel times \( k^{*'} \) and \( k' \):
\[
k'^{*}(q(0,t_p)) + t_p \leq k'(q(0,t_{c,B})) + t_{c,B}.
\]
\end{lemma}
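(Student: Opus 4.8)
The plan is to read the claimed inequality as a comparison of two arrival times at the downstream boundary $x=1$ and then to convert that comparison into a statement about cumulative counts, which the preceding lemmas already quantify. The right-hand side $k'(q(0,t_{c,B})) + t_{c,B}$ is the time at which the characteristic defining $t_{c,B}$ reaches $x=1$, i.e. the arrival time of vehicle $B$ under the relation $q$ (call it $t_{B,a}$); the left-hand side $t_p + {k'}^{*}(q(0,t_p))$ is the arrival time at $x=1$ of the characteristic that emanates from the upstream end at the peak time $t_p$ under the auxiliary relation $q^{*}$. Since we are in the case $t_{c,B} < t_{pb} = t_p$, the departure on the left is \emph{later} while, by the construction of $q^{*}$ and Lemma \ref{lemma:char_curves_tt}, its characteristic is \emph{faster}; the inequality therefore asserts that the travel-time saving from switching to $q^{*}$ dominates the later departure, i.e.
\[
k'(q(0,t_{c,B})) - {k'}^{*}(q(0,t_p)) \;\geq\; t_p - t_{c,B}.
\]

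First I would pass from travel times to cumulative counts using the identity $\frac{dN}{dx}(q)=q\,k'(q)-k(q)$ recorded before Lemma \ref{lemma:incrconv}, together with the monotonicity of $N(1,\cdot)$ in time. In these terms the claim is equivalent to showing that, in scenario $(S_1)$, the count delivered to $x=1$ by the faster ($q^{*}$) characteristic has reached $N(0,t_B)$ no later than the count delivered by the slower one. This is exactly the quantity bounded by Lemmas \ref{lemma:flowdiffmax} and \ref{lemma:flowdiffmin}: chaining them yields
\[
N_1(1,t_{s_2}) - N_1(1,t_{s_1}) \;\leq\; E \;\leq\; \bar q\left(\frac{k(\bar q)}{\bar q}-\frac{k(\bar q^{*})}{\bar q^{*}}\right),
\]
whereas Lemma \ref{lemma:diffcumflow} furnishes the matching count difference $\bar q\big(k(\bar q^{*})/\bar q^{*}-k(\bar q)/\bar q\big)$ between the two relevant downstream instants. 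These two expressions are exact negatives, so combining them pins down the sign of $t_{s_2}-t_{s_1}$ and hence the ordering of the two arrival times.

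The remaining task is to assemble these estimates in the correct scenario and to treat the three wave-propagation configurations of Figure \ref{figure:trajectories}. I would dispatch the rarefaction case $\bar q > q_{B,c}^{+}$ separately from the two shock cases $\bar q \leq q_{B,c}^{+}$, invoking Lemma \ref{lemma:char_curves_tt_2} to reduce the situation in which $s_1$ arrives before vehicle $B$ to the one in which it arrives afterwards, and using the construction conditions $\bar q^{*}\geq\bar q$, $v^{*}(\bar q)\geq v(\bar q^{*})$ and $v^{*}(\bar q)\geq v^{*}(\bar q^{*})$ to certify that the auxiliary characteristic is genuinely the faster one on the relevant density range. Transferring the resulting count inequality back into the temporal gap $t_p - t_{c,B}$ with the correct orientation then relies on the monotonicity and convexity of $dN/dx$ in $q$ from Lemma \ref{lemma:incrconv}.

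I expect the main obstacle to be the shock bookkeeping rather than any individual estimate. Because $q^{*}$ is deliberately non-concave near $q^{-1}(\bar q^{*})$, the characteristic $s_2$ may cross a shock before reaching $x=1$, so the clean identity ``arrival $=$ departure $+\,k'$'' holds only between shocks; the cumulative-count formulation is precisely what lets me bypass this, but I must still check that the solution under $q^{*}$ is unique (as argued when $q^{*}$ was introduced) and that the count differences evaluated in $(S_1)$ and $(S_2)$ refer to the same downstream events. Aligning the triangular-boundary normalization $t_p := t_{pb}=t_{pe}$ coming from Lemma \ref{lemma:peakplateau} with the weak bound $\bar q^{*}\geq\bar q$ from Lemma \ref{lemma:weaklb} is the delicate step that closes the chain.
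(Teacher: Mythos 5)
Your proposal is correct and follows essentially the same route as the paper's own proof: the same reading of the inequality as an arrival-time comparison between the characteristics $s_1$ and $s_2$, the same case split between the rarefaction configuration ($\bar q > q_{B,c}^{+}$) and the shock configuration ($\bar q \leq q_{B,c}^{+}$), and the same chaining of Lemmas \ref{lemma:diffcumflow}, \ref{lemma:flowdiffmax} and \ref{lemma:flowdiffmin} — whose bounds are indeed exact negatives — to force the downstream count difference between the two arrival instants to be at least $E - E + 0 = 0$ in scenario $(S_1)$, with Lemma \ref{lemma:char_curves_tt_2} covering the case where $s_1$ arrives after vehicle $B$. The only cosmetic difference is that the paper settles the rarefaction case by the direct pace comparison $t_{s_1} = t_B + 1/v(\bar q) \geq t_B + 1/v(\bar q^{*}) > t_{s_2}$ instead of count bookkeeping, which is a detail within the same argument.
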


The travel time of vehicle B under the flow-density relation $q^*(k)$ provides a lower bound for the travel time under flow-density relation $q(k)$, given the same boundary condition $q(0,t)$. This central insight of this subsection is summarized through the following lemma. 

\begin{lemma}
\label{lemma:lbdeptime}
Let $t_B$ be the departure time of vehicle B and $q(0,t)$ the flow at the upstream boundary. For two flow-density relationships $q(\cdot)$ and $q^*(\cdot)$, the following holds: The travel time of vehicle B under $q(\cdot)$ is greater than or equal to the travel time under $q^*(\cdot)$.
\end{lemma}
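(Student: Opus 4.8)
The plan is to prove the inequality by comparing vehicle $B$'s \emph{arrival} time at the downstream boundary under the two flow--density relations: since $B$ departs at the fixed time $t_B$, its travel time equals arrival minus $t_B$, so it suffices to show that $B$ arrives no earlier under $q$ than under $q^*$. I would organize the argument around the four scenarios (S$_0$)--(S3) and establish the chain $\tau_B^{(S_0)} = \tau_B^{(S1)} \geq \tau_B^{(S2)} = \tau_B^{(S3)}$, whose endpoints are exactly the travel times under $q$ and under $q^*$ with the common boundary $q(0,t)$. By Lemma~\ref{lemma:peakplateau} I would first reduce to the triangular boundary with $t_p := t_{pb}=t_{pe}$, so that the constructions of $\tilde q$, $\tilde q^*$ and their $\infty$-extensions are unambiguous.

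The two outer equalities $\tau_B^{(S_0)}=\tau_B^{(S1)}$ and $\tau_B^{(S3)}=\tau_B^{(S2)}$ come from the boundary modifications leaving $B$'s travel time untouched. Replacing the inflow on $[t_{c,B}^+,t_B]$ by its average $\bar q$ (respectively on $[t_{c,B}^{+*},t_B]$ by $\bar q^*$) is constructed precisely so that $N(0,t_B)$ is preserved and, by the trajectory classification of Figure~\ref{figure:trajectories}, $B$ rides characteristics carrying the single flow $\bar q$ (respectively $\bar q^*$); appending an infinite inflow for $t>t_B$ only generates backward-propagating congestion that never overtakes $B$, since $B$ is the last vehicle of interest. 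Thus each modification changes the spacetime solution only \emph{behind} $B$'s trajectory and leaves its arrival fixed, which I would certify using Lemma~\ref{lemma:avgflow} together with the convexity of $dN/dx$ in Lemma~\ref{lemma:incrconv}.

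The heart of the proof is the middle inequality $\tau_B^{(S1)}\ge \tau_B^{(S2)}$, which I would reduce to showing that the characteristic $s_1$ (slope set by $q'$ at $t_{c,B}^+$) reaches the downstream boundary no earlier than the characteristic $s_2$ (slope set by $q'^*$ at $t_{c,B}^{+*}$). Lemma~\ref{lemma:char_curves_tt} gives that $s_1$ is the slower of the two, and Lemma~\ref{lemma:char_curves_tt_2} settles the subcase in which $s_1$ arrives after $B$ in (S2). I would then split on the dichotomy of Figure~\ref{figure:trajectories}: when $\bar q > q_{B,c}^+$ a rarefaction fan forms and $B$'s arrival coincides with that of $s_1$, so the conclusion follows directly from Lemma~\ref{lemma:char_curves_tt}; when $\bar q \le q_{B,c}^+$ a shock forms, $B$ travels at the constant speed $v(\bar q)$, and its arrival is pinned by where the downstream cumulative curve meets $N(0,t_B)$. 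In this shock case I would compute the cumulative-flow gap between the $s_2$- and $s_1$-arrivals in (S1) via Lemma~\ref{lemma:diffcumflow}, equal to $\bar q\,(k(\bar q^*)/\bar q^* - k(\bar q)/\bar q)$, and sandwich the auxiliary quantity $E$ between the estimates of Lemmas~\ref{lemma:flowdiffmax} and~\ref{lemma:flowdiffmin}; the resulting two-sided bound forces the discrepancy to the sign that places $s_1$'s arrival no earlier than $s_2$'s, hence, translating through the constant speed $v(\bar q)$, $B$'s arrival in (S1) no earlier than in (S2).

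Finally I would invoke Lemma~\ref{lemma:kw_arr} to certify that the $q^*$-characteristic effectively emanates at $t_p$ with the requisite (not-faster) wave speed, which lets the two downstream cumulative curves in (S1) and (S2) be compared on a common footing and legitimizes using $\bar q^*$ as the transported flow. Assembling the chain then yields that $\tau_B$ under $q$ is at least $\tau_B$ under $q^*$. I expect the \textbf{main obstacle} to be precisely the shock subcase $\bar q \le q_{B,c}^+$: there $B$'s arrival is not the arrival of a single characteristic but is set implicitly by the crossing of the downstream cumulative curve with $N(0,t_B)$, so controlling the sign of the inter-scenario cumulative-flow discrepancy genuinely requires the \emph{tight} two-sided estimate on $E$ rather than either bound in isolation. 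A secondary subtlety is that $q^*$ is deliberately non-concave, so I must lean on the uniqueness of the resulting solution (noted in the construction) to keep the $q$-versus-$q^*$ comparison well posed, and confirm that the averaging and $\infty$-extension steps preserve, rather than merely bound, $B$'s travel time.
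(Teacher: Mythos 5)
Your toolkit is the right one, but the load-bearing step of your assembly is broken. The chain $\tau_B^{(S_0)} = \tau_B^{(S1)} \geq \tau_B^{(S2)} = \tau_B^{(S3)}$ fails at its first link: replacing the inflow on $[t_{c,B}^+, t_B]$ by its average $\bar q$ does \emph{not} leave B's arrival fixed. It preserves $N(0,t_{c,B}^+)$ and $N(0,t_B)$, but it changes the traffic B actually drives through. In (S1), B moves through homogeneous flow $\bar q$ and arrives at $t_B + k(\bar q)/\bar q$, whereas in (S$_0$) --- with shocks excluded by Subsection~\ref{subsec:shocks} --- B's arrival coincides, by the very definition of $t_{c,B}^+$, with the arrival of the straight characteristic $s_1$, i.e.\ $t_{c,B}^+ + k'(q_{B,c}^+)$. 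These differ precisely in your shock subcase $\bar q \le q_{B,c}^+$: as the paper states in the proof of Lemma~\ref{lemma:kw_arr} (case 2, middle configuration of Figure~\ref{figure:trajectories}), the cumulative flow at $s_1$'s arrival in (S1) is then strictly less than $N(0,t_B)$, so ``in $S_1$, vehicle $B$ reaches its destination at a time later than $t_{s_1}$'' while in $S_0$ it arrives exactly at $t_{s_1}$. Hence $\tau_B^{(S1)} \ge \tau_B^{(S_0)}$, possibly strictly --- the relation runs in the \emph{wrong direction}, and from $\tau_B^{(S1)} \ge \tau_B^{(S2)}$ you cannot conclude $\tau_B^{(S_0)} \ge \tau_B^{(S3)}$. (The fourth link $\tau_B^{(S2)} = \tau_B^{(S3)}$ fails for the same reason, though there the true direction $\tau_B^{(S2)} \ge \tau_B^{(S3)}$ happens to be harmless.) Note that the failure occurs exactly in the shock subcase you yourself flagged as the main obstacle; your intuition that the averaging and $\infty$-extension steps must ``preserve, rather than merely bound'' B's travel time was the right worry, and the answer is that they do not preserve it.

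The paper's own proof avoids any scenario-to-scenario transfer of travel times. The modified scenarios (S1), (S2) are auxiliary constructs used \emph{inside} the proof of Lemma~\ref{lemma:kw_arr} (via Lemmas~\ref{lemma:diffcumflow}, \ref{lemma:flowdiffmax} and \ref{lemma:flowdiffmin}) solely to establish the arrival ordering $t_{s_2} \le t_{s_1}$ of the two straight-line characteristics, whose speeds depend only on their transported flows and are therefore unaffected by the boundary modifications. Given that ordering, the lemma follows in two lines: by construction, a cumulative flow of $N(0,t_B)$ is attained at the downstream boundary at the arrival of $s_1$ under $q(\cdot)$ and of $s_2$ under $q^*(\cdot)$, so these arrivals lower-bound B's arrival under the respective relation; with shocks excluded by Subsection~\ref{subsec:shocks} the bounds are attained, and $t_{s_2} \le t_{s_1}$ yields the claim. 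Your draft already contains every ingredient of this argument --- you even invoke Lemma~\ref{lemma:kw_arr} at the end --- so the repair is structural: discard the equality chain and compare $s_1$ and $s_2$ directly in scenarios (S$_0$) and (S3).
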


Combining Lemmas~\ref{lemma:kw_arr} and~\ref{lemma:lbdeptime} directly establishes that 
\(\Delta \tau_f(q_p^+) > 0\) when \(t_{B,c}^+ \leq t_{p,b}\). 
The non-negativity of \(\Delta \tau_f(q_p^+)\) for larger values of \(t_{B,c}^+\) 
remains to be proven, which we address in the following subsection.

\subsubsection{The Lower Bound on $\Delta \tau_f$}
\label{subsec:lb_tf}
Following the arguments from subsection \ref{subsec:shocks}, we may assume without loss of generality that no shockwaves occur in the interval pertinent to the travel times of A and B. In this subsection, we prove that \(\Delta \tau_f\) is greater than or equal to zero, under the assumption that vehicles A and B do not encounter congestion upon arrival at the downstream end. For this purpose, we normalize the flow \( q(x,t) \) by subtracting a constant background flow \( q_b \). Obviously, the concavity of $ q(k) $ is unaffected by the subtraction of this constant value. This normalization ensures that the system satisfies \( q(0,0) = 0 \), which is useful for the subsequent analysis. Similarly, we define the symbols \( q_{p,r}^- = q_p^- - q_b \) and \( q_{p,r}^+ = q_p^+ - q_b \). In the literature, this method has been applied primarily to the analysis of empirical data (see, e.g., \cite{cassidy1995}), with \cite{Newell1999} being, to our knowledge, the only documented application of this technique to the analytical solution of a traffic flow problem. First, we show that \( \Delta \tau_f(q_{p,r}^+) > \Delta \tau_f(q_{p,r}^-) \) holds and that this implies \( \Delta \tau_f(q_{p}^+) > \Delta \tau_f(q_{p}^-) \). Together with the assumption \( \Delta \tau_f (q_p^-) = 0 \), the originally formulated implication follows. With a slight abuse of notation, we denote the flow of the normalized system again by $q(x,t)$, as well as all symbols derived by $q_r$, as long as the distinction between \( q \) and \( q_r \) is not relevant for the proof.

Let \( t_{c,A}^- \) denote the time at which, under peak flow \( q_p^- \), the characteristic reaching \( x = 1 \) at \( t_{1,A}^- \) exits the upstream corridor boundary. We define \( t_{c,A}^+ \), \( t_{c,B}^- \), and \( t_{c,B}^+ \) analogously. Let $\tilde{t}_{c,A} > t_{c,A}$ be defined such that $q(0,\tilde{t}_{c,A}) = q(0,t_{c,A})$. Denote $q(x,t,q_p)$ the flow at location x in time t, when the peak flow is $q_p$, and let $N(x,t,q_p)$ be defined analogously. For $\tilde{t}_{c,A} > t_{c,A}$ with $q(0,\tilde{t}_{c,A}) = q(0,t_{c,A})$, the arrival time difference of characteristics originating at these points, denoted by \( t_{c,A}^\text{arr} (q_p) \) and \( \tilde{t}_{c,A}^\text{arr} (q_p) \), equals $\tilde{t}_{c,A} - t_{c,A}$ for any peak flow, since:
\begin{align*}
t_{c,A}^\text{arr}(q_p) - \tilde{t}_{c,A}^\text{arr}(q_p) &= \tilde{t}_{c,A} + k'(q(0, \tilde{t}_{c,A}, q_p)) - (t_{c,A} + k'(q(0, t_{c,A}, q_p))) \\
&= \tilde{t}_{c,A} + k'(q(0, t_{c,A}, q_p)) - (t_{c,A} + k'(q(0, t_{c,A},q_p))) \\
&= \tilde{t}_{c,A} - t_{c,A}.
\end{align*}
We prove Theorem \ref{theorem:main} by demonstrating that in the transition from \( q_p^- \) to \( q_p^+ \), vehicle A's arrival time shifts forward relative to \( t_{c,A}^\text{arr} \) , while vehicle B's arrival time shifts backward relative to \( \tilde{t}_{c,A}^\text{arr} \). We first prove this for vehicle A:  

\begin{lemma}
\label{lemma:ASlowerThanChar}
The inequality \( t_{c,A}^{\text{arr}}(q_p^+) \geq t_{1,a}^+ \) holds.
\end{lemma}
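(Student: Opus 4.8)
The plan is to combine the explicit formula for the reference wave's arrival time with the fact that the rising branch of the demand is insensitive to the peak realization. Recall from the computation preceding the lemma that a characteristic leaving the upstream end at the fixed time $t_{c,A}$ (fixed by the $q_p^-$ configuration) arrives at $x=1$ at
\[
t_{c,A}^{\text{arr}}(q_p) = t_{c,A} + k'\!\big(q(0,t_{c,A},q_p)\big),
\]
and that, by the definition of $t_{c,A}$, $t_{c,A}^{\text{arr}}(q_p^-)=t_{1,a}^-$. Writing the target $t_{c,A}^{\text{arr}}(q_p^+)\ge t_{1,a}^+$ in these terms, it is equivalent to
\[
k'\!\big(q(0,t_{c,A},q_p^+)\big)-k'\!\big(q(0,t_{c,A},q_p^-)\big)\;\ge\;t_{1,a}^+-t_{1,a}^-,
\]
so the proof reduces to controlling two increments: the change of the reference wave's travel time and the change of A's arrival time as the peak grows.

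First I would locate the reference characteristic on the rising branch. Because $q(k)$ is concave with $q(0)=0$, vehicles are faster than waves on the uncongested branch, $v(k)=q(k)/k\ge q'(k)$; equivalently $\frac{dN}{dx}$ vanishes at $q=0$ and is increasing by Lemma~\ref{lemma:incrconv}, hence nonnegative. Feeding this into the defining identity $N(0,t_{c,A})+\frac{dN}{dx}(q_{c,A})=N(0,t_1)$ gives $N(0,t_{c,A})\le N(0,t_1)$, so $t_{c,A}\le t_1$. Since $t_1\le t_{pe}$ and, after the plateau removal of Lemma~\ref{lemma:peakplateau}, $t_{pe}=t_{pb}=:t_p$, the reference wave departs strictly before the peak, on the ramp $q(0,t)=q_b+a\,t$.

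The decisive point is that this ramp does not depend on the peak: increasing $q_p$ only postpones the end of the ramp-up, leaving $q(0,t)=q_b+a\,t$ unchanged for every $t<t_p$. Therefore $q(0,t_{c,A},q_p^+)=q(0,t_{c,A},q_p^-)$, the left-hand side of the reduced inequality is zero, and it remains to prove $t_{1,a}^+\le t_{1,a}^-$, i.e.\ that raising the peak cannot delay a vehicle that enters on the ramp. I would establish this by tracking the waves A crosses: since $v>q'(k)$, A perpetually overtakes its current wave and so moves onto ever earlier-departing, lower-flow, faster waves, all of which live on the peak-independent ramp. As long as A arrives in free flow---the standing assumption of this subsection, under which the reduction of Subsection~\ref{subsec:shocks} lets us discard shocks---the faster falling-branch waves born after $t_p$ reach $x=1$ only after A, so $N(1,\cdot)$ near A's arrival, and hence the instant at which it attains the level $N(0,t_1)$, is $q_p$-independent. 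This yields $t_{1,a}^+=t_{1,a}^-=t_{c,A}^{\text{arr}}(q_p^+)$, which is more than enough.

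The main obstacle is precisely this last claim: certifying that the peak-dependent part of the solution (plateau plus falling branch) never overtakes A before A exits. This is the analytic content of the free-flow hypothesis, and making it rigorous means excluding the possibility that a larger peak creates a shock---an earlier back-of-queue---that catches A. I expect to handle it by ordering the relevant wave speeds through the concavity of $q'(k)$ (with the latest-wave selection of Lemma~\ref{lemma:shockpriority}) together with the speed ordering of Lemma~\ref{lemma:initspeed}, after which the remaining work is the routine substitution of the ramp flow into $k'(\cdot)$.
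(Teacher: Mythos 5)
Your proposal hinges on the claim that the rising branch of the demand profile is insensitive to the peak realization, so that $q(0,t_{c,A},q_p^+)=q(0,t_{c,A},q_p^-)$, the reference wave's travel time is unchanged, and the lemma collapses to the equality $t_{1,a}^+=t_{1,a}^-$. That is not the regime in which this lemma lives. The first line of the paper's proof is the scaling relation
\[
\frac{q(0,t_{c,A}^-,q_p^+)}{q(0,t_{c,A}^-,q_p^-)}=\frac{q_p^+}{q_p^-},
\]
i.e., in the normalized system of Subsection~\ref{subsec:lb_tf} the whole boundary profile is scaled vertically with the peak, so the flow seen at the \emph{fixed} departure time $t_{c,A}^-$ increases with $q_p$ (the same convention underlies the mirror-time computation immediately before the lemma, where $q(0,\tilde t_{c,A},q_p)=q(0,t_{c,A},q_p)$ is invoked for every $q_p$). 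Under your reading, vehicle A would cross only peak-independent waves, so $\tau_1$ would not depend on $q_p$ at all in the free-flow regime; that contradicts the standing hypothesis of Lemma~\ref{lemma:characterization} that $\tau_1$ is an increasing function of $q_p$, and it would make the quantities $\Delta\Delta\tau_c$ and $\Delta\Delta N_c$ defined after this lemma identically zero, emptying the remainder of Subsection~\ref{subsec:lb_tf}. The lemma is genuinely an inequality between two moving targets: as $q_p$ grows, both the reference wave and vehicle A slow down, and the content is that the wave is delayed at least as much as the vehicle.

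The paper's actual mechanism is precisely what your reduction discards. With $\lambda=q_p^+/q_p^-\ge 1$, both $N(0,t_{c,A},\cdot)$ and the count difference up to $t_1$ scale by $\lambda$, while Lemma~\ref{lemma:incrconv} (convexity of $\frac{dN}{dx}$ in $q$, together with $\frac{dN}{dx}(0)=0$) yields the superhomogeneity $\frac{dN}{dx}(\lambda q)\ge\lambda\,\frac{dN}{dx}(q)$; chaining these gives
\[
N\bigl(1,t_{c,A}^{\text{arr}}(q_p^+),q_p^+\bigr)\;\ge\;\lambda\,N\bigl(1,t_{c,A}^{\text{arr}}(q_p^-),q_p^-\bigr)\;=\;\lambda\,N(0,t_1,q_p^-)\;=\;N(0,t_1,q_p^+),
\]
so the downstream count reaches A's level no later than the wave arrives, i.e.\ $t_{1,a}^+\le t_{c,A}^{\text{arr}}(q_p^+)$. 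None of your steps produce this comparison. Moreover, the step you yourself flag as the main obstacle---certifying that no peak-dependent wave or shock overtakes A before its arrival---is left unproven, so even granting your reading of the boundary condition the argument is incomplete; in the paper's reading its starting premise is false.
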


We prove the previously mentioned lower bound for \(\Delta \tau_f\) through a logically equivalent statement, using the following additional definitions: Consider two kinematic waves \(s_A^-: t \mapsto x\) and \(s_{A,m}^-: t \mapsto x\) as trajectories in space-time. The first wave \(s_A^-\) leaves the upstream end at time \(t_{c,A}^-\). The second wave \(s_{A,m}^-\) starts at time \(t_{A,m}^-\) during the offset of congestion, where at this time the upstream flow is also \(q_{c,A}^-\). $s_A^+$ and $s_{A,m}^+$ are defined analogously.

We show: During the peak flow transition from \( q_p^- \) to \( q_p^+ \), the time gap between kinematic wave arrivals \((s_A^-\) and \(s_A^+)\) and vehicle A's corresponding arrivals shrinks more rapidly than the time gap between wave arrivals \((s_A,m^-\) and \(s_A,m^+)\) and vehicle B's corresponding arrivals.

To demonstrate the validity of this approach, we first prove the following lemma:

\begin{lemma}
\label{lemma:symmetry_mirror_kw}
It holds that
\[
s_{A,m}^{-1}(1) - s_A^{-1}(1) = s_{A,m}^{+,-1}(1) - s_A^{+,-1}(1)
\]
where \( s_{A, {\cdot, m}}^{-1}: x \mapsto t \) represents the inverse function of \( s_{A, {\cdot, m}} \).
\end{lemma}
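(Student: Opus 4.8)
The plan is to reduce the claimed equality to a statement purely about the upstream boundary, exploiting that $s_A$ and $s_{A,m}$ transport the \emph{same} flow on each side. By construction, $s_{A,m}$ carries the identical flow $q_{c,A}$ as $s_A$ (this is the defining property of the mirror wave). Since in this subsection we may assume that no shock waves occur along the relevant trajectories, both $s_A$ and $s_{A,m}$ are straight characteristic lines, and equal transported flow means equal density $k(q_{c,A})$ and hence equal characteristic speed $q'(k(q_{c,A}))$. Two straight lines of equal slope are parallel, so their separation in $t$ is constant in $x$; in particular their arrival-time separation at the downstream end $x=1$ coincides with their departure-time separation at $x=0$. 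Applying this on each side gives
\[
s_{A,m}^{-1}(1)-s_A^{-1}(1)=t_{A,m}^{-}-t_{c,A}^{-}, \qquad s_{A,m}^{+,-1}(1)-s_A^{+,-1}(1)=t_{A,m}^{+}-t_{c,A}^{+}.
\]
Thus it suffices to prove $t_{A,m}^{-}-t_{c,A}^{-}=t_{A,m}^{+}-t_{c,A}^{+}$, i.e.\ that the elapsed time between the up-crossing (on the onset ramp) and the down-crossing (on the offset ramp) of the flow level $q_{c,A}$ is the same for both peak realizations.

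Next I would establish the two ingredients of this invariance separately. For the onset crossing, I would argue that vehicle A's determining flow $q_{c,A}$ does not depend on the peak realization: in free flow the characteristic speed $q'(k)$ never exceeds the vehicle speed $q/k$ (a consequence of concavity together with $q(0)=0$), so A's arrival at $x=1$ is influenced only by upstream data at times no later than $t_{c,A}<t_1$, all of which lie on the ascending ramp that is common to $q_p^-$ and $q_p^+$. Hence $q_{c,A}^-=q_{c,A}^+$ and, the ascending ramp itself being fixed, $t_{c,A}^-=t_{c,A}^+$. For the offset crossing, I would invoke the structure of the trapezoidal demand: the random peak alters only the plateau linking the two ramps, while the descending branch (slope $-b$) is positioned independently of $q_p$ for flow levels strictly below the peak. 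Consequently the down-crossing of the fixed level $q_{c,A}$ occurs at the same instant in both realizations, giving $t_{A,m}^-=t_{A,m}^+$.

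Combining the two ingredients yields $t_{A,m}^{-}-t_{c,A}^{-}=t_{A,m}^{+}-t_{c,A}^{+}$, and the stated equality then follows from the parallel-wave reduction of the first paragraph.

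I expect the main obstacle to be the second ingredient, namely rigorously pinning down the peak-invariance of the down-crossing time $t_{A,m}$. This is where the precise anchoring of the offset ramp and the plateau-removal normalization of Lemma~\ref{lemma:peakplateau} must be handled consistently: one has to verify that, after the plateau has been removed, the descending ramp still crosses every sub-peak level $q_{c,A}$ at a peak-independent instant, rather than at a time that drifts with the apex $t_p$. The onset side, by contrast, is comparatively routine once A's free-flow trajectory is shown to be determined solely by the common ascending ramp.
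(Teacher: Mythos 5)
Your opening reduction is sound and coincides with the paper's first step: within each peak realization, $s_A$ and $s_{A,m}$ transport the same flow, hence have equal characteristic speed and equal travel time across the (normalized) corridor, so the arrival-time gap at $x=1$ equals the departure-time gap at $x=0$. The divergence --- and the gap --- lies in how you close the argument. You try to prove that both departure times are individually peak-invariant, i.e.\ $t_{c,A}^- = t_{c,A}^+$ and $t_{A,m}^- = t_{A,m}^+$. The second claim fails under the paper's boundary-flow specification: the descending branch is anchored at its upper end, departing from the value $q_p$ at the fixed time $t_{pe}$ with slope $-b$, so the down-crossing of a fixed flow level $q_{c,A}$ occurs at $t_{pe} + (q_p - q_{c,A})/b$ and drifts by $(q_p^+ - q_p^-)/b$ when the peak increases. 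Your claim that ``the descending branch is positioned independently of $q_p$ for flow levels strictly below the peak'' is therefore unsubstantiated, and the plateau-removal normalization of Lemma~\ref{lemma:peakplateau} does not reposition the offset ramp in a way that rescues it. You correctly flagged this as the main obstacle; it is in fact fatal to the route you chose.

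The paper's proof sidesteps the issue by never letting the departure times move: $s_A^+$ and $s_{A,m}^+$ are \emph{defined} to depart at the same instants as $s_A$ and $s_{A,m}$ (``by construction, the starting times \dots\ remain the same''), and the only substantive input is that the flows at these two fixed instants remain \emph{equal to each other} under the new peak, so the two waves stay parallel in each scenario and the (fixed) departure gap transfers to both arrival gaps. That equal-flow property holds in the normalized setting of this subsection, where the boundary profile scales with the peak --- the same structural assumption exploited in the proof of Lemma~\ref{lemma:ASlowerThanChar} via $q(0,t_{c,A}^-,q_p^+)/q(0,t_{c,A}^-,q_p^-) = q_p^+/q_p^-$. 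This also undercuts your first ingredient: in the framework the paper actually uses here, $q_{c,A}$ is \emph{not} peak-invariant (the flow at the fixed time $t_{c,A}^-$ scales by $q_p^+/q_p^-$), and Lemma~\ref{lemma:ASlowerThanChar} is non-trivial precisely because vehicle A's arrival shifts relative to that characteristic; if A's trajectory were wholly determined by a peak-independent ascending ramp, as you assert, that lemma would reduce to a trivial equality and the quantities $\Delta\Delta\tau_c$, $\Delta\Delta N_c$ of the surrounding analysis would vanish identically. In short: what must be shown is equality of flows at two \emph{fixed} departure times under both peaks, not peak-invariance of level-crossing times, and your proposal proves neither.
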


For the next step in the proof, we define two new variables:

\begin{itemize}
    \item \( \Delta \Delta \tau_{c} \): This describes the change in the time difference between the arrival of vehicle A and the kinematic wave pair \( s_A^-, s_A^+ \).
    \item \( \Delta \Delta \tau_{c,m} \): This describes the change in the time difference between the arrival of vehicle B and the kinematic wave pair \( s_{A,m}^-, s_{A,m}^+ \).
\end{itemize}

The formal definitions are:

\[
\Delta \Delta \tau_{c} := \left( s_A^{+,-1}(1) - t_A^+ \right) - \left( s_A^{-1}(1) - t_A^- \right),
\]

\[
\Delta \Delta \tau_{c,m} := \left( s_{A,m}^{+,-1}(1) - t_B^- \right) - \left( s_{A,m}^{-1}(1) - t_B^+ \right).
\]

Additionally, we define the corresponding change in the differences in cumulative flows:

\begin{align*}
\Delta \Delta N_{c} &:= 
\big( N(1,s_{A}^{+,-1}(1),q_p^+) - N(0,t_A,q_p^+) \big) \notag \\
&\quad - \big( N(1,s_{A}^{-1}(1),q_p^-) - N(0, t_A, q_p^-) \big) \notag \\
&= N(1,s_{A}^{+,-1}(1),q_p^+) - N(0,t_A,q_p^+),
\end{align*}

\begin{align*}
\Delta \Delta N_{c,m} &:= 
\big( N(1,s_{A,m}^{+,-1}(1),q_p^+) - N(0,t_B,q_p^+) \big) \notag \\
&\quad - \big( N(1,s_{A,m}^{-1}(1),q_p^-) - N(0, t_B, q_p^-) \big).
\end{align*}

The vanishing term in the formula for $\Delta \Delta N_{c}$ results from the fact that the arrival of $s_A$ coincides with the arrival of vehicle $A$ at a peak flow of $q_p^-$. For the change of differences in cumulative flow at the upstream end, we introduce the following notation:

\begin{align*}
\Delta \Delta N_{A,0} &= \left( N(0, t_{c,A}^-, q_p^+) - N(0, t_A, q_p^+) \right) - \left( N(0, t_{c,A}^-, q_p^-) - N(0, t_A, q_p^-) \right), \\
\Delta \Delta N_{B,0} &= \left( N(0, t_{c,A,m}, q_p^+) - N(0, t_B, q_p^+) \right) - \left( N(0, t_{c,A,m}, q_p^-) - N(0, t_B, q_p^-) \right).
\end{align*}

Let $\bar{q}(x;N_s,N_e,q_p)$ denote the average flow at position x measured between two time points: the starting time $t_s$ and ending time $t_e$. At these times, the vehicle counts are $N(x,t_s) = N_s$ and $N(x,t_e) = N_e$ respectively, and the peak flow of this instance is given by $q_p$.

\begin{lemma}
\label{lemma:time_diff_to_char}
The inequality 
\[
\Delta \Delta \tau_{c,m} \geq \Delta \Delta \tau_{c}
\]
holds.
\end{lemma}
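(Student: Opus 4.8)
The plan is to prove the equivalent statement $\Delta \Delta \tau_{c,m} - \Delta \Delta \tau_c \geq 0$ by first passing from arrival-time gaps to cumulative-flow gaps, and then exploiting the mirror symmetry between the onset characteristic $s_A$ and the offset characteristic $s_{A,m}$. Concretely, I would first express each arrival-time increment as the corresponding cumulative-count increment divided by the local discharge flow at the downstream end: since $q = \partial N / \partial t$, a gap in arrival times at $x = 1$ equals the associated gap in $N(1,\cdot)$ divided by the prevailing downstream flow. This reduces the target inequality to a comparison between $\Delta \Delta N_{c,m}$ and $\Delta \Delta N_{c}$, up to the (scenario- and vehicle-dependent) local flow factors appearing in the denominators.

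Next I would transport the downstream count-gaps to the upstream boundary. Using the identity $\frac{dN}{dx} = q\,k'(q) - k(q)$ established above Lemma \ref{lemma:incrconv}, together with the conservation of $N$ along each characteristic, the downstream quantities $\Delta \Delta N_{c}$ and $\Delta \Delta N_{c,m}$ can be rewritten in terms of the upstream quantities $\Delta \Delta N_{A,0}$ and $\Delta \Delta N_{B,0}$ plus correction terms of the form $\frac{dN}{dx}(q_{c,\cdot}^{+}) - \frac{dN}{dx}(q_{c,\cdot}^{-})$. The vanishing of the $q_p^-$ contribution in $\Delta \Delta N_{c}$—a consequence of $s_A$ being constructed to coincide with vehicle $A$ at $q_p^-$—keeps this bookkeeping tractable, and the average upstream flows $\bar{q}(0;\,\cdot)$ over the two departure windows appear naturally here.

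The third step is to cancel the purely wave-dependent contributions using the symmetry results. Lemma \ref{lemma:symmetry_mirror_kw} guarantees $s_{A,m}^{-1}(1) - s_A^{-1}(1) = s_{A,m}^{+,-1}(1) - s_A^{+,-1}(1)$, and the constant-gap identity $t_{c,A}^{\text{arr}}(q_p) - \tilde{t}_{c,A}^{\text{arr}}(q_p) = \tilde{t}_{c,A} - t_{c,A}$ shows that the two characteristic reference frames shift in lockstep as $q_p$ varies. Substituting these relations eliminates the wave terms, so that $\Delta \Delta \tau_{c,m} - \Delta \Delta \tau_c$ collapses to a difference governed solely by the upstream count-gaps $\Delta \Delta N_{B,0}$ and $\Delta \Delta N_{A,0}$ and the associated average flows.

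Finally I would close the argument with the aggressive-driving hypothesis. Vehicle $A$ departs on the rising (onset) branch and vehicle $B$ on the falling (offset) branch, so Lemma \ref{lemma:avgflow} applies with opposite orientations to the two windows, bounding the average downstream flows against the average upstream flows in opposite directions. Combining these bounds with the monotonicity and convexity of $\frac{dN}{dx}$ in $q$ from Lemma \ref{lemma:incrconv}—which is exactly the consequence of $v''(k) \leq 0$ on the free-flow branch—yields $\Delta \Delta N_{c,m} \geq \Delta \Delta N_{c}$ and hence the claimed inequality after dividing by the local flows. I expect the main obstacle to lie precisely in this last step: the target is a comparison of \emph{ratios} rather than of numerators, so one must verify that the differing local discharge flows entering the two denominators do not reverse the inequality. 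Controlling these denominators is where the concavity of the speed-density relation is indispensable, and making that dependence quantitative—rather than merely sign-definite—is the delicate part of the proof.
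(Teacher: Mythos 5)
Your skeleton matches the paper's proof almost step for step: the paper likewise converts arrival-time gaps into cumulative-count gaps divided by average downstream flows, writes $\Delta\Delta N_c = \Delta\Delta N_{A,0} - \Delta\frac{dN}{dx}\big(q(0,t_{c,A}^-)\big)$ and $\Delta\Delta N_{c,m} = \Delta\Delta N_{B,0} - \Delta\frac{dN}{dx}\big(q(0,t_{c,A}^-)\big)$ so that the common wave term cancels in the comparison of numerators (this is where the lockstep shift of the two characteristic frames, cf.\ Lemma~\ref{lemma:symmetry_mirror_kw}, enters), and then exploits the sign facts $\Delta N_{0,A}^+ \leq 0 \leq \Delta N_{0,B}^+$ and $\Delta N_{0,A}^- = 0$, which follow from $A$ departing on the rising branch and $B$ on the falling one. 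Up to that point your plan is sound and is the same route as the paper's.

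The genuine gap is exactly the step you flag and then wave through: ``yields $\Delta\Delta N_{c,m} \geq \Delta\Delta N_{c}$ and hence the claimed inequality after dividing by the local flows.'' A numerator inequality does not survive division here, because each of the four count gaps is divided by its \emph{own} average flow, $\Delta\Delta\tau_{c,m} = \Delta N_m^+/q_m^+ - \Delta N_m^-/q_m^-$ versus $\Delta\Delta\tau_{c} = \Delta N^+/q^+ - \Delta N^-/q^-$, the numerators have mixed signs ($\Delta\Delta\tau_c \leq 0$ by Lemma~\ref{lemma:ASlowerThanChar}, while $\Delta N_{0,B}^\pm \geq 0$), and the denominators differ both across scenarios and across vehicles; in this configuration the ratio inequality can genuinely fail even when the numerator inequality holds. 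The paper closes precisely this hole by adding and subtracting the mixed expression $\Delta N_m^+/q^+ - \Delta N_m^-/q^-$, splitting the difference into a term (II) that is nonnegative thanks to the ratio bound $\Delta N_{0,B}^-/q^- \geq \Delta N_{0,B}^+/q^+$ (the content of Lemma~\ref{lemma:symmetry_aux_inequality}), and a term (I) handled by monotonicity relations between $q_m^\pm$ and $q^\pm$. Relatedly, your appeal to Lemma~\ref{lemma:avgflow} and the convexity of $\frac{dN}{dx}$ at the numerator level is a slight misattribution: in the paper the numerator comparison is pure bookkeeping, since the $\frac{dN}{dx}$ corrections cancel because $s_A$ and $s_{A,m}$ carry the same flow by construction, and concavity instead does its work in Lemma~\ref{lemma:ASlowerThanChar} and in the auxiliary ratio inequality controlling the denominators. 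Without an explicit mechanism of this kind --- which is where the real content of the lemma sits --- your proposal is a correct outline but not a proof.
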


\begin{lemma}
\label{lemma:symmetry_aux_inequality}
The inequality
\[
\frac{N(0,t_{c,A,m}^{-},q_p^+)-N(0,t_B^{-},q_p^+)}{\bar{q}(1, N(0,t_{c,A,m}^{-},q_p^+),N(0,t_B,q_p^+),q_p^+)} \leq \frac{N(0,t_{c,A,m}^{-},q_p^-)-N(0,t_B^{-},q_p^-)}{\bar{q}(1, N(0,t_{c,A,m}^{-},q_p^-),N(0,t_B,q_p^-),q_p^-)}
\]
holds.
\end{lemma}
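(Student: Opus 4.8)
The plan is to first strip the two ratios of their measure-theoretic dressing and read them as differences of \emph{arrival times}. By the definition of the averaged flow $\bar q(1,N_s,N_e,q_p)$ together with vehicle conservation $N(1,t_{\mathrm{arr}})=N(0,t_{\mathrm{dep}})$, the downstream instant at which the cumulative count reaches $N(0,t_{c,A,m}^-,q_p)$ is exactly the arrival of the vehicle released at $t_{c,A,m}^-$, and likewise $N(0,t_B,q_p)$ corresponds to the arrival of vehicle $B$. Writing $M^{\pm}:=N(0,t_B,q_p^{\pm})-N(0,t_{c,A,m}^-,q_p^{\pm})>0$ gives $\bar q^{\pm}=M^{\pm}/(t_e^{\pm}-t_s^{\pm})$ and hence
\[
\frac{\Delta N^{\pm}}{\bar q^{\pm}} = t_s^{\pm}-t_e^{\pm} = (t_{c,A,m}^- - t_B) + \tau(t_{c,A,m}^-,q_p^{\pm}) - \tau(t_B,q_p^{\pm}),
\]
so, after cancelling the fixed term $t_{c,A,m}^- - t_B$, the claimed inequality is \emph{equivalent} to
\[
\tau(t_B,q_p^{+})-\tau(t_B,q_p^{-}) \ \ge\ \tau(t_{c,A,m}^-,q_p^{+})-\tau(t_{c,A,m}^-,q_p^{-}).
\]
In words: raising the peak from $q_p^-$ to $q_p^+$ increases the free-flow travel time of vehicle $B$ by at least as much as it increases the travel time of the vehicle released at $t_{c,A,m}^-$. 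Throughout this step I use the reductions already in force in this regime, namely the triangular-profile reduction of Lemma~\ref{lemma:peakplateau} and the absence of shocks justified in Subsection~\ref{subsec:shocks}, so that both trajectories live in the free-flow rarefaction fan generated by the offset ramp.

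Second, I would exploit the rigid structure of the offset ramp. Because both offset branches have the same slope $-b$, the boundary datum for $q_p^+$ is, on the offset, a pure time-translate of the datum for $q_p^-$: $q(0,t,q_p^+)=q(0,t-\delta,q_p^-)$ with $\delta=(q_p^+-q_p^-)(1+b/a)/b$. Since the downstream bottleneck is inactive here, the entire free-flow solution (and every vehicle trajectory whose characteristics emanate from the offset) translates by $\delta$ as well, giving the identity $\tau(t_d,q_p^+)=\tau(t_d-\delta,q_p^-)$ for the relevant departures. The target then collapses onto a statement about the single travel-time profile $t\mapsto\tau(t,q_p^-)$ over the offset: writing $F(t):=\tau(t-\delta,q_p^-)-\tau(t,q_p^-)\ge 0$ for the travel-time drop over a window of length $\delta$ ending at $t$, I must show $F(t_B)\ge F(t_{c,A,m}^-)$, i.e.\ that $F$ is monotone in the correct direction between the two reference departures. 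For infinitesimal increments this is a curvature condition on $\tau(\cdot,q_p^-)$ along the offset, and it is precisely here that the aggressive-driving hypothesis enters, through the monotone convexity of $\tfrac{dN}{dx}$ in $q$ established in Lemma~\ref{lemma:incrconv} (the analytic encoding of $v''\le 0$) together with the downstream-versus-upstream averaged-flow comparison of Lemma~\ref{lemma:avgflow}(2) valid on the offset.

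Third, to place the two reference departures and to fix the sign of the comparison, I would invoke the mirror construction: $t_{c,A,m}^-$ is the offset instant carrying the same flow as vehicle $A$'s characteristic, so the symmetry relation of Lemma~\ref{lemma:symmetry_mirror_kw} together with Lemma~\ref{lemma:ASlowerThanChar} locates $t_{c,A,m}^-$ relative to $t_B$ and guarantees that both lie on the portion of the offset fan where the curvature of $\tau$ has the required sign. Combining this localization with the curvature estimate from the second step yields $F(t_B)\ge F(t_{c,A,m}^-)$ and hence the lemma.

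The hard part is the second step: determining and rigorously proving the correct curvature of the free-flow travel-time profile over the offset. The translation identity reduces everything to this single fact, but the sign is delicate because $\tau(\cdot,q_p^-)$ is decreasing and relaxes toward the free-flow travel time, so a naive asymptotic guess gives the wrong curvature; only the aggressive-driving assumption, funnelled through the convexity of $\tfrac{dN}{dx}(q)$ in Lemma~\ref{lemma:incrconv}, tips the balance in the direction needed for $F$ to be monotone between $t_{c,A,m}^-$ and $t_B$. A secondary technical obstacle is that the clean identity $\tau(t_d,q_p^+)=\tau(t_d-\delta,q_p^-)$ holds only while the relevant trajectory stays inside the offset fan; for large $q_p^+-q_p^-$ one must either pass to the differential version and integrate in $q_p$, or verify directly (using the no-shock reduction and Lemma~\ref{lemma:avgflow}) that the peak/plateau interaction does not spoil the monotonicity of $F$.
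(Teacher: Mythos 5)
Your opening reduction is correct and genuinely clarifying: since $\bar q(1,N_s,N_e,q_p)$ is by definition the downstream count divided by the elapsed time between the two count-level crossings, each ratio in the lemma is exactly an arrival-time gap, and the claim becomes
$\tau(t_{c,A,m}^-,q_p^+)-\tau(t_{c,A,m}^-,q_p^-)\le \tau(t_B,q_p^+)-\tau(t_B,q_p^-)$
(for the stated direction you need $N(0,t_{c,A,m}^-,q_p^\pm)\ge N(0,t_B,q_p^\pm)$, i.e.\ $t_{c,A,m}^-\ge t_B$, which is consistent with the sign conventions $\Delta N_{0,B}^\pm\ge 0$ used later in the paper). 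For calibration: the paper does not actually contain an independent argument to compare yours against --- the block headed ``Proof of Lemma~\ref{lemma:symmetry_aux_inequality}'' is a verbatim duplicate of the proof of Lemma~\ref{lemma:time_diff_to_char}, and that proof in turn invokes precisely your reduced inequality ``by assumption.'' So you are attacking a statement the paper effectively leaves unproved, which makes the completeness of your own argument the only thing that matters.

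And there the proposal has a genuine gap, in two places. First, the translation identity $\tau(t_d,q_p^+)=\tau(t_d-\delta,q_p^-)$ on which step two rests is false in the relevant regime: because $v(q)=q/k(q)\ge q'(k)$ under concavity, a vehicle departing during the offset overtakes kinematic waves launched earlier, so its trajectory is governed by characteristics emanating from the plateau and even the onset, where the $q_p^+$ datum is \emph{not} a time-translate of the $q_p^-$ datum (the plateau level and duration both change with $q_p$, and the onset ramp is anchored at $t=0$); vehicle $B$ departs at $t_B\ge t_{pe}$, i.e.\ exactly where this contamination is strongest, and your closing caveat acknowledges the problem without repairing it. (Minor: with $t_{pe}$ fixed, the offset translate is $\delta=(q_p^+-q_p^-)/b$, not $(q_p^+-q_p^-)(1+b/a)/b$.) Second, the decisive claim --- monotonicity of $F(t)=\tau(t-\delta,q_p^-)-\tau(t,q_p^-)$ between $t_B$ and $t_{c,A,m}^-$ --- is asserted, not derived: you point at Lemma~\ref{lemma:incrconv} and Lemma~\ref{lemma:avgflow}, but neither delivers the second-order (curvature) comparison of the travel-time profile along the offset, and part (2) of Lemma~\ref{lemma:avgflow} requires both defining characteristics to depart after $t_{pe}$, which you have not verified for the characteristics actually crossed by the two trajectories. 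You yourself label this step ``the hard part'' and leave it open; since it is the entire content of the lemma once your (correct) first reduction is made, the proposal as written does not prove the statement.
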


\begin{lemma}
\label{lemma:avgflowaux}
The inequality
\[
    \frac{
    (t_{cB}^- + k'(q(0, t_{cB}, q^-))) - (t_{avg}^- + k'(q(0, t_{\text{avg}^-}, q^-)))
    }{
    (t_{cB}^- + k'(q(0, t_{cB}, q^-))) - (t_{cAm}^- + k'(q(0, t_{cAm}^-, q^-)))
    }
\]
\[
    \leq
\]
\[
    \frac{
    (t_{cB}^+ + k'(q(0, t_{cB}, q^+))) - (t_{\text{avg}}^+ + k'(q(0, t_{\text{avg}}^+, q^+)))
    }{
    (t_{cB}^+ + k'(q(0, t_{cB}, q^+))) - (t_{cAm}^- + k'(q(0, t_{cAm}^-, q^+)))
    }
\]
holds.
\end{lemma}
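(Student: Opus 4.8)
The plan is to read every term of the form $t + k'(q(0,t,q_p))$ as a downstream arrival time. Since the corridor has unit length and a characteristic carrying flow $q$ advances at speed $q'(k)=1/k'(q)$, the quantity $A(t,q_p):=t+k'(q(0,t,q_p))$ is exactly the time at which the characteristic leaving the upstream boundary at $t$ reaches $x=1$ under peak flow $q_p$. Both numerators and both denominators in the statement are differences of such arrival times, so the claim is precisely that
\[
R(q_p):=\frac{A(t_{c,B},q_p)-A(t_{\mathrm{avg}},q_p)}{A(t_{c,B},q_p)-A(t_{c,A,m}^{-},q_p)}
\]
is non-decreasing as $q_p$ increases from $q^{-}$ to $q^{+}$. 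By the reduction in Subsection~\ref{subsec:shocks} we may assume no shock waves in the relevant window, so vehicle count is conserved along each characteristic. Consequently every arrival-time difference $A(t',q_p)-A(t'',q_p)$ equals the cumulative-count gap $N(0,t',q_p)-N(0,t'',q_p)$ divided by the average downstream flow over the matching arrival interval, which is the object supplied by Lemma~\ref{lemma:avgflow}. This converts $R$ into a quotient of two expressions of the form (count gap)/(average flow).

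First I would dispose of the denominator, whose arrival-gap runs to the mirror characteristic $t_{c,A,m}^{-}$. This is exactly the quantity estimated by Lemma~\ref{lemma:symmetry_aux_inequality}, which bounds how the count gap to the mirror characteristic, divided by the downstream average flow, changes between $q^{-}$ and $q^{+}$. Combined with the onset/offset symmetry of Lemma~\ref{lemma:symmetry_mirror_kw}, which locks the position of the mirror characteristic relative to its onset-branch partner across the two peak scenarios, this determines the behaviour of the denominator of $R$.

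Next I would treat the numerator. Interpreting $t_{\mathrm{avg}}$ as the departure time at which the boundary flow equals the interval-average flow $\bar q$ of Lemma~\ref{lemma:symmetry_aux_inequality}, and rewriting its arrival-gap in the same count-over-average-flow form, the decisive input becomes the convexity of the traversal time $k'(q)$ in $q$, i.e. $k^{(3)}(q)\ge0$ --- which is precisely the aggressive-driving hypothesis $v''(k)\le0$ on the free-flow branch. Convexity of $k'$ means that, by Jensen's inequality, the average of $k'$ over an arrival interval exceeds $k'$ of the average flow, and this gap widens as $q_p$ grows and pushes the relevant flows toward capacity. The plan is then to cross-multiply the two ratios, reducing $R(q^{-})\le R(q^{+})$ to a single inequality among the four (count gap)/(average flow) quantities (two nested arrival intervals $\times$ two peak scenarios), and to sign that inequality using the convexity of $k'$ together with Lemmas~\ref{lemma:symmetry_mirror_kw} and~\ref{lemma:symmetry_aux_inequality}.

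The hard part will be this cross-multiplied reduction. Both the numerator and the denominator vary with $q_p$, through the boundary flows $q(0,\cdot,q_p)$ and through the nonlinear map $k'$, and a priori these two effects need not push the ratio in one direction. The substance of the argument is to show that the convexity contribution dominates, so that the numerator's relative increase from $q^{-}$ to $q^{+}$ is at least the denominator's; this is exactly where $k^{(3)}(q)\ge0$ is indispensable, since for a linear (non-aggressive) fundamental diagram the spreading vanishes and one obtains equality rather than the required inequality. I would organise the estimate as a comparison of the two average flows over the nested arrival intervals, using the Jensen bound on the inner (numerator) interval and matching it against the denominator estimate from Lemma~\ref{lemma:symmetry_aux_inequality}.
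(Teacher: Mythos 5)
Your reading of the terms $t + k'(q(0,t,q_p))$ as downstream arrival times is correct, and your identification of the two key moves --- cross-multiplying the two ratios and invoking the convexity of $k'$ in $q$ (the aggressive-driving hypothesis) --- matches the skeleton of the paper's argument. But your proposal stops exactly where the proof has to happen. The paper's proof stays entirely at the level of departure times and $k'$ values: writing $f(x)=x+k'(x)$, $g(x)=x-k'(x)$, with $x_1=t_{c,A,m}^-$, $x_2=t_{c,B}^-$, $x_a=\tfrac{x_1+x_2}{2}$ and their $+$ counterparts, it reduces the claim to
\[
\Delta := \bigl[f(x_2)-f(x_a)\bigr]\bigl[f(x_1^+)-g(x_2^+)\bigr]-\bigl[f(x_2^+)-f(x_a)\bigr]\bigl[f(x_1)-g(x_2)\bigr]\le 0,
\]
and then signs $\Delta$ by a pairwise factor comparison, using only $x_1^+\ge x_1$, $x_2^+\ge x_2$ and the fact that $k'$ is convex and nondecreasing (so $k'(x_2^+)-k'(x_a^+)\ge k'(x_2)-k'(x_a)$ and $k'(x_1^+)+k'(x_2^+)\ge k'(x_1)+k'(x_2)$). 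No cumulative counts, no average flows, no Jensen step appears. You, by contrast, declare the cross-multiplied inequality to be ``the hard part'' and then only name ingredients --- Lemma \ref{lemma:avgflow}, Lemma \ref{lemma:symmetry_aux_inequality}, a Jensen bound --- without ever exhibiting the single inequality among the four converted quantities or signing it. Asserting that ``the convexity contribution dominates'' is the entire content of the lemma; leaving it as an organizational plan means the proof is missing, not merely compressed.

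Two further points make your chosen detour actively harder than the direct route. First, converting each arrival-time gap into a count-gap-over-average-flow quotient introduces four \emph{different} averaging intervals (two nested arrival windows in each of two peak scenarios), so after cross-multiplication you must compare products of averages over non-matching intervals --- precisely the entanglement the paper's departure-time formulation avoids. Second, Lemma \ref{lemma:symmetry_aux_inequality} is not the estimate your denominator needs: it bounds the gap from the mirror characteristic $t_{c,A,m}^-$ to vehicle $B$'s departure count $N(0,t_B,\cdot)$, whereas the denominator here is the gap between the characteristic from $t_{c,B}$ and the mirror characteristic; these do not coincide, and no bridge between them is offered. Your interpretation of $t_{\mathrm{avg}}$ as the time where the boundary flow equals the interval-average flow happens to agree with the paper's midpoint $x_a=\tfrac{x_1+x_2}{2}$ on the linear ramp of the trapezoid, so that identification is harmless --- but the core comparison still has to be carried out, and the paper shows it can be done with nothing more than monotonicity of the departure times and convexity of $k'$, which is where you should redirect your effort.
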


In the next step, we demonstrate that the statements previously derived for the normalized upstream boundary flow, \( q_r(0,t) = q(0,t) - q_b \), also hold for the original upstream boundary flow \( q(0,t) \). To this end, we define the residual travel time incurred by the de-normalization of the boundary flow,
\begin{equation}
\label{eq:trti_denorm}
\Lambda(q, \tau) = \left[ k'(q + q_b) - k'(q) - \frac{\frac{dN}{dx} (q + q_b) - \frac{dN}{dx} (q) - q_b \cdot (k'(q) - \tau)}{q + q_b} + \tau  \right].
\end{equation}
First, we show the following structural properties of the auxillary function $\Lambda$:
\begin{lemma}
\label{lemma:lambda_incr_pre}
$\Lambda(q)$ is an increasing function in $q$.
\end{lemma}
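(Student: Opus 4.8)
The plan is to treat $\tau$ as a fixed parameter and prove directly that $\partial\Lambda/\partial q \ge 0$, exploiting the identity $\frac{dN}{dx}(q) = q\,k'(q) - k(q)$ established earlier together with the monotonicity and convexity of $\frac{dN}{dx}$ supplied by Lemma~\ref{lemma:incrconv}. Writing $g(q) := \frac{dN}{dx}(q)$, the engine behind the whole computation is the elementary observation
\[
g'(q) = k'(q) + q\,k''(q) - k'(q) = q\,k''(q),
\]
which I would record first, since it is what produces the cancellations below.

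Next I would differentiate the numerator $N_u(q) := g(q+q_b) - g(q) - q_b\bigl(k'(q) - \tau\bigr)$ of the fractional term of $\Lambda$. Inserting $g'(q) = q\,k''(q)$ and recombining the $q\,k''(q)$ and $q_b\,k''(q)$ pieces yields $N_u'(q) = (q+q_b)\bigl(k''(q+q_b) - k''(q)\bigr) = (q+q_b)\,A'(q)$, where $A(q) := k'(q+q_b) - k'(q)$ is exactly the leading term of $\Lambda$. The quotient rule then gives $\frac{d}{dq}\frac{N_u}{q+q_b} = A'(q) - \frac{N_u}{(q+q_b)^2}$, so the $A'(q)$ coming from the leading term cancels and
\[
\frac{\partial \Lambda}{\partial q} = \frac{N_u(q)}{(q+q_b)^2} = \frac{g(q+q_b) - g(q) - q_b\bigl(k'(q) - \tau\bigr)}{(q+q_b)^2}.
\]
The claim thus reduces entirely to the sign of the numerator $N_u(q)$.

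To control $N_u$, I would rewrite it, using $g = q\,k' - k$, as $(q+q_b)\bigl(k'(q+q_b) - k'(q)\bigr) - \bigl(k(q+q_b) - k(q)\bigr) + q_b\tau$. Monotonicity of $g$ (Lemma~\ref{lemma:incrconv}) forces $k'' \ge 0$, so $k'$ is increasing and $k(q+q_b) - k(q) = \int_q^{q+q_b} k'(s)\,ds \le q_b\,k'(q+q_b)$. Substituting this bound collapses $N_u$ to at least
\[
q\bigl(k'(q+q_b) - k'(q)\bigr) + q_b\bigl(\tau - k'(q)\bigr),
\]
whose first summand is non-negative because $k'$ is increasing and $q \ge 0$.

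The main obstacle is therefore the second summand: establishing $\tau \ge k'(q)$ at the flow $q$ at which $\Lambda$ is evaluated. This is where the physical content enters, since $k'(q)$ is precisely the traversal time of a kinematic wave carrying flow $q$ across the unit-length corridor, and the (normalized) travel time $\tau$ of a vehicle riding into such a wave cannot fall below it. I would discharge this by invoking the interpretation of $\tau$ as a realized travel time in the normalized system, giving $\tau - k'(q) \ge 0$ at the relevant evaluation point; combined with $k'' \ge 0$ this yields $N_u(q) \ge 0$ and hence $\Lambda'(q) \ge 0$. I expect the delicate part to be the bookkeeping of which $q$ and $\tau$ are coupled in the surrounding argument, rather than any further analytic inequality.
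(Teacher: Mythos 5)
Your reduction is correct and, in fact, tighter than the paper's: recording $g'(q)=q\,k''(q)$, differentiating the numerator $N_u(q)=g(q+q_b)-g(q)-q_b\bigl(k'(q)-\tau\bigr)$, and letting the $k'(q+q_b)-k'(q)$ derivative cancel gives exactly $\Lambda'(q)=N_u(q)/(q+q_b)^2$ with $N_u(q)=(q+q_b)\bigl(k'(q+q_b)-k'(q)\bigr)-\bigl(k(q+q_b)-k(q)\bigr)+q_b\tau$. The paper reaches the same numerator by a different bookkeeping route — it first collapses $\Lambda$ algebraically to $\tau+\frac{k(q+q_b)-k(q)-q_b\tau}{q+q_b}$ and then applies the quotient rule — so up to this point the two arguments are equivalent, and yours is the more transparent of the two.

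The genuine gap is your final step, the inequality $\tau\ge k'(q)$. The physical justification you give is backwards: for a concave fundamental diagram with $q(0)=0$, the chord slope dominates the tangent slope, so $v(q)=q/k(q)\ge q'(k)$ — this is precisely the inequality invoked in the paper's proof of Lemma~\ref{lemma:accdec} — meaning a vehicle traverses the unit corridor \emph{no slower} than the kinematic wave, hence $\tau\le k'(q)$, not the reverse. The coupling at the point where $\Lambda$ is actually used confirms this direction: in the proof of Lemma~\ref{lemma:denorm_pipeline} the identity $t_A-t_{c,A}^+=k'(q_A)-\tau_{A,r}^+$ with $t_{c,A}^+\le t_A$ forces $\tau\le k'(q)$, and the paper's own proof of the present lemma explicitly states the assumption $k'(q)\ge\tau$. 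Your lower bound $N_u\ge q\bigl(k'(q+q_b)-k'(q)\bigr)+q_b\bigl(\tau-k'(q)\bigr)$ therefore leaves the second summand with the wrong sign in the relevant regime, and no sharper convexity estimate can rescue it: for a linear branch ($k''\equiv 0$, as in the triangular diagram the paper itself simulates with) one computes exactly $\Lambda(q)=\tau+q_b\,(k'-\tau)/(q+q_b)$, which is \emph{strictly decreasing} in $q$ whenever $k'>\tau$. So the sign of $\tau-k'(q)$ is the entire content of the lemma at this stage — the "delicate bookkeeping" your last paragraph defers is the whole problem, and the proposal does not close it. (For fairness: the paper's own proof is also defective here; its asserted inequality $(q+q_b)\bigl[k'(q+q_b)-k'(q)\bigr]\ge k(q+q_b)-k(q)-q_b\tau$ fails in the same linear example under $k'(q)>\tau$, so the precise hypothesis under which the lemma holds needs to be pinned down — but your argument, as written, establishes the claim only under the assumption $\tau\ge k'(q)$, which is the opposite of the regime in which the lemma is applied.)
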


\begin{lemma}
\label{lemma:lambda_incr}
The inequality $\Lambda (q_A,\tau_A) \leq \Lambda (q_B,\tau_B)$ holds for all $(q_A,\tau_A) \leq (q_B,\tau_B)$.
\end{lemma}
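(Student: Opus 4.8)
The plan is to prove the bivariate monotonicity of $\Lambda$ by splitting it into two separate one-dimensional monotonicity statements adapted to the product order $(q_A,\tau_A)\le(q_B,\tau_B)$. Concretely, I would interpolate through the corner point $(q_B,\tau_A)$ and write
\[
\Lambda(q_A,\tau_A)\;\le\;\Lambda(q_B,\tau_A)\;\le\;\Lambda(q_B,\tau_B),
\]
where the first inequality is monotonicity in the $q$-argument at the fixed value $\tau=\tau_A$, and the second is monotonicity in the $\tau$-argument at the fixed value $q=q_B$. The first inequality is exactly the content of Lemma~\ref{lemma:lambda_incr_pre}, provided its proof is valid for an arbitrary fixed $\tau$ rather than only a physically distinguished one; I would confirm this at the outset, since $\Lambda$ in \eqref{eq:trti_denorm} depends on $\tau$ only through elementary affine terms and the claimed $q$-monotonicity should not be sensitive to the particular value of $\tau$.

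The remaining, and genuinely new, step is the monotonicity in $\tau$, which I would obtain by differentiating the explicit expression \eqref{eq:trti_denorm} with respect to $\tau$ at fixed $q$. All of the terms $k'(q+q_b)$, $k'(q)$, $\frac{dN}{dx}(q+q_b)$ and $\frac{dN}{dx}(q)$ are independent of $\tau$, so the only $\tau$-dependent contributions come from the summand $\tfrac{q_b\tau}{q+q_b}$ inside the fraction and from the trailing $+\tau$. A short computation then gives
\[
\frac{\partial\Lambda}{\partial\tau}\;=\;-\frac{q_b}{q+q_b}+1\;=\;\frac{q}{q+q_b}\;\ge\;0,
\]
using only $q\ge 0$ and $q+q_b>0$. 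Hence $\Lambda$ is non-decreasing in $\tau$ for every fixed $q$, and in particular $\Lambda(q_B,\tau_A)\le\Lambda(q_B,\tau_B)$ since $\tau_A\le\tau_B$. Chaining the two inequalities yields the claim.

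The argument is essentially routine once Lemma~\ref{lemma:lambda_incr_pre} is available, so the main obstacle is not analytical depth but a matter of scope: I must ensure that Lemma~\ref{lemma:lambda_incr_pre} has indeed been established for the $q$-slice at the specific height $\tau=\tau_A$ appearing in the interpolation, rather than only along the physical trajectory where $\tau$ co-varies with $q$. If that exact version is not directly available, I would instead argue via the gradient: $\Lambda$ is $C^1$ and $\nabla\Lambda=\left(\partial_q\Lambda,\;\tfrac{q}{q+q_b}\right)$ has non-negative components, so $\Lambda$ increases along any coordinate-monotone path from $(q_A,\tau_A)$ to $(q_B,\tau_B)$. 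This route reduces the entire lemma to the sign of $\partial_q\Lambda$ supplied by Lemma~\ref{lemma:lambda_incr_pre} together with the elementary $\tau$-derivative computed above.
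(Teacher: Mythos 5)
Your proposal is correct and follows essentially the same route as the paper: monotonicity in $q$ from Lemma~\ref{lemma:lambda_incr_pre}, monotonicity in $\tau$ by direct differentiation of \eqref{eq:trti_denorm}, and chaining through the corner point $(q_B,\tau_A)$. As a minor aside, your value $\partial\Lambda/\partial\tau = q/(q+q_b)$ is in fact the correct derivative (the paper states $q_b/(q+q_b)$, an inconsequential slip since both expressions are nonnegative), so your argument stands as written.
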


\begin{lemma}
\label{lemma:arrspeeds}
Assume that \( \Delta \tau_f(q_p^+) \geq 0 \) holds for boundary flow \( q_r(0,t) \) and peak flow \( q_p^+ \). Then, the inequality $q_{c,B}^+ \geq q_{c,A}^+$ holds, that is, the arrival speed of vehicle A exceeds that of vehicle B.
\end{lemma}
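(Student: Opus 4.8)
The plan is to work with the free-flow travel-time representation $\tau_i = t_{c,i} + k'(q_{c,i}) - t_i$ for $i \in \{A,B\}$, where $q(0,t_{c,i}) = q_{c,i}$ and, by the arrival matching, $t_{c,A}$ lies on the onset branch and $t_{c,B}$ on the offset branch of $q(0,\cdot)$. Since on the uncongested branch the speed $v(q)$ is strictly decreasing in $q$, the target inequality $q_{c,B}^+ \ge q_{c,A}^+$ is equivalent to the asserted speed ordering $v(q_{c,A}^+) \ge v(q_{c,B}^+)$, so it suffices to compare the two arrival flows. I would first rewrite the hypothesis $\Delta\tau_f(q_p^+) = \tau_2 - \tau_1 \ge 0$ as
\[
k'(q_{c,B}^+) - k'(q_{c,A}^+) \;\ge\; (t_2 - t_{c,B}^+) - (t_1 - t_{c,A}^+),
\]
and note that, because $k'$ is increasing on the uncongested branch, it would be enough to show the right-hand side is non-negative; the role of the hypothesis is to supply exactly the missing slack when it is not.

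I would then argue by contradiction, assuming $q_{c,B}^+ < q_{c,A}^+$. This makes the left-hand side above strictly negative, forcing $t_2 - t_{c,B}^+ < t_1 - t_{c,A}^+$, i.e. vehicle $B$'s lead time over its defining characteristic is strictly smaller than $A$'s. The next step is to express both lead times through the cumulative-count identities $N(0,t_i) - N(0,t_{c,i}) = \tfrac{dN}{dx}(q_{c,i})$. On the affine onset and offset branches these integrate (using the normalization $q(0,0)=0$) to
\[
q(0,t_1)^2 - (q_{c,A}^+)^2 = 2a\,\tfrac{dN}{dx}(q_{c,A}^+), \qquad (q_{c,B}^+)^2 - q(0,t_2)^2 = 2b\,\tfrac{dN}{dx}(q_{c,B}^+),
\]
which turn the lead-time inequality into a relation between the upstream drifts $q(0,t_1) - q_{c,A}^+$ and $q_{c,B}^+ - q(0,t_2)$, weighted by the onset and offset rates $a$ and $b$.

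Finally I would feed in the two remaining structural facts: Lemma~\ref{lemma:initspeed}, which orders the departure flows as $q(0,t_1) \ge q(0,t_2)$, and the monotonicity and convexity of $\tfrac{dN}{dx}$ in $q$ from Lemma~\ref{lemma:incrconv} (equivalently $k''' \ge 0$, the analytic form of aggressive driving). Under the contradiction hypothesis the drift $\tfrac{dN}{dx}(q_{c,B}^+)$ and its denominator $q_{c,B}^+ + q(0,t_2)$ are both smaller than the corresponding quantities for $A$, so the lead-time inequality and the $k'$-gap inequality pull in opposite directions; the resolution is to bound the average flow on the offset interval via Lemma~\ref{lemma:avgflow} so that the convexity of $k'$ and $\tfrac{dN}{dx}$ can be converted into a genuine (rather than inconclusive) inequality, yielding the contradiction. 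In words, aggressive driving forces the arrival flow of $B$ to drift upward toward the peak by at least as much as the onset drift pulls $q_{c,A}^+$ downward, which is incompatible with $q_{c,B}^+ < q_{c,A}^+$ once $\Delta\tau_f(q_p^+)\ge 0$ is imposed.

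The main obstacle is precisely this quantitative sign-chasing. The statement degenerates at the affine (triangular) boundary, where both drifts vanish and $q_{c,B}^+ = q(0,t_2) \le q(0,t_1) = q_{c,A}^+$, so the proof cannot proceed by soft monotonicity alone: it must extract genuine curvature from $v'' \le 0$ and show that this curvature dominates the onset/offset rate asymmetry between $a$ and $b$. Keeping the weighting by $a$ and $b$ correct while turning the convexity supplied by Lemmas~\ref{lemma:incrconv} and~\ref{lemma:avgflow} into the sharp inequality is where I expect the bulk of the technical effort to lie.
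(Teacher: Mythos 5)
Your proposal takes a genuinely different route from the paper, but it contains a real gap: the decisive step is never carried out. After reducing the claim to a comparison of the lead times \(t_i - t_{c,i}\) and writing the quadratic identities on the affine onset/offset branches, you defer the actual contradiction to ``bounding the average flow on the offset interval via Lemma~\ref{lemma:avgflow} so that the convexity can be converted into a genuine inequality'' --- which is precisely the content of the lemma, and you concede yourself that this sign-chasing is where the bulk of the work lies. Worse, your own degenerate-case remark shows that the strict-contradiction scheme cannot close as stated: for a triangular fundamental diagram (admissible, since aggressive driving only requires \(v''\le 0\), and in fact the diagram used in Section~\ref{section:numerical}) both drifts vanish, \(\Delta\tau_f(q_p^+)=0\), and Lemma~\ref{lemma:initspeed} gives \(q_{c,B}^+=q(0,t_2)\le q(0,t_1)=q_{c,A}^+\). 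So the hypothesis \(q_{c,B}^+<q_{c,A}^+\) is consistent with everything you assume, and there is no curvature to ``extract'' because the curvature may be identically zero; only the \emph{speed} ordering \(v_A(1)\ge v_B(1)\) survives there, with equality. A contradiction argument whose resolution requires strictly positive curvature fails on this admissible family of instances.

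The missing idea is Lemma~\ref{lemma:accdec}, and it directly refutes your closing claim that ``the proof cannot proceed by soft monotonicity alone'' --- the paper's proof is exactly a soft monotonicity argument. By Lemma~\ref{lemma:initspeed}, vehicle \(B\) departs faster than \(A\); the hypothesis \(\Delta\tau_f(q_p^+)\ge 0\) then forces \(v_A\ge v_B\) at some position of the common corridor, since if \(B\) were pointwise faster everywhere its travel time would be strictly smaller; and by Lemma~\ref{lemma:accdec}, \(v_A\) is nondecreasing while \(v_B\) is nonincreasing along their entire trajectories, so once the ordering flips it persists to the arrival location \(x=1\), giving the arrival-speed ordering and hence \(q_{c,B}^+\ge q_{c,A}^+\) on the uncongested branch. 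Note that Lemma~\ref{lemma:accdec} rests only on the convexity of \(k(q)\) (via \(v(q)=q/k(q)\ge dq/dk\)) and the monotone structure of the boundary flow before and after the peak --- not on the aggressive-driving curvature you are trying to quantify --- and it supplies for free exactly the single-crossing structure that your quadratic bookkeeping in \(a\) and \(b\) is missing. To salvage your computation you would at minimum need to import that monotone-speed-profile fact and restate the target as a speed rather than flow inequality, so as to absorb the degenerate affine case.
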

Next, we deomonstrate the following rule on the acceleration behavior of the respective vehicles under the given conditions:
\begin{lemma}
\label{lemma:accdec}
Assume that \( \Delta \tau (q_p^+) \geq 0 \) holds for boundary flow \( q_r(0,t) \) and peak flow \( q_p^+ \). Denote $v_A(t)$ the speed of vehicle $A$ at time $t$, and $v_B(t)$ the speed of vehicle $B$ at time $t$. Then, the speed of A is increasing along its entire trajectory, while the speed of B is decreasing along its entire trajectory, i.e. $\frac{dv_A}{dt}(t) \geq 0$, $\frac{dv_B}{dt}(t) \leq 0$ for all $t$.
\end{lemma}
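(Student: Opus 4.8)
The plan is to reduce both claims to the sign of the spatial density gradient $k_x$ seen by each vehicle, exploiting the fact that in free flow every vehicle travels faster than the kinematic waves it meets. Along a trajectory $X(t)$ with $\dot X = v(k) = q(k)/k$, differentiating the local density gives $\frac{d}{dt}k(X(t),t) = k_x\,\dot X + k_t$, and substituting the conservation law $k_t = -q_x = -q'(k)\,k_x$ yields $\frac{dk}{dt}\big|_{\mathrm{traj}} = k_x\,(v - q'(k))$. Since $q$ is concave with $q(0)=0$, the tangent lies below the secant, so $q'(k) \le q(k)/k = v$ on the uncongested branch, i.e. $v - q'(k) \ge 0$. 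Hence the density change felt by the vehicle has the sign of $k_x$, and because $v'(k) \le 0$ on this branch, $\frac{dv}{dt}\big|_{\mathrm{traj}} = v'(k)\,k_x\,(v - q'(k))$ has sign opposite to $k_x$. It therefore suffices to show $k_x \le 0$ along A's trajectory and $k_x \ge 0$ along B's.

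Next I would fix the sign of $k_x$ from the characteristic geometry, which is where the onset/offset asymmetry enters. A characteristic leaving the upstream boundary at time $s$ occupies $x(s,t) = q'(k(s))\,(t-s)$ at time $t$; differentiating in $s$ shows that at fixed $t$, downstream points are reached by earlier-emanating characteristics during the onset phase, where $k'(s) > 0$ and $q'' \le 0$ make $\partial x/\partial s < 0$. Vehicle A enters at $t_1 \le t_{pe}$ and, being faster than every wave, sweeps monotonically onto characteristics of decreasing emanation time $s$, all lying in the onset phase; there boundary density is increasing in $s$, so the earlier (downstream) characteristics carry lower density, giving $k_x \le 0$ and $\frac{dv_A}{dt} \ge 0$. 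For B the roles reverse: B enters at $t_2 \ge t_{pe}$, and after the reduction of Subsection \ref{subsec:lb_char} we may take $t_{c,B}^+ \ge t_{p}$, so the co-arriving characteristic — the last one B meets — still emanates during the offset phase. Every characteristic B crosses then has emanation time in $[t_{c,B}^+, t_2]$, where boundary density is decreasing in $s$, so the downstream (earlier) characteristics carry higher density, giving $k_x \ge 0$ and $\frac{dv_B}{dt} \le 0$.

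The main obstacle is that, unlike the onset phase, the offset phase produces converging characteristics and hence shocks, so $\partial x/\partial s$ need not keep a fixed sign globally and the "downstream equals earlier" picture can fail. I would neutralise this in two steps: first invoke the shock-irrelevance result of Subsection \ref{subsec:shocks}, so the monotonicity argument may be run on the constant-flow characteristic field; and second use the reduction $t_{c,B}^+ \ge t_p$ to confine every characteristic met by B to the offset branch, where the density ordering is unambiguous. The standing assumption $\Delta\tau(q_p^+) \ge 0$, together with Lemma \ref{lemma:arrspeeds} (which gives $q_{c,B}^+ \ge q_{c,A}^+$), guarantees that this is the operative configuration, so that A genuinely sits on the accelerating onset side and B on the decelerating offset side throughout, completing the argument.
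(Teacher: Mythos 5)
Your proposal is correct and follows essentially the same route as the paper: both arguments rest on the concavity inequality $v = q/k \ge q'(k)$ (so each vehicle overtakes every wave it meets), the monotonicity of the boundary flow on the onset/offset branches, the shock-irrelevance result of Subsection~\ref{subsec:shocks}, and the reduction $t_{c,B}^+ \ge t_p$ from Subsection~\ref{subsec:lb_char}; your sign computation for $k_x$ via the conservation law is just a differential restatement of the paper's pointwise comparison of the flows carried by successively crossed characteristics. One caveat: drop the appeal to Lemma~\ref{lemma:arrspeeds} in your final paragraph --- it is not load-bearing (the operative configuration already follows from $t_1 \le t_{pe} \le t_2$ and the stated reduction), and since the paper proves Lemma~\ref{lemma:arrspeeds} \emph{using} Lemma~\ref{lemma:accdec}, invoking it here would make the argument circular.
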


In the following lemma, we link the desired property $\tau_f(q_p^+) \geq 0$ from the normalized boundary flow $q_r(0,\cdot)$ back to $q(0,\cdot)$. 
\begin{lemma}
\label{lemma:denorm_pipeline}
Let $\Delta\tau_{f,r}^{+}$ denote the difference in travel times between $A$ and $B$ under upstream boundary flow $q_r(0,t)$, and let $\Delta\tau_f^{+}$ denote the travel time difference under boundary flow $q(0,t)$. Then, the inequality

\[
\Delta\tau_f^{+} \geq \Delta\tau_{f,r}^{+}.
\]
holds.
\end{lemma}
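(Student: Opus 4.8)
The plan is to lift the normalized inequality $\Delta\tau_{f,r}^{+}\ge 0$ to the original boundary flow by writing each vehicle's physical travel time as its normalized travel time plus a per-vehicle de-normalization residual, and then showing that this residual is larger for the later vehicle $B$ than for $A$, so that restoring the background flow $q_b$ can only widen the gap $\Delta\tau_f$. First I would establish the decomposition
\[
\tau_X^{+} = \tau_{X,r}^{+} + \Lambda(q_{c,X,r}^{+}, \tau_{X,r}^{+}), \qquad X \in \{A,B\},
\]
where $q_{c,X,r}^{+} = q_{c,X}^{+} - q_b$ is the normalized flow carried by the characteristic that arrives together with vehicle $X$ and $\tau_{X,r}^{+}$ is its travel time under $q_r(0,t)$. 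This is precisely the relation encoded in the definition of $\Lambda$ in equation~\ref{eq:trti_denorm}: adding $q_b$ back shifts the fundamental diagram down by a constant and hence leaves the characteristic speeds (and thus the characteristic curves in spacetime) unchanged, while tilting the cumulative-count level sets that define vehicle trajectories by $q_b\,t$. Tracking one trajectory through this tilt contributes a traversal-time change $k'(q+q_b)-k'(q)$ and a matched-count correction given by the quotient term, and the accumulated correction collapses to $\Lambda$ evaluated at the arrival flow and the normalized travel time. Lemma~\ref{lemma:accdec} is the enabling ingredient here: since the flow encountered by $A$ is monotonically decreasing and that encountered by $B$ monotonically increasing along their trajectories, the residual is pinned down by the endpoint (arrival) flow, which is exactly what lets it be written as a function of the single value $q_{c,X,r}^{+}$.

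Subtracting the two identities gives
\[
\Delta\tau_f^{+} - \Delta\tau_{f,r}^{+} = \Lambda(q_{c,B,r}^{+}, \tau_{B,r}^{+}) - \Lambda(q_{c,A,r}^{+}, \tau_{A,r}^{+}),
\]
so it suffices to show $\Lambda(q_{c,A,r}^{+}, \tau_{A,r}^{+}) \le \Lambda(q_{c,B,r}^{+}, \tau_{B,r}^{+})$. I would read off the required componentwise ordering of the arguments from results already established. The normalized inequality $\Delta\tau_{f,r}^{+}\ge 0$, proved in the earlier part of this subsection, is exactly $\tau_{A,r}^{+} \le \tau_{B,r}^{+}$. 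The flow ordering $q_{c,A}^{+} \le q_{c,B}^{+}$ is Lemma~\ref{lemma:arrspeeds}, and since normalization only subtracts the constant $q_b$, the same ordering survives for $q_{c,A,r}^{+} \le q_{c,B,r}^{+}$. Thus $(q_{c,A,r}^{+}, \tau_{A,r}^{+}) \le (q_{c,B,r}^{+}, \tau_{B,r}^{+})$ componentwise, and the joint monotonicity of $\Lambda$ in Lemma~\ref{lemma:lambda_incr} (with the scalar version Lemma~\ref{lemma:lambda_incr_pre} as its building block) yields $\Lambda(q_{c,A,r}^{+}, \tau_{A,r}^{+}) \le \Lambda(q_{c,B,r}^{+}, \tau_{B,r}^{+})$. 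Hence $\Delta\tau_f^{+} \ge \Delta\tau_{f,r}^{+}$, and combined with $\Delta\tau_{f,r}^{+}\ge 0$ this also delivers the non-negativity of $\Delta\tau_f^{+}$ that the subsection targets.

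I expect the main obstacle to be the first step, not the monotonicity argument. Verifying the decomposition requires matching, separately for each vehicle, the arriving characteristic and the tilted count level set of the original system against their normalized counterparts, and checking that the accumulated travel-time correction reduces algebraically to exactly the expression in equation~\ref{eq:trti_denorm} with the arrival flow as its first argument. The delicate point is that the characteristic coinciding with a vehicle's arrival is not the same before and after the tilt, so the hard part will be arguing that the correction is insensitive to the interior of the trajectory and depends only on its endpoints; this is where the monotone speed profiles of Lemma~\ref{lemma:accdec} must be invoked to rule out cancellations along the path. Once that reduction is justified, Lemmas~\ref{lemma:arrspeeds} and~\ref{lemma:lambda_incr} close the argument mechanically.
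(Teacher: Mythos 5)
Your overall architecture --- lift $\Delta\tau_{f,r}^{+}\ge 0$ through a de-normalization correction governed by $\Lambda$, order its arguments via Lemma~\ref{lemma:arrspeeds}, and close with the monotonicity of Lemma~\ref{lemma:lambda_incr} --- matches the paper's strategy, but your load-bearing step is a genuine gap. You assert the \emph{exact} decomposition $\tau_X^{+} = \tau_{X,r}^{+} + \Lambda(q_{c,X,r}^{+}, \tau_{X,r}^{+})$. No such equality holds, and the paper never claims one: its proof establishes only \emph{one-sided} bounds, namely $\tau_A^{+} \le \Lambda(q_{c,A}^{+}, \tau_{A,r}^{+})$ for the earlier vehicle and $\tau_B^{+} \ge \Lambda(q_{c,B}^{+}, \tau_{B,r}^{+})$ for the later one. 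The residual cannot be ``pinned down by the endpoint arrival flow'' because the time gap between the arrival of the characteristic from $t_{c,X}^{+}$ and the arrival of the vehicle is a cumulative-count discrepancy divided by the downstream flow over that window, and that flow is not constant: during the onset it is at most $q_{c,A}^{+}+q_b$ (yielding the upper bound for $A$), during the offset it is at least $q_{c,B}^{+}+q_b$ (yielding the lower bound for $B$). Lemma~\ref{lemma:accdec} supplies exactly these directional estimates --- inequalities, not the absence of ``cancellations'' you hope for --- and the asymmetry between onset and offset is the engine of the proof; your symmetric equality erases it. Two further inaccuracies compound the problem: with the paper's definition of $\Lambda$ in equation~\ref{eq:trti_denorm}, which already contains $+\tau$ as a summand, your identity double counts $\tau_{X,r}^{+}$; and your claim that restoring $q_b$ ``leaves the characteristic speeds unchanged'' contradicts term (I) of the paper's proof, $k'(q_X+q_b)-k'(q_X)$, which is precisely the change in the wave's travel time under de-normalization.

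Because the subtraction step rests on these false equalities, your formula $\Delta\tau_f^{+}-\Delta\tau_{f,r}^{+} = \Lambda(q_{c,B,r}^{+},\tau_{B,r}^{+})-\Lambda(q_{c,A,r}^{+},\tau_{A,r}^{+})$ does not follow. Nor can it be salvaged by reinterpreting your ``$\Lambda$'' as the residual $\Lambda-\tau$: the residual's derivative in the second argument is $-q_b/(q+q_b)\le 0$, so it is \emph{decreasing} in $\tau$, and Lemma~\ref{lemma:lambda_incr} (which concerns $\Lambda$ itself, with $\partial\Lambda/\partial\tau\ge 0$) does not deliver the componentwise comparison in the form you invoke it --- the orderings $\tau_{A,r}^{+}\le\tau_{B,r}^{+}$ and $q_{c,A}^{+}\le q_{c,B}^{+}$ then pull in opposite directions. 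The correct route, which is the paper's, abandons exactness: from the one-sided bounds one gets $\Delta\tau_f^{+} = \tau_B^{+}-\tau_A^{+} \ge \Lambda(q_{c,B}^{+},\tau_{B,r}^{+}) - \Lambda(q_{c,A}^{+},\tau_{A,r}^{+})$, and Lemmas~\ref{lemma:arrspeeds}, \ref{lemma:accdec} and~\ref{lemma:lambda_incr}, together with $\Delta\tau_{f,r}^{+}\ge 0$, are applied to this inequality rather than to an identity.
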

Next, we show that the property $\tau_{f,r}(q_p^+)\geq 0$ assumed in the previous lemma is indeed satisfied.
\begin{lemma}
\label{lemma:pipeline}
Let the upstream boundary flow be given by $q_r(0,t)$. Then, the following inequality holds:
\[
\Delta \tau_f^+ \geq \Delta \tau_f^-.
\]
\end{lemma}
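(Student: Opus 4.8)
The plan is to reduce the claim to a statement about how the two vehicles' arrival times move as the peak flow increases, and then to assemble the lemmas already established in this subsection. Since the departure times $t_1$ and $t_2$ are fixed and independent of $q_p$, we have
\[
\Delta\tau_f^+ - \Delta\tau_f^- = \big(t_{2,a}^+ - t_{2,a}^-\big) - \big(t_{1,a}^+ - t_{1,a}^-\big),
\]
so it suffices to show that the inter-arrival gap $t_{2,a} - t_{1,a}$ does not decrease as the peak flow is raised from $q_p^-$ to $q_p^+$. First I would anchor each vehicle to a kinematic-wave reference: vehicle A to the characteristic $s_A$ (whose downstream arrival is $t_{c,A}^{\mathrm{arr}}$) and vehicle B to the mirror wave $s_{A,m}$ (whose downstream arrival is $\tilde t_{c,A}^{\mathrm{arr}}$), and then write each arrival as its reference arrival plus a residual offset.

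The key structural step is to cancel the motion of the two references. By the computation preceding Lemma~\ref{lemma:symmetry_mirror_kw}, the gap $\tilde t_{c,A}^{\mathrm{arr}} - t_{c,A}^{\mathrm{arr}}$ equals $\tilde t_{c,A} - t_{c,A}$ and is therefore independent of $q_p$; equivalently, Lemma~\ref{lemma:symmetry_mirror_kw} states that the two references shift by the same amount between $q_p^-$ and $q_p^+$. Substituting the decomposition into the displayed identity, the reference-shift contributions drop out, leaving only the change in the two residual offsets, namely $\Delta\Delta\tau_c$ for A and $\Delta\Delta\tau_{c,m}$ for B.

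It then remains to sign these offsets. At $q_p^-$ vehicle A's arrival coincides with its characteristic by construction, and Lemma~\ref{lemma:ASlowerThanChar} gives $t_{c,A}^{\mathrm{arr}}(q_p^+) \geq t_{1,a}^+$, so A's offset shifts forward, $\Delta\Delta\tau_c \geq 0$. For B, Lemma~\ref{lemma:time_diff_to_char} supplies $\Delta\Delta\tau_{c,m} \geq \Delta\Delta\tau_c$, i.e.\ B's shift backward relative to the mirror reference is at least as large as A's forward shift. Combining the forward motion of A with the dominating backward motion of B, measured against references separated by a fixed gap, yields $\big(t_{2,a}^+ - t_{2,a}^-\big) - \big(t_{1,a}^+ - t_{1,a}^-\big) \geq 0$, which is the claim. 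The de-normalization back to $q(0,t)$ is not needed here, being handled separately by Lemma~\ref{lemma:denorm_pipeline}.

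I expect the main obstacle to be the sign and direction bookkeeping across the mirrored frame. The mirror wave $s_{A,m}$ lives on the dissipation branch, so ``backward in the reflected time'' corresponds to ``forward in real time'', which is exactly why the $\pm$ superscripts on the vehicle times are interchanged in the definition of $\Delta\Delta\tau_{c,m}$. I would verify carefully that, under these conventions, the reference-shift terms cancel exactly, with no residual proportional to $\tilde t_{c,A} - t_{c,A}$ surviving, and that the orientation of Lemma~\ref{lemma:time_diff_to_char} matches the orientation needed so that $\Delta\Delta\tau_{c,m} \geq \Delta\Delta\tau_c$ produces a non-decreasing inter-arrival gap rather than merely a statement about the sum of the two arrival shifts. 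Checking the coincidence conditions at $q_p^-$ (A lying on its characteristic, B anchored to the mirror wave) is where I would concentrate to make the cancellation airtight.
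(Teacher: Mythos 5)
Your proposal follows essentially the same route as the paper's own proof: you anchor vehicle A to the wave pair $s_A, s_A^+$ and vehicle B to the mirror pair $s_{A,m}, s_{A,m}^+$, cancel the reference-wave arrival shifts via Lemma~\ref{lemma:symmetry_mirror_kw}, and apply Lemma~\ref{lemma:time_diff_to_char} so that after expansion the wave terms drop out and $\tau_B^+ - \tau_B^- \geq \tau_A^+ - \tau_A^-$, i.e.\ $\Delta\tau_f^+ \geq \Delta\tau_f^-$, follows. Your additional signing step via Lemma~\ref{lemma:ASlowerThanChar} (establishing $\Delta\Delta\tau_c \geq 0$) is superfluous — the relative inequality together with the cancellation already suffices — but harmless, and your caution about the orientation conventions in the definition of $\Delta\Delta\tau_{c,m}$ is well placed, since only the relative comparison enters the final step.
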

To complete the proof of this subsection's main result, we establish the following auxiliary lemma regarding the average flow between vehicles A and B at their destination:
\begin{lemma}
\label{lemma:flow_compression}
Consider two kinematic waves \( s_i \), \( s_j \) with departure time \( t_i \leq t_p \leq t_j \) from the upstream boundary and associated flows \( q_i \leq q_i \). For any time period \( T \leq s^{-1}(1) \), if \( t_i \geq t_p \), assume additionally that the departure time of the kinematic wave reaching the downstream boundary at \( s^{-1}(1)-T \) exceeds \( t_p \). Then , when increasing the peak flow from \( q_p^- \) to \( q_p^+ \), the difference between the average flow in the interval \([s_i^{-1}(1) - T, s^{-1}(1)]\) and \( q_i \) decreases, while the difference between the average flow in the interval \([s_j^{-1}(1) - T, s_j^{-1}(1)]\) to \( q_j \)  increases. The magnitude of decrease for \( s_i \) is smaller than the magnitude of increase for \( s_j \). 
\end{lemma}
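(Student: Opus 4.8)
The plan is to express each windowed average flow at the downstream boundary through upstream quantities transported along characteristics, where the (simplified, triangular) profile $q(0,\cdot)$ is explicit and easy to deform. For a wave $s$ carrying flow $q_s$ and reaching $x=1$ at $\theta := s^{-1}(1)$, I would write the windowed average as $\frac{N(1,\theta)-N(1,\theta-T)}{T}$ and substitute the characteristic identity $N(1,t) = N(0,t_{\mathrm{dep}}(t)) + \frac{dN}{dx}\big(q(0,t_{\mathrm{dep}}(t))\big)$ from the computation preceding Lemma~\ref{lemma:incrconv}. The two target quantities $\bar q(\text{window near }s_i)-q_i$ and $\bar q(\text{window near }s_j)-q_j$ then reduce to differences of the increment $\frac{dN}{dx}$ evaluated along the flows feeding each window, divided by the corresponding flow-dependent window lengths. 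The extra hypothesis — that the wave reaching $x=1$ at $s_j^{-1}(1)-T$ still departs after $t_p$ — guarantees that each averaging window is fed by a single monotone branch of $q(0,\cdot)$, so that rising- and falling-branch contributions never mix and the convexity of $\frac{dN}{dx}$ can later be applied one branch at a time.

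Next I would settle the two directional claims with the dichotomy of Lemma~\ref{lemma:avgflow}. For the rising-branch wave $s_i$ (departure $\le t_p$, hence feeding departures before $t_{pe}$), part~1 of that lemma fixes the sign of the gap between the downstream window average and the branch flow $q_i$; for the falling-branch wave $s_j$ (departure $\ge t_p$) part~2 supplies the opposite inequality direction, hence the opposite sign for the gap against $q_j$. Raising the peak from $q_p^-$ to $q_p^+$ leaves the sub-peak rising ramp fixed in slope but translates the falling ramp upward and pushes the peak later; tracking how the feeding windows move under this deformation, and inserting the sign information from Lemma~\ref{lemma:avgflow}, yields that $\bar q(\text{near }s_i)-q_i$ moves in one direction while $\bar q(\text{near }s_j)-q_j$ moves in the other. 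The mirror identity of Lemma~\ref{lemma:symmetry_mirror_kw} is what lets me align the two windows symmetrically about $t_p$ and thereby compare $s_i$ and $s_j$ as partners carrying flows $q_i\le q_j$.

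The hard part will be the third assertion, that the change at $s_i$ is strictly smaller in magnitude than the change at $s_j$. I would write both changes as increments of $\frac{dN}{dx}$ over matched flow intervals and compare them through the monotonicity and convexity of $\frac{dN}{dx}$ in $q$ from Lemma~\ref{lemma:incrconv} — exactly the place where the aggressive-driving hypothesis enters, since $v''(k)\le 0$ gives $k^{(3)}(q)\ge 0$ and hence $\big(\tfrac{dN}{dx}\big)'' = k''(q)+q\,k^{(3)}(q)\ge 0$. Because $q_i\le q_j$, convexity forces the falling-branch increment, taken at the higher flow level, to dominate the rising-branch increment at the lower level, and the opposite sign of the two window-length changes supplied by Lemma~\ref{lemma:avgflow} then converts this domination into the desired strict magnitude inequality.

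The genuinely delicate bookkeeping is the finite window length $T$. I must carry the flow-dependent denominators (the arrival-window lengths, which stretch on the falling branch and compress on the rising branch) through both regimes so that the convexity comparison is not diluted by the differing deformation of the two windows, and I must verify that the no-straddle hypothesis keeps both windows on their respective monotone branches throughout the \emph{entire} transition $q_p^-\to q_p^+$, not merely at its endpoints. Once these consistency checks are in place, the three parts of the statement follow by assembling the directional signs of the second paragraph with the convexity-driven magnitude estimate of the third.
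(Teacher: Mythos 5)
Your overall strategy---characteristic identities for windowed cumulative counts, mirror pairing via Lemma~\ref{lemma:symmetry_mirror_kw}, and convexity supplied by the aggressive-driving hypothesis---is in the same spirit as the paper's proof, but there is a concrete gap in your treatment of how the boundary profile deforms as the peak rises. You assume the sub-peak rising ramp stays pointwise fixed while the falling ramp translates upward and the peak moves later. Under that deformation the flows feeding the window ending at \(s_i^{-1}(1)\) do not change at all, so the quantity \(\bar q(\text{near } s_i)-q_i\) is constant: your first directional claim degenerates to equality and the convexity-driven magnitude comparison has nothing to compare. The paper works in the normalized setting of Subsection~\ref{subsec:lb_tf}, where the profile scales \emph{multiplicatively} with the peak (see the identity \(q(0,t_{c,A}^-,q_p^+)/q(0,t_{c,A}^-,q_p^-)=q_p^+/q_p^-\) at the start of the proof of Lemma~\ref{lemma:ASlowerThanChar}); hence the flow increase at \(t_i\) strictly exceeds that at the earlier departure \(t_i^-\), and this, combined with convexity of \(k'(q)\) (i.e. \(k^{(3)}(q)\ge 0\)), is what widens the arrival gaps and yields the paper's key second-difference inequality comparing the pair \((t_j^-,t_j)\) against the pair \((t_i^-,t_i)\). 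Without the correct deformation that inequality is unavailable.

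Second, the directional signs do not come from Lemma~\ref{lemma:avgflow} as you claim: that lemma compares a downstream window average against the corresponding \emph{upstream} window average, not against the endpoint flow \(q_i\) or \(q_j\), and it is a statement at fixed \(q_p\), so it says nothing about how either gap moves when \(q_p\) increases. In the paper, both the directions and the magnitude ranking are obtained pointwise in a window parameter \(r\in[0,1]\) from the second-difference inequality for \(k'(q(0,\cdot))\) at matched departure pairs, and only then aggregated over the window by a limiting argument. Your alternative route through convexity of \(\frac{dN}{dx}\) (Lemma~\ref{lemma:incrconv}) could in principle substitute for convexity of \(k'\)---both follow from \(v''(k)\le 0\)---but the domination ``higher flow level implies larger increment'' only follows once the flow intervals spanned by the two feeding windows are matched in length, and those windows deform differently on the rising and falling branches. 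That is precisely the bookkeeping you flag in your final paragraph and never resolve; as written, the proposal establishes neither the two directional claims nor the magnitude inequality.
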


The correctness of the overarching statement follows from Lemmas \ref{lemma:denorm_pipeline} and Lemma \ref{lemma:pipeline}. This completes the proof for the free-flow interval – the period during which both vehicles travel unimpeded by congestion – that vehicle A reaches its destination faster than vehicle B at a peak flow of $q_p^+$.
\begin{lemma}
\label{lemma:f_final}
\( \Delta \tau_f^+ \geq 0 \) holds at a boundary flow of $q(0,t)$.
\end{lemma}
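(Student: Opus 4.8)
The plan is to read Lemma~\ref{lemma:f_final} as the capstone that assembles the normalization machinery of this subsection, rather than as a self-contained computation. Because $q_p^-$ was defined precisely as the peak flow at which the two probe vehicles share the same travel time, I would start from $\Delta\tau_f(q_p^-)=0$, and argue that the same calibration, transferred to the shifted system under $q_{p,r}^-=q_p^--q_b$, gives $\Delta\tau_{f,r}(q_p^-)=0$. The strategy is then a short chain of inequalities: monotonicity in the peak flow for the \emph{normalized} system pushes $\Delta\tau_{f,r}$ upward from this balance point, and the de-normalization estimate transports the resulting nonnegativity back to the original boundary flow $q(0,t)$.

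Concretely, I would proceed in three steps. First, record that $\Delta\tau_{f,r}(q_p^-)=0$, the normalized counterpart of the defining equal-travel-time property of $q_p^-$. Second, apply Lemma~\ref{lemma:pipeline}, which establishes the monotonicity $\Delta\tau_{f,r}^{+}\geq\Delta\tau_{f,r}^{-}$; combined with the previous step this yields $\Delta\tau_{f,r}^{+}\geq 0$. Third, invoke Lemma~\ref{lemma:denorm_pipeline} to obtain $\Delta\tau_f^{+}\geq\Delta\tau_{f,r}^{+}$, so that
\[
\Delta\tau_f^{+}\ \geq\ \Delta\tau_{f,r}^{+}\ \geq\ \Delta\tau_{f,r}^{-}\ =\ 0,
\]
which is exactly the claim. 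The ordering of the last two steps is essential rather than cosmetic: the de-normalization bound of Lemma~\ref{lemma:denorm_pipeline} is established through the arrival-speed ordering of Lemma~\ref{lemma:arrspeeds} and the acceleration/deceleration dichotomy of Lemma~\ref{lemma:accdec}, both of which take $\Delta\tau_{f,r}^{+}\geq 0$ as a hypothesis. Hence the nonnegativity in the normalized system must be secured first via Lemma~\ref{lemma:pipeline}, and only then may it be pushed back to the physical flow.

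I expect the genuine obstacle to sit in the de-normalization inequality packaged by Lemma~\ref{lemma:denorm_pipeline}, not in the final assembly. The substance there is that re-adding the constant background flow $q_b$ can only enlarge, never reverse, vehicle $A$'s travel-time advantage over $B$: the residual term $\Lambda$ defined in~\eqref{eq:trti_denorm} must be shown to be increasing in $q$ (Lemma~\ref{lemma:lambda_incr_pre}) and monotone in the paired argument $(q,\tau)$ (Lemma~\ref{lemma:lambda_incr}), so that the larger arrival flow and the nonincreasing speed profile of $A$ relative to $B$ force $\Lambda(q_A,\tau_A)\leq\Lambda(q_B,\tau_B)$, hence $\Delta\tau_f^{+}\geq\Delta\tau_{f,r}^{+}$. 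A secondary point I would check carefully is that the balance condition $\Delta\tau_{f,r}(q_p^-)=0$ really does transfer under the shift $q_{p,r}^-=q_p^--q_b$ rather than merely being assumed; once this and the two master lemmas are in hand, Lemma~\ref{lemma:f_final} follows immediately from the displayed chain.
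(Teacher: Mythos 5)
Your overall architecture (normalize, apply monotonicity in the peak flow, de-normalize) matches the paper's, but the anchor of your chain fails: the claim $\Delta\tau_{f,r}(q_p^-)=0$ does not transfer from the original system. The calibration $\Delta\tau_f(q_p^-)=0$ is a statement about travel times under the boundary flow $q(0,t)$; subtracting the background flow $q_b$ alters both vehicles' travel times nonlinearly and, crucially, asymmetrically --- vehicle B, arriving at a higher local flow (Lemma~\ref{lemma:arrspeeds}), is affected more than vehicle A, which is exactly what the residual term $\Lambda$ of \eqref{eq:trti_denorm} and Lemmas~\ref{lemma:lambda_incr_pre} and~\ref{lemma:lambda_incr} quantify. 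If the de-normalization inequality of Lemma~\ref{lemma:denorm_pipeline} held at $q_p^-$, one would get $\Delta\tau_{f,r}^- \leq \Delta\tau_f^- = 0$, so in general $\Delta\tau_{f,r}^-$ is nonpositive rather than zero (equality only in the trivial case $q_b=0$). With that, Lemma~\ref{lemma:pipeline} yields only $\Delta\tau_{f,r}^+ \geq \Delta\tau_{f,r}^-$, which no longer gives $\Delta\tau_{f,r}^+ \geq 0$; this also leaves unverified the hypothesis $\Delta\tau_{f,r}^+ \geq 0$ that Lemmas~\ref{lemma:arrspeeds} and~\ref{lemma:accdec} --- and through them Lemma~\ref{lemma:denorm_pipeline} --- take as input, so the middle link of your chain collapses together with its anchor. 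The point you flagged as ``secondary'' is in fact the crux, and it does not hold.

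The paper's proof avoids this by never asserting anything about the sign of $\Delta\tau_{f,r}^-$. It writes $\Delta\tau_f - \Delta\tau_{f,r} = (\tau_B^+ - \tau_{B,r}^+) - (\tau_A^+ - \tau_{A,r}^+)$ and proves a difference-in-differences statement: combining the one-sided $\Lambda$ bounds of Lemma~\ref{lemma:denorm_pipeline} with Lemma~\ref{lemma:flow_compression} (which is absent from your proposal), the de-normalization gap is nondecreasing in the peak flow, i.e.\ $\Delta\tau_f^+ - \Delta\tau_{f,r}^+ \geq \Delta\tau_f^- - \Delta\tau_{f,r}^-$. Anchoring at $\Delta\tau_f^- = 0$ in the \emph{original} system --- the only place the calibration is valid --- gives $\Delta\tau_f^+ \geq \Delta\tau_{f,r}^+ - \Delta\tau_{f,r}^-$, and Lemma~\ref{lemma:pipeline} bounds this increment below by zero. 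Note that the conclusion uses only the increment $\Delta\tau_{f,r}^+ - \Delta\tau_{f,r}^-$, precisely because the absolute level $\Delta\tau_{f,r}^+$ cannot be signed the way your chain requires. To repair your argument you would have to replace the calibration transfer by this comparative step, reinstating Lemma~\ref{lemma:flow_compression} and the monotonicity of the gap.
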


\subsection{The Lower Bound on $\Delta \tau_{fq}$}
Finally, we consider the case when exactly one vehicle encounters a queue.

\begin{lemma}
\label{lemma:queue_order}
If exactly one of the two vehicles A and B encounters a queue along its trajectory, then it is vehicle B.
\end{lemma}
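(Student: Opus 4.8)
The plan is to argue by contradiction, ruling out the only alternative configuration — vehicle $A$ encountering a queue while $B$ travels unimpeded — and to reduce the whole statement to a single inequality on entry times. First I would record the two structural facts that make the reduction possible. Since the inflow follows a single-peaked (trapezoidal) profile, the downstream bottleneck discharges at capacity $q_{bn}$ over exactly one time interval $[t_s,t_e]$; consequently a probe vehicle encounters a queue if and only if it arrives at $x=l$ within $[t_s,t_e]$, equivalently if and only if its entry time lies in the contiguous block $[t_\alpha,t_\beta]$ of delayed departures (the vehicles discharged during the capacity phase, which form a contiguous set by FIFO and conservation). Second, because vehicles cannot overtake in the LWR solution, the FIFO property gives $t_{1,a}\le t_{2,a}$: $A$ always reaches the downstream end no later than $B$.

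With these facts, ``exactly one vehicle is queued and it is $A$'' forces $t_{1,a}\in[t_s,t_e]$ and, since $t_{2,a}\ge t_{1,a}$, forces $t_{2,a}>t_e$; in entry-time terms this is $t_1\in[t_\alpha,t_\beta]$ together with $t_2>t_\beta$. It therefore suffices to prove that whenever $A$ is delayed one has $t_2\le t_\beta$, i.e. that $B$ enters no later than the last vehicle caught by the queue. I would first dispose of the easy half of this: because the inflow strictly exceeds $q_{bn}$ throughout the peak plateau, the queue cannot clear before $t_{pe}$, so $t_\beta\ge t_{pe}$; combined with $t_1\le t_{pe}$ this already places $t_1$, $t_{pe}$, and the block's interior on the correct side. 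The substantive claim is the sharper bound $t_\beta\ge t_2$.

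To obtain $t_\beta\ge t_2$, I would work with the conservation characterization of the delayed block rather than with a naive constant free-flow time shift, since the free-flow travel time varies with density. Using the $\tfrac{dN}{dx}$ relation of Lemma~\ref{lemma:incrconv} and the average-flow comparison of Lemma~\ref{lemma:avgflow}, the right endpoint $t_\beta$ is pinned down by the requirement that the mean inflow over $[t_\alpha,t_\beta]$ equal $q_{bn}$. The positions of $A$ and $B$ are not free: the equal–expected–travel–time hypothesis forces the near-symmetric placement $t_1\le t_{pe}\le t_2$ and, via Lemma~\ref{lemma:initspeed} ($q(0,t_1)\ge q(0,t_2)$), ties the pre-peak entry $t_1$ to the post-peak entry $t_2$ through the onset/offset structure of the trapezoid. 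The plan is to show that at the critical peak flow where the growing queue's back first reaches $A$ (so $t_\alpha=t_1\le t_{pe}$), the accumulated excess demand that must still drain forces the clearance entry time past $t_2$, giving $t_\beta\ge t_2$ and hence the contradiction. In threshold language this is exactly the statement that $B$'s congestion threshold $q_p^{B}$ never exceeds $A$'s threshold $q_p^{A}$, so the intermediate regime in which only one vehicle is queued is always the regime in which that vehicle is $B$.

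The main obstacle I anticipate is precisely this sharper inequality $t_\beta\ge t_2$. It cannot come from the expected-travel-time equality alone — that equality is an average over $q_p$ and is, a priori, consistent with isolated realizations in which only $A$ is delayed — so the argument must be realization-wise and must exploit the asymmetry between the onset (inflow rising at rate $a$, the queue building) and the offset (inflow falling at rate $b$, the queue draining). Converting the intuition ``a queue large enough to reach back to the pre-peak vehicle $A$ must persist at least until the post-peak vehicle $B$'' into a rigorous bound requires careful bookkeeping of cumulative counts across the peak with non-constant characteristic speeds, which is where Lemmas~\ref{lemma:avgflow} and~\ref{lemma:initspeed} will carry the weight.
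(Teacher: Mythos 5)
There is a genuine gap: the entire proposal funnels into the single inequality $t_\beta\ge t_2$ (the queue must persist until $B$ has entered), and that inequality is never proved — you explicitly flag it as ``the substantive claim'' and ``the main obstacle,'' and what follows is a plan (``the accumulated excess demand \dots forces the clearance entry time past $t_2$'') rather than an argument. The reduction preceding it (FIFO, the capacity-discharge interval being contiguous, the equivalence ``queued iff arrival in $[t_s,t_e]$'') is sound but routine; the lemma's content is exactly the step you leave open, and Lemmas~\ref{lemma:avgflow} and~\ref{lemma:initspeed} alone will not close it, because nothing in them compares the drained backlog at $A$'s entry against the offset ramp at $B$'s entry.

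The deeper issue is a misreading of the hypothesis that makes the problem look harder than it is. You write that the travel-time equality ``is an average over $q_p$ and is, a priori, consistent with isolated realizations in which only $A$ is delayed.'' But in Section~\ref{section:shape} the reference peak $q_p^-$ is \emph{defined} by the pointwise identity $\tau(t_1,q_p^-)=\tau(t_2,q_p^-)$, and Lemma~\ref{lemma:queue_order} is invoked at that realization (it feeds Lemma~\ref{lemma:fc_final}, whose hypothesis is $\Delta\tau_{fq}^-=0$). The paper's proof uses this realization-wise equality directly: if only $A$ were queued, then since $t_{c,A}^-\le t_A\le t_p$ the shock forces $q(0,t_{c,A}^-)\ge q_{bn}$ and $A$ arrives behind its characteristic; the arrival-speed ordering $q_{c,B}^-\ge q_{c,A}^-$ (Lemma~\ref{lemma:arrspeeds}) together with the monotone speed profiles of Lemma~\ref{lemma:accdec} ($A$ accelerating, $B$ decelerating) then shows that equality of the two travel times forces $v_B(1)\le v(q_{bn})$, i.e.\ the downstream flow stays at or above $q_{bn}$ throughout the inter-arrival interval, so the queue is still active when $B$ arrives — contradiction. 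Your queue-accounting route might be completable, but as written it substitutes an unproven demand-drainage bound for the one hypothesis (pointwise equality at $q_p^-$) that actually does the work.
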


\begin{lemma}
\label{lemma:fc_final}
Assume that \( \Delta \tau_{fq}^- = 0 \), then \( \Delta \tau_{fq}^+ \geq 0 \) follows.
\end{lemma}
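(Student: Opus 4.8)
The plan is to argue entirely in terms of cumulative counts at the two ends of the corridor, exploiting that in the mixed regime both vehicle $A$ and the onset of the downstream queue are governed only by the portion of the boundary inflow that does not depend on $q_p$. By Lemma~\ref{lemma:queue_order} the queued vehicle is $B$. Writing $n_A := N(0,t_1)$ and $n_B := N(0,t_2)$, the arrival times $t_{A,a}, t_{B,a}$ at the downstream end are fixed by $N(1,t_{A,a}) = n_A$ and $N(1,t_{B,a}) = n_B$, so that $\tau_A = t_{A,a}-t_1$ and $\tau_B = t_{B,a}-t_2$. The hypothesis $\Delta\tau_{fq}^- = 0$ then reads $t_{B,a}^- - t_{A,a}^- = t_2 - t_1$, and the claim $\Delta\tau_{fq}^+ \ge 0$ is equivalent to $t_{B,a}^+ - t_{A,a}^+ \ge t_2 - t_1$. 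It therefore suffices to show that the arrival gap $t_{B,a} - t_{A,a}$ is nondecreasing as $q_p$ grows from $q_p^-$ to $q_p^+$.

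First I would establish three invariances. (a) Since $A$ travels in free flow, where the vehicle speed $v(k)=q/k$ weakly exceeds the wave speed $q'(k)=v+kv'$, $A$ overtakes every characteristic it meets and hence crosses only characteristics emanating from boundary times at most $t_1$ and carrying flows below the queue-forming level; these belong to the base flow and the rising ramp $q_b + a\,t$, which are independent of $q_p$. Consequently $t_{A,a}$ and $n_A$ are $q_p$-invariant. (b) Because the queue exists only for $q_p > q_{bn}$, it is triggered when the rising-ramp flow first reaches $q_{bn}$, at the fixed boundary time $(q_{bn}-q_b)/a$, and this flow level reaches the downstream end after the fixed free-flow travel time, at the instant $t_{q,\mathrm{start}}$; thus $t_{q,\mathrm{start}}$ and the cumulative outflow $N(1,t_{q,\mathrm{start}})$ are $q_p$-invariant, and the bottleneck discharges at the constant rate $q_{bn}$ thereafter. (c) The cumulative inflow $n_B = N(0,t_2)$ is nondecreasing in $q_p$, a monotonicity property of the trapezoidal inflow family, while $t_{A,a} < t_{q,\mathrm{start}}$ for every $q_p$, so the mixed regime---and in particular $B$'s membership in the still-longer queue at $q_p^+$---persists.

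Combining these, I would split the outflow accumulated between the two arrivals into its free-flow and queued parts,
\[
n_B - n_A = \bigl[N(1,t_{q,\mathrm{start}}) - n_A\bigr] + q_{bn}\,\bigl(t_{B,a} - t_{q,\mathrm{start}}\bigr),
\]
in which the bracketed free-flow contribution is $q_p$-invariant by (a)--(b). Solving for $t_{B,a}$ exhibits it as an increasing affine function of $n_B$ (the term $n_A$ cancels), hence nondecreasing in $q_p$ by (c), whereas $t_{A,a}$ is constant in $q_p$. The gap $t_{B,a}-t_{A,a}$ is therefore nondecreasing, and starting from the equality at $q_p^-$ we obtain $t_{B,a}^+ - t_{A,a}^+ \ge t_{B,a}^- - t_{A,a}^- = t_2 - t_1$, i.e. $\Delta\tau_{fq}^+ \ge 0$.

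I expect the main obstacle to be the rigorous justification of invariance (a): one must confirm that $A$, although it may depart as late as $t_{pe}$, only ever interacts with the $q_p$-independent part of the inflow. I would argue this by contradiction---if $A$ departed once the inflow had reached the queue-forming level, the queue front propagating upstream would overtake $A$'s trajectory before it reached the downstream end, contradicting that $A$ stays in free flow---so $A$'s relevant characteristics all carry flows below $q_{bn}$ and are unaffected by $q_p$. I would also verify the monotonicity in (c) directly from the trapezoidal definition, using that the $q_p^+$ inflow dominates the $q_p^-$ inflow pointwise up to $t_{pe}$. Consistent with the remark following Theorem~\ref{theorem:main}, this argument nowhere uses the aggressive-driving hypothesis, since $A$'s travel time is frozen and $B$'s delay is controlled purely by the invariant bottleneck discharge.
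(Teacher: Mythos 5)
There is a genuine gap, and it sits exactly where you predicted: invariance (a) is false in general, and your contradiction argument for it does not work. On the uncongested branch the vehicle speed $v=q/k$ weakly exceeds the wave speed $q'(k)$, so vehicle $A$ outruns every characteristic it meets; consequently $A$ may enter at an inflow level \emph{above} the queue-forming level $q_{bn}$ (even at the plateau level $q_p^-$) and still clear the bottleneck before the queue activates at the invariant onset time $t_{q,\mathrm{start}}=(q_{bn}-q_b)/a+k'(q_{bn})$ — the queue front cannot overtake $A$, because $A$ arrives downstream before the front exists. Nothing in the mixed-regime hypotheses forces $t_1\le (q_p^--q_b)/a$; if $t_1$ exceeds the ramp end at $q_p^-$, the inflow $A$ crosses on $\big((q_p^--q_b)/a,\,t_1\big]$ is capped at $q_p^-$ under the low peak but continues rising under $q_p^+$, so $n_A=N(0,t_1)$, $A$'s trajectory and $t_{A,a}$ all genuinely depend on $q_p$, with $t_{A,a}^+>t_{A,a}^-$.

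Once $t_{A,a}$ is no longer frozen, your final step requires $t_{B,a}^+-t_{B,a}^-\ \ge\ t_{A,a}^+-t_{A,a}^-$, i.e.\ a comparison between $B$'s queue-side increase $(n_B^+-n_B^-)/q_{bn}$ and $A$'s free-flow slowdown, and the proposal supplies no such comparison. That comparison is precisely what the paper's proof provides: it decomposes $\tau_B=\tau_{B,f}+\tau_{B,c}$, where $\tau_{B,f}$ is $B$'s travel time with $q_{bn}=\infty$, invokes the free-flow monotonicity result (Lemma \ref{lemma:pipeline}, suitably adapted) to control $\tau_{B,f}-\tau_A$ across the peak change, and then argues separately that $\tau_{B,c}^+\ge\tau_{B,c}^-$ from the widening horizontal gap between the queued and free-flow cumulative curves in the $(N,t)$-plane. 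Your $B$-side computation — invariant onset time and pre-onset outflow $N(1,t_{q,\mathrm{start}})$, constant discharge $q_{bn}$, monotone $n_B$, with $n_A$ cancelling — is actually a clean, explicit rendering of the paper's $\tau_{B,c}$ step and is correct as far as it goes; but the free-flow half of the argument is irreducibly missing. Note also that this is what made your claimed independence from the aggressive-driving hypothesis look free: where invariance (a) fails, the free-flow comparison (and with it the paper's concavity machinery) re-enters. Your argument is sound only in the special case $t_1\le (q_p^--q_b)/a$, where $A$ indeed interacts solely with the $q_p$-independent ramp.
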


This concludes the proof idea outlined at the beginning of the section. For all three separately analyzed regimes - congested (Lemma \ref{lemma:c_final}), uncongested (Lemma \ref{lemma:f_final}), and mixed (Lemma \ref{lemma:fc_final}) - we have shown that the travel time difference $\Delta \tau$ has at most one sign change, specifically from negative to positive. This property satisfies the conditions of Lemma \ref{lemma:characterization}, which establishes that $\operatorname{Var}[\tau_2] \geq \operatorname{Var}[\tau_1]$ holds at every point where $\tau_1$ and $\tau_2$ have equal expectations. Given that the expected travel time is a uni-modal function of departure time and approaches the minimum travel time for both very early and very late departures, this variance relationship directly implies a counterclockwise movement in the $(\operatorname{E}, \operatorname{Var})$ plane. 
\section{Numerical Examples}
\label{section:numerical}

The simulation scenarios using the Cell Transmission Model (CTM) illustrate the theorized hysteresis dynamics. Simulations span 240 time units on a 40-unit corridor with a downstream bottleneck capacity of 25 vehicles per time unit. The fundamental diagram is triangular with a free-flow speed of 1, a critical density \(k_c\) of 60, and a jam density \(k_j\) of 240.

The upstream boundary flow is given by
\( q(0,t) = 20 + \frac{(q_{\text{max}} - 20)}{60} \times t \) for \( t \in [0,60] \), \( q_{\text{max}} \) for \( t \in [60,90] \), \( 10 + q_{\text{max}} - \frac{q_{\text{max}}}{60} \times (t-90) \) for \( t \in [90,150] \), \( 10 \) for \( t \in [150,180] \), and \( 0 \) for \( t \in [180,240] \)
The value of \(q_{\text{max}}\) is normally distributed. 300 simulations were executed across high (\(\mu = 40\)) and low (\(\mu = 30\)) demand scenarios with standard deviations of 10, 15, 20 and 25 of the mean, respectively. Hysteresis is quantified by the area within the loop.

\begin{figure}[h!]
	\centering
	\begin{minipage}{0.8\textwidth}
		\includegraphics[height=0.8\linewidth]{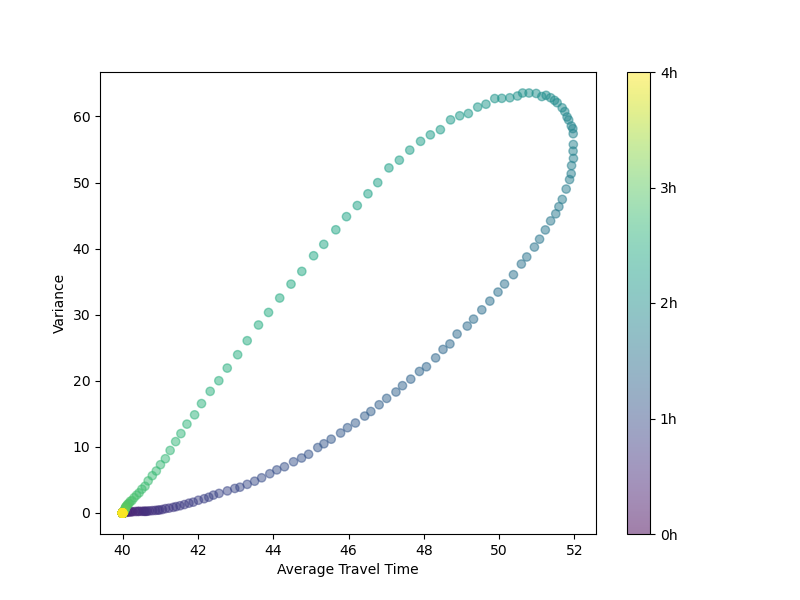} 
		\label{fig:low_demand1}
	\end{minipage}\hfill
	\begin{minipage}{0.8\textwidth}
		\includegraphics[height=0.8\linewidth]{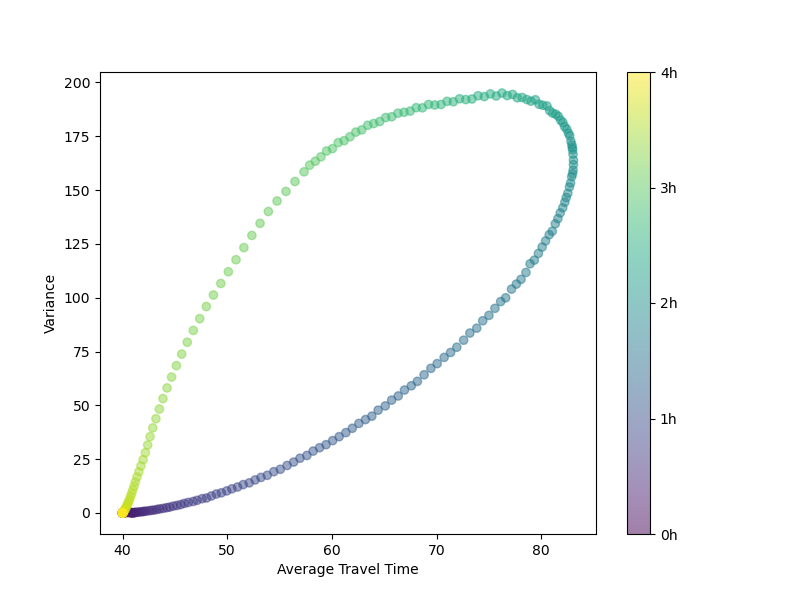} 
		\label{fig:low_demand2}
	\end{minipage}
	\caption{Travel Time And Variances, Low Variance of Boundary Demand}
	\label{fig:low_var_boundary}
\end{figure}

\begin{figure}[h!]
	\centering
	\begin{minipage}{0.8\textwidth}
		\includegraphics[height=0.8\linewidth]{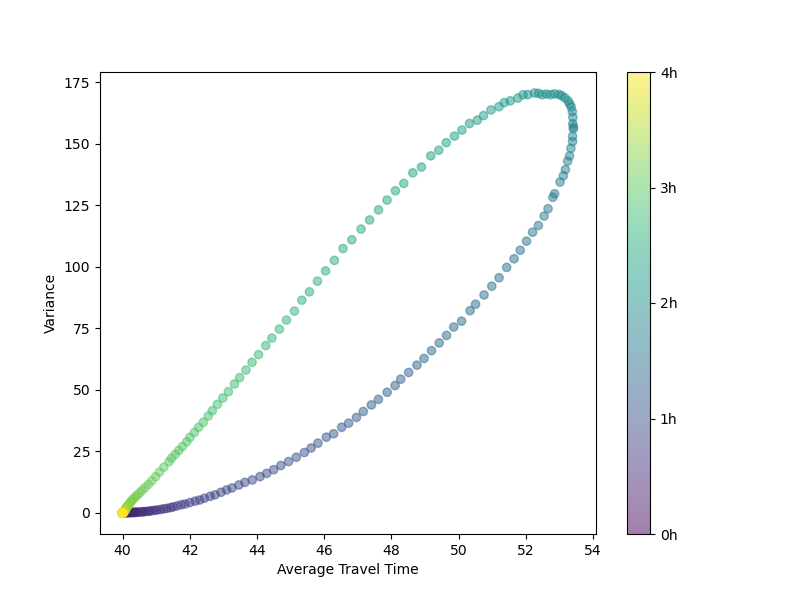} 
		\label{fig:high_demand1}
	\end{minipage}\hfill
	\begin{minipage}{0.8\textwidth}
		\includegraphics[height=0.8\linewidth]{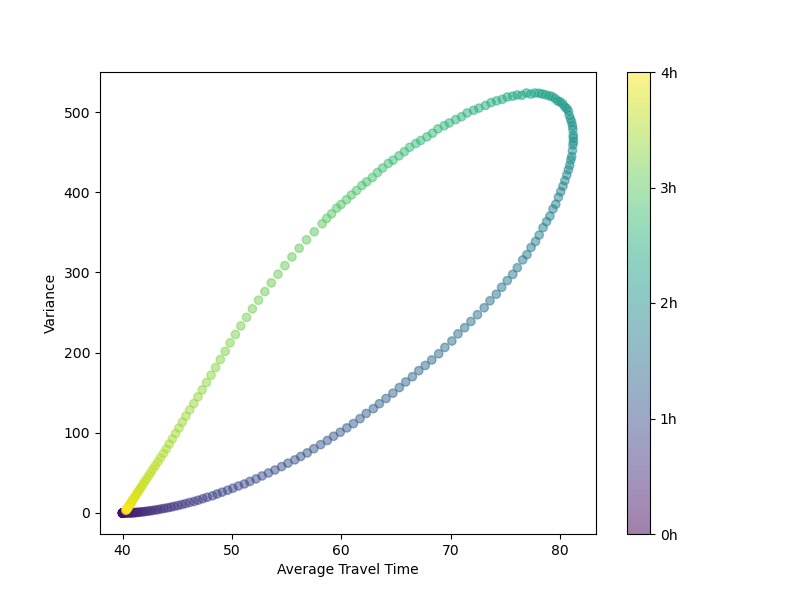} 
		\label{fig:high_demand2}
	\end{minipage}
	\caption{Travel Time And Variances, High Variance of Boundary Demand}
	\label{fig:high_demand}
\end{figure}

Linear regression models, with the standard deviation of \( q_p \) as the independent variable and the area under the hysteresis curve as the dependent variable, exhibit excellent fits for fixed means, yielding slopes \( m_{30} \approx 114.79 \) and \( m_{40} \approx 1088.29 \), and coefficients of determination \( R^2 = 0.9962 \) and \( R^2 = 0.9970 \). These results are supported by Proposition \ref{theorem:main}, highlighting that excess variance between start times \(t_1\) and \(t_2\) reflects sensitivity to changes in the mean rather than variance of \(\phi(q_p)\), as shown by the piecewise linear form of \(\Delta \tau_q\). For the triangular shape of the fundamental diagram, stochasticity of demand has no effect on travel times in uncongested conditions. Graphically, increases in \(\sigma\) do not affect the horizontal mean distance, but mainly increase the vertical variance. Additionally, our analyses indicate that hysteresis magnitude is more influenced by changes in mean peak demand than by its variance.

\section{Empirical Data}
\label{section:empirical}

The empirical analysis is based on measurements conducted on an 8.6 km northbound segment of Interstate 880 in the San Francisco Bay Area, immediately upstream of the bottleneck at the Washington Avenue exit. Data were obtained from the Performance Measurement System (PeMS) of the California Department of Transportation (Caltrans). Figure \ref{figure:freeway_osm} shows the location of the road section under study where data collection was conducted.

\begin{figure}[h!]
  \centering
  \includegraphics[width=0.5\textwidth]{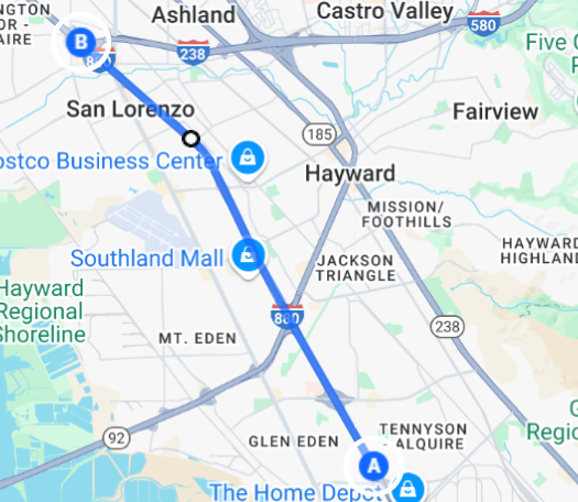}
  \caption{OpenStreetMap view of the studied section of Interstate-880.}
  \label{figure:freeway_osm}
\end{figure}

At this location, the congested exit ramp leads to queue formation across all lanes of I-880, extending approximately one kilometer upstream of the exit. For a detailed description and further treatments of the study site, see \cite{Munoz2002, Hammerl2024, Hammerl2025a, Hammerl2025b}. Travel times were recorded using five loop detectors positioned at approximately equal distances along the corridor. Following the methodology of \cite{Yildirimoglu2015}, we calculated instantaneous travel times in 5-minute intervals between 6:00 AM and 10:00 AM on the 39 U.S. business days from February 5, 2024, to March 29, 2024. From these individual measurements, we computed the mean and empirical variance of travel times for each specific time of day. The results of these calculations are presented in Figure~\ref{figure:mv_emp} as a parametric curve in the $(E, \operatorname{Var})$ plane.

\begin{figure}[htbp]
    \centering
    \includegraphics[width=0.8\textwidth]{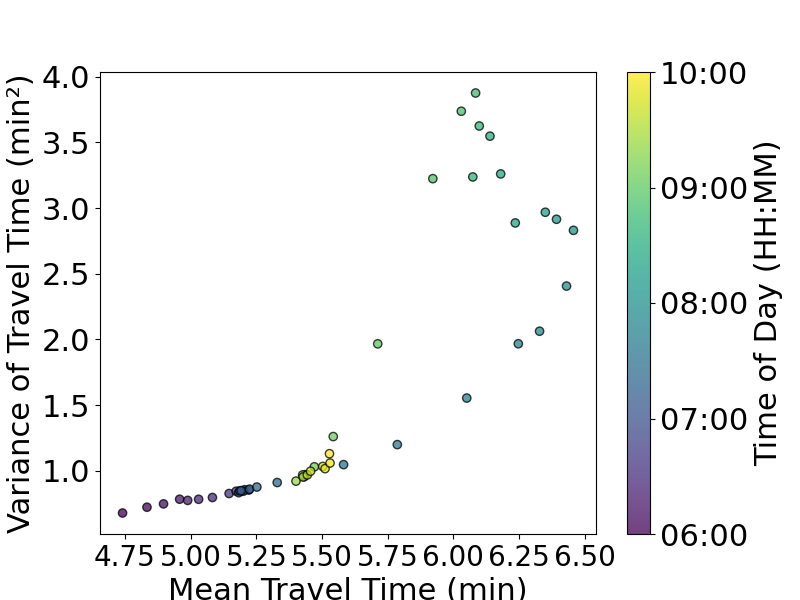}
    \caption{Mean-Variance Relationship of Travel Time During Morning Rush Hour}
    \label{figure:mv_emp}
\end{figure}

The data from I-880 exhibits significant congestion and a clear counterclockwise hysteresis loop in the mean-variance relationship of travel times. No clockwise loops are observed in the data, which may be explained by multiple factors: first, the study site experiences heavy congestion with persistent queuing at the bottleneck, and according to Lemma \ref{lemma:c_final}, queuing-related congestion invariably produces counterclockwise patterns. Additionally, the observation of clockwise loops might require specifically isolating days with conditions conducive to defensive driving behavior, which could be masked in the aggregate analysis of the complete dataset. Zang's work \cite{Zang2017} provides, to our knowledge, the only empirical evidence of clockwise loops in freeway traffic patterns, observing them exclusively during rainy conditions. \cite{Gayah2015} report counterclockwise network-wide traffic patterns, observed both empirically and in microsimulations. In contrast, \cite{Fosgerau2010} identifies clockwise subloops in empirical mean–variance plots, though only in the context of railway operations. Independently, \cite{Akin2011} studied weather effects on the fundamental diagram of traffic flow, demonstrating that rainfall not only decreases the average speed at a given density, but also increases the convex curvature of the speed-density relationship. This convexity corresponds to defensive driving in our framework, which our theoretical analysis shows can lead to clockwise mean-variance loops. Thus, the observation of clockwise loops during rainfall by \cite{Zang2017}, combined with \cite{Akin2011}'s evidence that rain induces defensive driving behavior, provides empirical support for our theoretical framework. Given these theoretical and initial empirical indications, investigating the effects of adverse conditions such as poor weather and low visibility on travel time variability presents a promising direction for future research.

\section{Discussion and Conclusions}
\label{section:conclusion}
This paper examines the time-dependent relationship between the mean and variance of
travel time of vehicular traffic on a single corridor under rush hour like congestion patterns. We show that counterclockwise loops emerge in these conditions under the assumption of speed-density curves that are concave on the non-congested branch.
  
This naturally raises the question of whether this property established for concave speed-density curves extends to convex speed-density curves. However, this is not the case, as demonstrated by the following argument: 
By Lemma \ref{lemma:accdec}, in uncongested conditions, vehicles accelerate during congestion onset and decelerate during congestion offset. Under defensive driving behavior and increasing flow conditions, vehicle A exhibits reduced acceleration, preventing it from matching vehicle B's travel time, as illustrated in Figure \ref{figure:defensive}. This violates the assumption of Lemma \ref{lemma:characterization}, resulting in clockwise sub-loops. 
Figure \ref{figure:defensive} demonstrates this effect through the reduced sensitivity of speed to density changes from \( k_A^- \), \( k_B^- \) to \( k_A^+ \), \( k_B^+ \): as higher flow rates bring the characteristics closer to \( t_1 \) and \( t_2 \), the magnitude of speed changes diminishes.

\begin{figure}[ht]
    \centering
    \begin{tabular}{c c} 
        \includegraphics[width=0.48\textwidth]{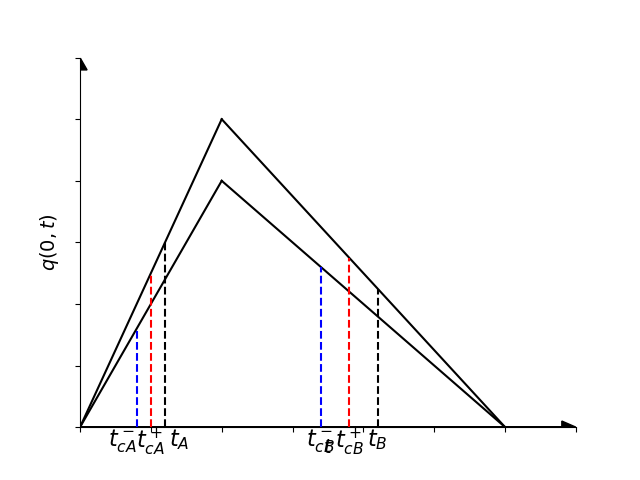} & 
        \includegraphics[width=0.48\textwidth]{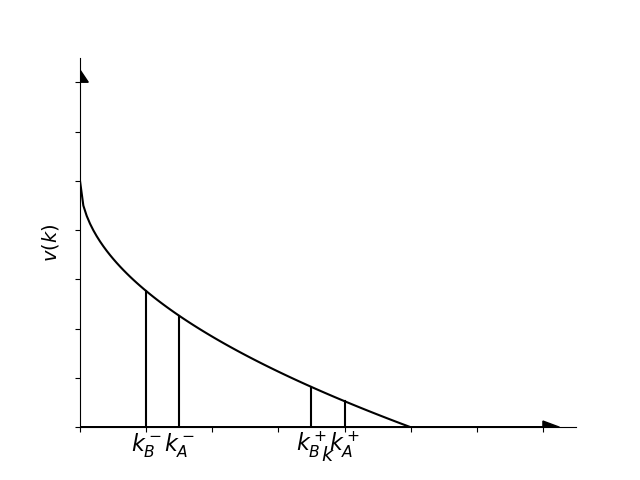} \\
    \end{tabular}
    \caption{Effect of defensive driving on relative delay}
    \label{figure:defensive}
\end{figure}

In our analysis, the calculation of flux changes along the wave trajectory is performed per unit length ($\frac{dN}{dx}$) rather than per unit time ($\frac{dN}{dt}$), as is common in previous variational approaches \cite{Daganzo2005, Daganzo2005b,Daganzo2006}. This spatial perspective is advantageous since the distance traveled by a kinematic wave typically exists as a constant parameter and can often be normalized to 1, while its travel time is a variable quantity. 

In summary, our results provide a rigorous theoretical foundation for empirically observed phenomena in travel time reliability. The analytical relationship between traffic demand and variance propagation offers new insights for travel time prediction and reliability assessment in congested networks. Our findings demonstrate how fundamental properties of the underlying system dynamics manifest in observable reliability metrics, with immediate applications for both offline planning and real-time traffic management. The numerical experiments quantify how system performance metrics scale with input variance, providing practical guidance for robust network design and operations.
Lemma \ref{lemma:characterization} provides a powerful and physically meaningful characterization of systems which exhibit this type of hysteresis. 
\ref{lemma:characterization} implies that in the model used in\cite{Fosgerau2010}, the $(\mathbb{E}, \operatorname{Var})$ curve always forms a single counterclockwise loop. Unlike our model, this result requires no additional assumptions such as aggressive driving behavior. Theorem \ref{theorem:main} demonstrates that the delay difference caused by downstream queues between two units with same expected travel time is indeed a convex function of peak demand. When the boundary flow increases, the upstream flow between $t_1$ and $t_2$ increases linearly, while the bottleneck capacity remains unchanged. For delays caused by increasing density in free-flow conditions, the model must be complemented by additionally assuming driver aggressiveness to meet the fundamental requirement for this pattern to emerge. Interpreting convex speed-density curves as a reflection of timid driving admits a behaviorally and physically conclusive explanation for the occurrence of atypical clockwise loops. 
The extensive discussion on the functional form of this diagram in traffic flow research has resulted in numerous specific proposals. However, the convexity of the speed-density curve has never been explicitly addressed, despite its significant implications for interpreting the underlying driving behavior that we have demonstrated. 
In real traffic networks, travel time delays in uncongested conditions tend to play a minor role compared to delays caused by congestion behind bottlenecks. As demonstrated in lemma \ref{lemma:c_final}, delays in congested intervals are always associated with a counterclockwise loop. Thus, the empirical observations align with our model’s predictions: clockwise loops are rare and, if they occur at all, are accompanied by a counterclockwise loop to the top right, forming a figure-eight double loop.

\appendix

\section*{Appendix}
\label{section:proofs}

\begin{proof}[Proof of Lemma \ref{lemma:incrconv}]
Since $q(k)$ is strictly increasing, it is invertible. We begin by computing 
\[
\frac{d^2 N}{d q \, d x}
 = \frac{d}{dq} \left[ q k'(q) - k(q) \right]
       = k'(q) + q k''(q) - k'(q)
       = q k''(q).
\]
Since $k(q)$ is the inverse of $q(k)$, we have \[
k'(q) = \frac{1}{q'(k)}, \quad k''(q) = -\frac{q''(k)}{[q'(k)]^3}.
\]
Since $ q \geq 0 $ and $ q''(k) \leq 0 $, it follows that $- q q''(k) \geq 0$ and $ [q'(k)]^3 > 0 $. Therefore, $\frac{d^2 N}{d q \, d x}
 \geq 0$, so $\frac{dN}{dx}$ is monotonically increasing in $q$.

We compute \[
    \frac{d^3 N}{d q^2 \, d x} (q) = \frac{d}{dq} \left[\frac{d^2 N}{d q \, d x} \right] = \frac{d}{dq} \left( \frac{-q \cdot q''(k)}{[q'(k)]^3} \right)
    \]
using the quotient and product rules. Let 
\[
N = q q''(k), \quad D = [q'(k)]^3,
\]
then
\[
\frac{d^3 N}{d q^2 \, d x} (q)= -\left( \frac{N'}{D} - \frac{ND'}{D^2} \right). 
\]
We can express derivatives with respect to $q$ in terms of derivatives with respect to $k$: $\frac{d}{dq} = \frac{1}{k'(q)} \frac{d}{dk} = q'(k) \frac{d}{dk}.$ 
Compute N' and D':
\[
N' = q'(k) \frac{d}{dk} [q q''(k)] = q'(k)[q'(k)q''(k) + q q'''(k)],
\]
\[
D' = q'(k) \frac{d}{dk} [q'(k)^3] = q'(k)[3[q'(k)]^2q''(k)] = 3[q'(k)]^3q''(k).
\]
Substitute back into $\frac{d^3 N}{d q^2 \, d x}$:
\[
\frac{d^3 N}{d q^2 \, d x} (q) = - \left( \frac{q'(k)[q'(k)q''(k) + q q'''(k)]}{[q'(k)]^3} - \frac{q q''(k) \cdot 3[q'(k)]^3q''(k)}{[q'(k)]^6} \right).
\]
Simplify:
\[
\frac{d^3 N}{d q^2 \, d x}(q) = - \left( \frac{q'(k) q''(k) + q q'''(k)}{[q'(k)]^2} - \frac{3q q''(k)}{[q'(k)]^5} \right).
\]
Since $q \geq 0, q'(k) > 0, q''(k) \leq 0$, and  $q'''(k) \leq 0$, we have:
\begin{enumerate}
    \item $- \frac{q'(k) q''(k)}{[q'(k)]^2} \geq 0,$
    \item $- \frac{q q'''(k)}{[q'(k)]^2} \geq 0,$
    \item $\left( \frac{-3q q''(k)}{[q'(k)]^5} \right)\geq 0.$
\end{enumerate}
Therefore,  $\frac{d^3 N}{d q^2 \, d x} (q) \geq 0$, so $\frac{dN}{dx} (q)$ is convex in q.
\end{proof}

\begin{proof}[Proof of Lemma \ref{lemma:avgflow}]
We restrict the proof to the first statement, as the second part follows by analogous arguments with reversed inequalities. 

Let \( r \in [0, 1] \) and define $t_0 = r \cdot t_1^d + (1 - r) \cdot t_2^d, \quad t_2 = r \cdot t_1^a + (1 - r) \cdot t_2^a.$

Consider the characteristic line starting at \( t_0 \). It reaches \( x = 1 \) at time \( t_1 = t_0 + k'(t_0) \). Due to the convexity of \( k' \), we have $ k'(t_0) \leq r \cdot k'(t_1^d) + (1-r) \cdot k'(t_2^d) $.
Furthermore, $t_2 = r(t_1^d + k'(t_1^d)) + (1 - r)(t_2^d + k'(t_2^d))= t_0 + rk'(t_1^d) + (1 - r)k'(t_2^d)$. Combining these results yields $ t_1 = t_0 + k'(t_0) \leq t_0 + rk'(t_1^d) + (1 - r)k'(t_2^d) = t_2 $. Thus, the characteristic line from \( (0,t_0) \) reaches \( x = 1 \) no later than \( t_2 \), implying $ q(1, r \cdot t_1^a + (1-r) \cdot t_2^a) \geq q(0, r \cdot t_1^d + (1-r) \cdot t_2^d) $.

Next, we introduce a parameterization of the relevant intervals by defining \( u \in [0,1] \) and $ h(u) = t_1^d + u(t_2^d - t_1^d), \quad i(u) = t_1^a + u(t_2^a - t_1^a) $. Since \( q(0,h(u)) \leq q(1,i(u)) \) holds for all \( u \), integration over \( [0,1] \) yields
$ \int_0^1 q(0,h(u)) du \leq \int_0^1 q(1,i(u)) du $. Using the substitutions \( dh = (t_2^d-t_1^d)du \) and \( di = (t_2^a-t_1^a)du \), we obtain:
  \[
        \int_{t_1^d}^{t_2^d} \frac{q(0,t)}{t_2^d - t_1^d}dt \geq \int_{t_1^a}^{t_2^a} \frac{q(1,t)}{t_2^a - t_1^a}dt \implies \frac{1}{t_2^d - t1^d} \int_{t_1^d}^{t_2^d} q(0,t) dt \geq \frac{1}{t_2^a - t1^a} \int_{t_1^a}^{t_2^a} q(1,t) dt.\]

This final inequality demonstrates that the mean flow in \([t_1^d, t_2^d]\) at the upstream end is lower than that in \([t_1^a, t_2^a]\) at the downstream end, which implies the inequality stated in the lemma.
\end{proof}

\begin{proof}[Proof of Lemma \ref{lemma:initspeed}]
Assume \( q(0,t_1) < q(0,t_2) \) for a contradiction. Then, \( t_1 \leq t_{pb} \). Let \( t_{c1} \) and \( t_{c2} \) be the times at which characteristic lines through the intersections of A and B's trajectories with the queue tail leave the upstream boundary. Then, \( t_{c1} < t_{c2} \) and \( q(t_{c1}) < q(t_{c2}) \) and must hold. We distinguish two cases:
	
	\begin{enumerate}
		\item If \( q_{c2} > q(0,t_1) \), then \( v(0,t_1) \geq v(0,t_{c2}) \) and \( v(0,t_{c2}) \leq \bar{v}_B \), where \( \bar{v}_B \) is B's average speed upstream of the bottleneck, while \( v(0,t_1) \geq \bar{v}_A \) implies \( \bar{v}_A > \bar{v}_B \). Since \( q(0,t_{c1}) < q(0,t_{c2}) \), the queue length increases by B's entry than by A's. Thus, \( \tau_{bn}(t_1) \leq \tau_{bn}(t_2) \), and \( \tau(t_1) < \tau(t_2) \).
		
		\item \( q_{c2} \leq q(0,t_1) \). If \( q_{c2} \leq q(0,t_1) \), and there is no active queue when A arrives, the average flow between A and B at the upstream boundary is at least \( q(0,t_1) \) and at the downstream boundary at most \( q(0,t_{c2}) \). The difference in cumulative flow remains constant, hence $\tau(t_1)>\tau(t_2)$. If an active queue exists for A, it also exists for B, and the flow at the downstream boundary between arrivals is \( q_{bn} \), while it is higher than \( q_{bn} \) between departures upstream. Hence, \( \tau(t_1) < \tau(t_2) \).
	\end{enumerate}
	
In summary, in both cases, \( \tau(t_1) < \tau(t_2) \), and hence \( \mathbb{E}[\Delta \tau] < 0 \), a contradiction to the assumptions.
\end{proof}

\begin{proof}[Proof of Lemma \ref{lemma:noshocks}]
First, we observe that the lemma holds by construction if no shock waves occur in the interval \([t_{c,A}, t_{c,B} + k'(q_{c,B})]\). However, if a shock wave does occur, we consider the earliest time \(t_s\) of its occurrence and distinguish between the following cases:
\begin{enumerate}
    \item \(t_s \leq t_{c,A} + k'(q_{c,A})\). In this case, Lemma \ref{lemma:shockpriority} implies that the characteristic originating from \(t_{c,A}\) is intersected by a later starting, faster characteristic \(q_{c,A}\). Let \(q_{s,1}\) be the flow at the downstream boundary at time \(t_{1,a}^+\), and let \(t_{s,1}\) be the time at which the characteristic carrying this flow emanates from the upstream end of the corridor. Furthermore, let \(\Delta N\) be the difference in cumulative flow between vehicles \(A\) and \(B\), that is, \(\Delta N = N(0, t_1) - N(0, t_2)\). The cumulative flow at the downstream boundary between \(t_{1,a}^+\) and \(t_{2,a}^+\) must also equal \(\Delta N\). Denoting the flow at the downstream boundary at \(t_{2,a}^+\) as \(q_{s,2}\) and the starting time of its corresponding characteristic as \(t_{s,2}\), then for the piecewise linear path in the \(x\)-\(t\)-plane
    \[
    C := (0, t_{s,1}) \rightarrow (1, t_{1,a}^+) \rightarrow (1, t_{2,a}^+) \rightarrow (0, t_{s,2}) \rightarrow (0, t_{s,1})
    \]
    we have by definition \(\oint_C q \, dt - k \, dx = 0\). Since the upstream boundary flow decreases monotonically after $t_s$,  \(q_{s,1} \geq q_{s,2}\) holds and thus \(dN/dx(q_{s,1}) \geq dN/dx(q_{s,2})\). It follows that
    \[
    N(0, t_{s,1}) - N(0, t_{s,2}) \geq N(1, t_{2,a}^+) - N(1, t_{1,a}^+) = \Delta N.
    \]

    Since \(t_{s,1}\) necessarily lies in the interval of decreasing flow at the upstream boundary, we can conclude that
    \[
    \frac{N(0, t_2) - N(0, t_1)}{t_2 - t_1} \geq \frac{N(0, t_{s,2}) - N(0, t_{s,1})}{t_{s,2} - t_{s,1}}.
    \]
    An illustration of this argument can be found in Figure \ref{figure:hatchedareas}. According to Lemma \ref{lemma:avgflow}, we also have
    \[
    \frac{N(0, t_{s,2}) - N(0, t_{s,1})}{t_{s,2} - t_{s,1}} \geq \frac{N(1, t_{a,2}^+) - N(1, t_{a,1}^+)}{t_{a,2}^+ - t_{a,1}^+}.
    \]
    This yields
    \[
    \frac{N(0, t_2) - N(0, t_1)}{t_2 - t_1} \geq \frac{N(1, t_{a,2}^+) - N(1, t_{a,1}^+)}{t_{a,2}^+ - t_{a,1}^+}
    \]
    and \(N(0, t_2) - N(0, t_1) = N(1, t_{a,2}^+) - N(1, t_{a,1}^+)\), from which it follows that \(t_{a,2}^+ - t_{a,1}^+ \geq t_2 - t_1\). This confirms the statement of the lemma.
    \item \(t_s > t_{c,A} + k'(q_{c,A})\). In this case, the flow remains constant along the characteristic originating at \(t_{c,A}\), and \(t_{1,a}^+ = t_{c,A} + k'(q_{c,A})\). Following \ref{lemma:newell}, the physically relevant characteristic is determined by the minimum of cumulative flows. Hence, \(N(1, t_{c,B} + k'(q_{c,B})) < N(0, t_2)\), which implies \(\tau(t_2, q_p^+) > \tau(t_1, q_p^+)\).
\end{enumerate}
\end{proof}

\begin{proof}[Proof of Lemma \ref{lemma:weaklb}]
According to Lemma \ref{lemma:initspeed}, it must hold that \( q(0, t_A, q_p^+) \geq q(0, t_B, q_p^+) \). Therefore, the average flow at the upstream end over the interval \([t_A, t_B]\) satisfies \( \overline{q}_{[t_A, t_B]}^{\text{u}} \geq q(0, t_B, q_p^+) \). By construction, we have \( t_{c,B}^+ \leq t_{pb} \); since vehicle \( A \) arrives before vehicle \( B \), it also follows that \( t_{c,A}^+ < t_{c,B}^+ \). The starting points of both characteristics are thus within the interval of increasing boundary flow, so the flow transported by the characteristics reaching the downstream end between the arrival of vehicles \( A \) and \( B \) is at most \( q(0, t_{c,B}^+, q_p^+) \). Consequently, the average downstream flow over the interval \([t_{A,d}^+, t_{B,d}^+]\) satisfies \( \overline{q}_{[t_{A,d}^+, t_{B,d}^+]}^{\text{d}} \leq q(0, t_{c,B}^+, q_p^+) < q(0, t_B, q_p^+) \leq \overline{q}_{[t_A, t_B]}^{\text{u}} \). Since the average flow between the two vehicles is higher at the upstream than at the downstream end, while the difference in cumulative flow between them remains constant, it follows that \( t_{B,d}^+ - t_{A,d}^+ > t_B - t_A \) and hence \( \tau_B - \tau_A = \Delta \tau_f > 0 \).
\end{proof}

\begin{proof}[Proof of Lemma \ref{lemma:peakplateau}]
\end{proof}

\begin{proof}[Proof of Lemma \ref{lemma:char_curves_tt}]
Due to the normalization of the corridor length, the travel time of \( s_1 \) is \( k'(0, t_{c,B}^+) \), while that of \( s_2 \) is \( k'^*(0, t_{c,B}^{+}) \). We have:
\begin{align*}
N(0, t_B) - N(0, t_{c,B}^+) &> N(0, t_B) - N(0, t_{c,B}^{*+}) \\
&\implies \bar{q} > \bar{q}^* \\
&\implies \frac{dN}{dx}(q(\cdot), q_{B,c}^+) > \frac{dN}{dx}(q^*(\cdot), q_{B,c}^{+*}) \\
&\implies q_{B,c} \cdot k'(q_{B,c}) - k(q_{B,c}) > q_{B,c} \cdot k'^*(q_{B,c}) - k^*(q_{B,c}).
\end{align*}
Given that $k^{'*}(\cdot)$ and $k'(\cdot)$ are equal except for a marginally small interval where they differ by a bounded amount, we can replace $k^{*}(q_{B,c})$ with $k(q_{B,c})$. This yields:
\begin{align*}
q_{B,c} \cdot k'(q_{B,c}) - k(q_{B,c}) &> q_{B,c} \cdot k'^*(q_{B,c}) - k(q_{B,c}) \\
&\implies k'(q_{B,c}) > k'^*(q_{B,c}),
\end{align*}
which proves the lemma.
\end{proof}

\begin{proof}[Proof of Lemma \ref{lemma:char_curves_tt_2}]
The path integral along the trajectory of \( s_2 \) satisfies
\[
N(0, t_{c,B}^{+*}) + \int_{s_2} q(0,t_{c,B}^{+*}) \, dt - k(0,t_{c,B}^{+*}) \, dx = N(0, t_B).
\]
According to \ref{lemma:newell}, the characteristic curve that satisfies the integral conservation law \ref{eq:conservation} is the one associated with the lowest cumulative flow. Therefore, the arrival time of \( s_2 \) provides a lower bound for the arrival time of vehicle \( B \). Since the arrival time of vehicle \( B \) itself is a lower bound for the arrival time of \( s_1 \), the lemma follows.
\end{proof}

\begin{proof}[Proof of Lemma \ref{lemma:diffcumflow}]
Consider the trajectories of vehicle B in both scenarios. Since the boundary flow becomes infinite after time \( t_B \), the characteristics originating from the upstream boundary after this time propagate very slowly. Thus, vehicle B's trajectory cannot be intersected by shock waves in either scenario. Moreover, in scenario \( i \), no wave starting before \( t_{c,B}^{+,i} \) can intersect B's trajectory, as the characteristic \( s_i \) starts at this time and provides a lower bound for B's arrival time due to its associated cumulative flow.

As a consequence, B travels through homogeneous traffic with flow \( \bar{q} \) in \( (S_1) \) and \( \bar{q}^* \) in \( (S_2) \). The interaction between vehicle B's trajectory and traffic wave propagation is illustrated in figure \ref{figure:trajectories}. The respective travel times are \( \frac{k(\bar{q})}{\bar{q}} \) and \( \frac{k(\bar{q}^*)}{\bar{q}^*} \), yielding a temporal difference of
\[
\frac{k(\bar{q})}{\bar{q}} - \frac{k(\bar{q}^*)}{\bar{q}^*}.
\]
Given our assumption \( t_{s_1} \leq t_{B,2} \), the upstream boundary flow in \( (S_1) \) equals \( \bar{q} \) over the interval \( [t_{B,2}, t_{B,1}] \). The difference in cumulative flow stated in the lemma follows from multiplying this flow rate by the temporal difference.
\end{proof}

\begin{proof}[Proof of Lemma \ref{lemma:flowdiffmax}]
Since $N_1(0, t_B)=N_1(0, t_B)$, we denote $N_1(0, t_B)=N_1(0, t_B)=N(0,t_B)$.
In figure \ref{figure:trajectories} (middle image pair), the trajectory of \( s_1 \) is depicted by the rightmost blue line in the right picture. As indicated by the red lines overlapping the trajectory and formally shown in lemma \ref{lemma:shockpriority}, the local flow \( q(x,t) \) along this trajectory  remains constant at \( q_{B,c}^+ \). Thus, \( N_1(0, t_B) - N_1(1, t_{s_1}) \) can be expressed as the difference of two path integrals:

\[
\begin{aligned}
N_1(0, t_B) - N_1(1, t_{s_1}) = & \int_{s_1} \left( q_{B,c}^+ \, dt + k(q_{B,c}^+) \, dx \right) \\
& - \int_{s_1} \left( \bar{q} \, dt + k(\bar{q}) \, dx \right) \\
= & \, (q_{B,c}^+ - \bar{q}) \, k'(q_{B,c}^+) - \left( k(q_{B,c}^+) - k(\bar{q}) \right).
\end{aligned}
\]

The characteristic curve \( s_2 \) leaves the upstream boundary later than \( s_1 \). Due to this temporal sequence, we conclude geometrically that the local flow \( q(x,t) \) must also maintain the constant value \( q_{B,c}^+ \) along the entire characteristic curve \( s_2 \). Consequently, the quantity \( N(1, t_{s_2}) \) can be determined through the following path integral:

\[
\begin{aligned}
N_1(0, t_B) - N_1(1, t_{s_2}) = & \int_{s_2} \big( q_{B,c}^{+*} \, dt + k(q_{B,c}^{+*}) \, dx \big) 
- \int_{s_2} \big( \bar{q}^* \, dt + k(q_{B,c}^+) \, dx \big) \\
= & \big( q_{B,c}^{+*} \cdot k'(q_{B,c}^{+*}) - k(q_{B,c}^{+*}) \big) 
- \big( \bar{q}^* \cdot k'(q_{B,c}^{+*}) - k(\bar{q}^*) \big) \\
= & \big( q_{B,c}^+ \cdot k'(q_{B,c}^{+*}) - k(q_{B,c}^+) \big) 
- \big( \bar{q}^* \cdot k'(q_{B,c}^{+*}) - k(\bar{q}^*) \big) \\
= & k'(q_{B,c}^{+*}) \cdot (q_{B,c}^+ - \bar{q}^*) - (k(q_{B,c}^+) - k(\bar{q}^*)).
\end{aligned}
\]

We subtract these quantities to verify the inequality stated in the lemma:

\[
\begin{aligned}
N_1(1, t_{s_2}) - N_1(1, t_{s_1}) = & \, (N(0, t_B) - N_1(1, t_{s_1})) - (N(0, t_B) - N_1(1, t_{s_2})) \\
= & \bigg[ k'(q_{B,c}^{+*}) \cdot (q_{B,c}^+ - \bar{q}^*) - \big( k(q_{B,c}^+) - k(\bar{q}^*) \big) \bigg] \\
& - \bigg[ (q_{B,c}^+ - \bar{q}) \cdot k'(q_{B,c}^+) - \big( k(q_{B,c}^+) - k(\bar{q}) \big) \bigg] \\
= & \bigg[ k'(q_{B,c}^{+*}) \cdot (q_{B,c}^+ - \bar{q}^*) - k'(q_{B,c}^+) \cdot (q_{B,c}^+ - \bar{q}) \bigg] \\
& - \bigg[ \big( k(q_{B,c}^+) - k(\bar{q}^*) \big) - \big( k(q_{B,c}^+) - k(\bar{q}) \big) \bigg] \\
= & \bigg[ k'(q_{B,c}^{+*}) \cdot (q_{B,c}^+ - \bar{q}^*) - k'(q_{B,c}^+) \cdot (q_{B,c}^+ - \bar{q}) \bigg] \\
& - \bigg[ k(\bar{q}^*) - k(\bar{q}) \bigg].
\end{aligned}
\]
\end{proof}

\begin{proof}[Proof of Lemma \ref{lemma:flowdiffmin}]
For a convex function \( k(q) \), the Legendre transform \( k^*(p) \) is defined as
\[
k^*(p) = \sup_q \{ pq - k(q) \}.
\]
By the Fenchel-Young equality, 
\[
k(q) + k^*(p) = pq
\]
holds for all \( k(q) \) and \( k^*(p) \), where \( p = k'(q) \) and \( q = k^*(p) \). We set $k(q_{B,c}^{+*}) = p^{+*} q_{B,c}^{+*} - k^*(p^{+*})$ and $k(q_{B,c}^+) = p^{+} q_{B,c}^+ - k^*(p^{+})$. 

We rewrite $E$ as follows: 
\[
\begin{aligned}
E = & \bigg[ k'(q_{B,c}^{+*}) \cdot (q_{B,c}^+ - \bar{q}^*) 
        - k'(q_{B,c}^+) \cdot (q_{B,c}^+ - \bar{q}) \bigg] 
        - \bigg[ k(\bar{q}^*) - k(\bar{q}) \bigg] \\
= & \bigg[ p^{+*}(q_{B,c}^{+} - \bar{q}^*) 
        - p^+(q_{B,c}^+ - \bar{q}) \bigg] 
        - \bigg[ k(\bar{q}^*) - k(\bar{q}) \bigg] \\
= & \bigg[ p^{+*}q_{B,c}^{+} - p^{+*}\bar{q}^* 
        - p^+q_{B,c}^{+} + p^+\bar{q} \bigg] 
        - \bigg[ k(\bar{q}^*) - k(\bar{q}) \bigg] \\
= & \bigg[ (p^{+*} - p^+)q_{B,c}^{+} 
        + p^+\bar{q} - p^{+*}\bar{q}^* \bigg] 
        - \bigg[ k(\bar{q}^*) - k(\bar{q}) \bigg] \\
= & (p^{+*} - p^+)(q_{B,c}^{+} - \bar{q}^*) 
      + p^+(\bar{q} - \bar{q}^*) 
      - \big[k(\bar{q}^*) - k(\bar{q})\big].
\end{aligned}
\]

Note that \(\big(\bar{q}^*\big)(p^+ - p^{+*}) + \big(p^{+*} - p^+\big)\bar{q}^* = 0\). Since \(p^{+*} \leq p^+\) and \(q_{B,c}^{+} \geq \bar{q}^*\), the term \((p^{+*} - p^+)(q_{B,c}^{+} - \bar{q}^*) \leq 0\).

Thus:
\[
E \leq p^+(\bar{q} - \bar{q}^*) - \big[k(\bar{q}^*) - k(\bar{q})\big].
\]

From the convexity of \(k(q)\):
\[
k(\bar{q}) - k(\bar{q}^*) \geq k'(\bar{q}^*)(\bar{q} - \bar{q}^*).
\]

Therefore:
\[
-[k(\bar{q}^*) - k(\bar{q})] \leq -k'(\bar{q}^*)(\bar{q} - \bar{q}^*) = k'(\bar{q}^*)(\bar{q}^* - \bar{q}).
\]

Since \(\bar{q}^* \leq \bar{q}, \bar{q}^* - \bar{q} \leq 0.\) Hence, 

\[
E \leq p^+(\bar{q} - \bar{q}^*) + k'(\bar{q}^*)(\bar{q}^* - \bar{q}) 
= (p^+ - k'(\bar{q}^*))(\bar{q} - \bar{q}^*).
\]

Due to the convexity of \(k(q)\), we have:
\[
\frac{k(\bar{q}^*)}{\bar{q}^*} \leq k'(\bar{q}^*).
\]

Since \(\bar{q} \geq \bar{q}^*\), it follows that:
\[
p^+ \cdot \bar{q}^* \geq k'(\bar{q}^*) \cdot \bar{q}.
\]

For the right-hand side of the inequality, we then get:
\[
\begin{aligned}
\bar{q}^* \left(\frac{k(\bar{q})}{\bar{q}} - \frac{k(\bar{q}^*)}{\bar{q}^*}\right) 
&= k(\bar{q}) - \frac{\bar{q}}{\bar{q}^*} \cdot k(\bar{q}^*) \\
&\geq k(\bar{q}) - \bar{q}^* \cdot p^{+*} \\
&\geq p^+ \cdot \bar{q} + \bar{q}^* \cdot p^{+*} - p^+ \cdot \bar{q}^* - k'(\bar{q}^*) \cdot \bar{q}^* \\
& = (p^+ - k'(\bar{q}^*))(\bar{q} - \bar{q}^*) \geq E,
\end{aligned}
\]
which implies the lemma.
\end{proof}

\begin{proof}[Proof of Lemma \ref{lemma:kw_arr}]
We consider two cases:

\begin{enumerate}
    \item \(\bar{q} \geq q_{B,c}^+\):  
In this case, the geometry of the characteristic lines in \(S_1\) corresponds to the top image pair in Figure~\ref{figure:trajectories}. The flow \(q(x,t)\) along the characteristic \(s_1\) is \(\bar{q}\), and \(N(1, t_{s_1}) = N(0, t_B)\). Since \(q(k)\) is concave in the relevant range, it follows that 
\[
v(q_{B,c}^+) \geq \frac{dq}{dk}(q_{B,c}^+).
\]
Geometrically, \(B\) moves in \(S_1\) with an average speed corresponding to a flow of \(\bar{q}\), and its arrival at the downstream end coincides with the characteristic \(s_1\). Thus, in \(S_1\):
\[
t_B + \frac{1}{v(q_{B,c}^+)} = t_{s_1}.
\]
Since the speed of \(s_1\) is independent of the boundary flow, its arrival time remains unchanged under a boundary condition \(q(0,t)\).

In \(S_2\), the geometry of the characteristic lines follows the bottom image pair. The characteristic \(s_2\) is intersected by a later-starting characteristic carrying a flow of \(\bar{q}^*\). It holds that:
\[
N(0, t_{c,B}^{+*}) + \frac{dN}{dx}(q_{B,c}^{+*}) = N(0, t_B).
\]
By Lemma~\ref{lemma:shockpriority}, a later-starting characteristic at the upstream boundary is the physically correct one, and by Lemma~\ref{lemma:newell}, it is always associated with a lower cumulative flow. Thus:
\[
N(1, t_{s_2}) < N(0, t_B).
\]
Characteristics leaving the upstream boundary after \(t_B\) do not affect \(B\)'s trajectory, as the flow is arbitrarily high and their speed is therefore very low. Since \(B\)'s arrival in \(S_2\) occurs after \(s_2\), \(B\) moves with a homogeneous flow of \(\bar{q}^*\). Therefore:
\[
N(1, t_B + \frac{1}{v(\bar{q}^*)}) = N(0, t_B),
\]
and consequently:
\[
t_{s_2} < t_B + \frac{1}{v(\bar{q}^*)}.
\]
Since the arrival time of \(s_2\) is also independent of the boundary condition, it remains unchanged if \(\tilde{q}^{\infty, *}\) is replaced with \(q(0,t)\).

In summary:
\[
t_{s_1} = t_B + \frac{1}{v(\bar{q})} \geq t_B + \frac{1}{v(\bar{q}^*)} > t_{s_2}.
\]
This proves the lemma under the assumption of the case distinction.
\item \(\bar{q} \geq q_{B,c}^+\):
This case is illustrated for scenario \( S_1 \) in the middle pair of images (Figure \ref{figure:trajectories}). We consider the following key aspects:

\begin{enumerate}
    \item \textbf{Characteristic Curve Interaction:}
    \begin{itemize}
        \item The blue characteristic curve \( s_1 \) (rightmost line) is intersected by a red characteristic curve that starts later.
        \item Hence, according to Lemma~\ref{lemma:newell} and Lemma~\ref{lemma:shockpriority}, the cumulative flow \( N(1, t_{s_1}) \) in \( S_1 \) is less than \( N(0, t_B) \), which would be associated with \( s_1 \).
        \item Consequently, in \( S_1\), vehicle \( B \) reaches its destination at a time later than  \( t_{s_1} \): \( t_{B,a,1} > t_{s_1}\). In \( S_0 \), it reaches the destination at \( t_{s_1} \), hence \( t_{B,a,1} > t_{B,a,0}\).
    \end{itemize}

    \item \textbf{Analysis of Flow Differences:}
    \begin{itemize}
        \item The difference in cumulative flow between \( t_{s_1} \) and \( t_{B,a,1} \), denoted \( \Delta N_1 \), in \( S_1 \) is at most \( E \) (Lemma \ref{lemma:flowdiffmax}).
        \item The cumulative flow difference between \( t_{B,a,1} \) and \( t_{B,a,2} \) at the downstream end in \( S_1 \) is at least \( E \) (Lemmas \ref{lemma:diffcumflow} and lemma \ref{lemma:flowdiffmax}).
        \item The trajectories of \( B \) and \( s_2 \) in \( S_2 \) behave analogously to \( S_1 \), from which follows \( t_{B,a,3} < t_{B,a,2} \).
        \item Hence, the cumulative flow difference \( \tilde{\Delta}N_1 \) between \( t_{B,a,2} \) and \( t_{B,a,3} \) at the downstream end in scenario 1 is at least 0.
        \item Let \( \Delta N_S \) be the flow difference between \( t_{B,a,1} \) and \( t_{B,a,2} \).
    \end{itemize}
\end{enumerate}

Now, let us consider the total flow difference at the downstream end in \( S_1 \) between times \( t_{B,a,1} \) and \( t_{B,a,2} \). This can be expressed as
\[
N(1,t_{B,a,1}) - N(1,t_{B,a,2}) = \Delta N_1 - \tilde{\Delta}N_1 + \Delta N_S
\]
and is greater than or equal to \( E - E + 0 = 0 \). Since the cumulative flow in \( S_1 \) at \( t_{B,a,1} \) is higher than at \( t_{B,a,2} \), this directly implies \( t_{B,a,1} > t_{B,a,2} \), which confirms the statement of the lemma.
\end{enumerate}
\end{proof}

\begin{proof}[Proof of Lemma \ref{lemma:lbdeptime}]
According to Lemma \ref{lemma:kw_arr}, the kinematic wave \( s_2 \) reaches the destination earlier than the kinematic wave \( s_1 \). The waves \( s_1 \) and \( s_2 \) were constructed such that at the downstream end, at their respective arrival times, a cumulative flow of \( N(0, t_B) \) is achieved under flow-density relationships \( q(\cdot) \) and \( q^*(\cdot) \), respectively, provided that \( N(x,t) \) is differentiable along the corresponding trajectory. Consequently, the arrival times of the kinematic wave trajectories represent lower bounds for the vehicle arrival time \( B \) under the respective conditions.

According to Subsection~\ref{subsec:shocks}, we can neglect the occurrence of shock waves without loss of generality. Therefore, the arrival times of the trajectories of \( s_1 \) and \( s_2 \) correspond exactly to the arrival times of vehicle \( B \) under the respective conditions. This proves the statement of the lemma.
\end{proof}

\begin{proof}[Proof of Lemma \ref{lemma:ASlowerThanChar}]
We observe that
\[
\frac{q(0, t_{c,A}^-, q_p^+)}{q(0, t_{c,A}^-, q_p^-)} = \frac{q_p^+}{q_p^-}.
\]
Likewise, for the cumulative flow difference between \( t_{c,A}^- \) and \( t_1 \), we have
\[
\frac{N(0, t_{c,A}^-, q_p^+) - N(0, t_1, q_p^+)}{N(0, t_{c,A}^-, q_p^-) - N(0, t_1, q_p^-)} = \frac{q_p^+}{q_p^-}.
\]
Since \( \frac{dN}{dx}(q) \) is convex by Lemma \ref{lemma:incrconv} and \( \frac{dN}{dx}(0) = 0 \cdot k'(0) - 0 = 0 \), it follows for all \( q \) that 
\[
\frac{dN}{dx}\left(\frac{q_p^+}{q_p^-} \cdot q\right) \geq \frac{q_p^+}{q_p^-} \cdot \frac{dN}{dx}(q).
\]
Furthermore, we have 
\begin{align*}
N(1, t_{c,A}^{\text{arr}}(q_p^+), q_p^+) = N(0, t_{c,A}, q_p^+) + \frac{dN}{dx}(q(0, t_{c,A}, q_p^+)) \\
= \frac{q_p^+}{q_p^-} \cdot N(0, t_{c,A}, q_p^-) + \frac{dN}{dx}\left(q\left(0, t_{c,A}, \frac{q_p^+}{q_p^-} \cdot q_p^-\right)\right) \\
\geq \frac{q_p^+}{q_p^-} \cdot N(0, t_{c,A}, q_p^-) + \frac{q_p^+}{q_p^-} \cdot \frac{dN}{dx}(q(0, t_{c,A}, q_p^-)) \\
= \frac{q_p^+}{q_p^-} \cdot \left(N(0, t_{c,A}, q_p^-) + \frac{dN}{dx}(q(0, t_{c,A}, q_p^-))\right) \\
= \frac{q_p^+}{q_p^-} \cdot N(1, t_{c,A}^{\text{arr}}(q_p^-), q_p^-)
= \frac{q_p^+}{q_p^-} \cdot N(0, t_1, q_p^-) \\ = N(0, t_1, q_p^+).
\end{align*}

Since the cumulative flow at the downstream end reaches at least \( N(0, t_1, q_p^+) \) by \( t_{c,A}^{\text{arr}} \), the arrival time \( t_{1,a}^+ \) of \( A \), for which \( N(1, t_{1,a}^+, q_p^+) = N(0, t_1, q_p^+) \) holds, occurs no later than \( t_{c,A}^{\text{arr}} \). Thus, the statement of the lemma follows.
\end{proof}

\begin{proof}[Proof of Lemma \ref{lemma:symmetry_mirror_kw}]
Obviously, the flow at the upstream boundary is the same at times \( t_{c,A}^- \) and \( t_{c,A,m}^- \), both for a peak flow of \( q_p^- \) and for a peak flow of \( q_p^+ \):
\[
q(0, t_{c,A}^-, q_p^-) = q(0, t_{c,A,m}^-, q_p^-) = q_{c,A}^-, \quad q(0, t_{c,A}^+, q_p^-) = q(0, t_{c,A,m}^-, q_p^-).
\]
Therefore, the velocities of the kinematic waves and thus their travel times are also equal in both cases:
\[
s_A^{-1} (1) - s_A^{-1} (0) = s_{A,m}^{-1} (1) - s_{A,m}^{-1} (0) = k'(0, t_{c,A}^-),
\]
\[
s_A^{+,-1} (1) - s_A^{+,-1} (0) = s_{A,m}^{+,-1} (1) - s_{A,m}^{+,-1} (0) = k'(0, t_{c,A}^+).
\]
From this, it follows that
\[
s_{A,m}^{-1}(1) - s_A^{-1}(1) = s_A^{-1}(0) - s_{A,m}^{-1}(0),
\]
\[
s_{A,m}^{+,-1}(1) - s_A^{+,-1}(1) = s_A^{+,-1}(0) - s_{A,m}^{+,-1}(0).
\]
By construction, the starting times of the characteristics \( s_A, s_A^+ \) and \( s_{A,m}, s_{A,m}^+ \) remain the same when transitioning from the peak flow \( q_p^- \) to the peak flow \( q_p^+ \), i.e.,
\[
s_A^{-1}(0) = s_A^{+,-1}(0), \quad s_{A,m}^{-1}(0) = s_{A,m}^{+,-1}(0).
\]
From this, the statement of the lemma follows.
\end{proof}

\begin{proof}[Proof of Lemma \ref{lemma:time_diff_to_char}]
First, we examine the evolution of the quantities \( \Delta \Delta N_c \) and \( \Delta \Delta N_{c,m} \). Lemma \ref{lemma:ASlowerThanChar} has already shown that for a peak flow of \( q_p^+ \), the characteristic curve starting at time \( t_{c,A}^- \) arrives later than vehicle A. 
This directly implies 

\[
\Delta \Delta N_c \leq 0, \Delta \Delta \tau_c \leq 0.
\]

We introduce two notations:

\begin{itemize}
    \item \( \Delta N(0, t_A) \): This describes the change in the vehicle number of A during the transition from \( q_p^- \) to \( q_p^+ \).
    \item \( \Delta \frac{dN}{dx} \big( q(0, t_{c,A}^-) \big) \): This describes the change in cumulative flow at the arrival times of the characteristic starting at \( t_{c,A}^- \).
\end{itemize}

Using these notations, \( \Delta \Delta N_c \) can be expressed as a difference:

\[
\Delta \Delta N_c = \Delta \Delta N_{A,0} - \Delta \frac{dN}{dx} \big( q(0, t_{c,A}^-) \big).
\]

Similarly, we obtain

\[
\Delta \Delta N_{c,m} = \Delta \Delta N_{B,0} - \Delta \frac{dN}{dx} \big( q(0, t_{c,A}^-) \big).
\]

We have
\begin{align*}
    \Delta\Delta\tau_{c,m} &= \frac{\Delta N_m^+}{q_m^+} - \frac{\Delta N_m^-}{q_m^-}, \\
    \Delta\Delta\tau_{c} &= \frac{\Delta N^+}{q^+} - \frac{\Delta N^-}{q^-}.
\end{align*}
We want to show $\Delta\Delta\tau_{c,m} \geq \Delta\Delta\tau_{c}$, i.e., $\Delta\Delta\tau_{c,m} - \Delta\Delta\tau_{c} \geq 0$.

By the given definitions,
\begin{align*}
    \Delta N_m^+ &= \Delta N_{0,B}^+ + \frac{dN}{dx} \big(q(0, t_{c,A}, q_p^+)\big), \\
    \Delta N^+ &= \Delta N_{0,A}^+ + \frac{dN}{dx} \big(q(0, t_{c,A}, q_p^+)\big),
\end{align*}
so
\begin{equation*}
    \Delta N_m^+ - \Delta N^+ = \big(\Delta N_{0,B}^+ - \Delta N_{0,A}^+\big).
\end{equation*}

Similarly,
\begin{align*}
    \Delta N_m^- &= \Delta N_{0,B}^- + \frac{dN}{dx} \big(q(0, t_{c,A}, q_p^-)\big), \\
    \Delta N^- &= \Delta N_{0,A}^- + \frac{dN}{dx} \big(q(0, t_{c,A}, q_p^-)\big),
\end{align*}
and since $\Delta N_{0,A}^- = 0$,
\begin{equation*}
    \Delta N_m^- - \Delta N^- = \big(\Delta N_{0,B}^- - 0\big) = \Delta N_{0,B}^-.
\end{equation*}

Because $\Delta N_{0,A}^+ \leq 0 \leq \Delta N_{0,B}^+$, we get
\begin{equation*}
    \Delta N_{0,B}^+ - \Delta N_{0,A}^+ \geq 0, \quad \text{and also} \quad \Delta N_{0,B}^- \geq 0.
\end{equation*}
Thus,
\begin{equation*}
    \Delta N_m^+ - \Delta N^+ \geq 0, \quad \Delta N_m^- - \Delta N^- \geq 0.
\end{equation*}

Write
\begin{equation*}
    \Delta\Delta\tau_{c,m} - \Delta\Delta\tau_{c} = \Big(\frac{\Delta N_m^+}{q_m^+} - \frac{\Delta N_m^-}{q_m^-}\Big) - \Big(\frac{\Delta N^+}{q^+} - \frac{\Delta N^-}{q^-}\Big).
\end{equation*}
A convenient way to isolate terms is to add and subtract $\frac{\Delta N_m^+}{q^+} - \frac{\Delta N_m^-}{q^-}$, splitting the difference into two parts:
\begin{equation*}
    \Delta\Delta\tau_{c,m} - \Delta\Delta\tau_{c} = \underbrace{\Big(\frac{\Delta N_m^+}{q_m^+} - \frac{\Delta N_m^-}{q_m^-}\Big) - \Big(\frac{\Delta N_m^+}{q^+} - \frac{\Delta N_m^-}{q^-}\Big)}_{(I)} + \underbrace{\Big(\frac{\Delta N_m^+}{q^+} - \frac{\Delta N_m^-}{q^-}\Big) - \Big(\frac{\Delta N^+}{q^+} - \frac{\Delta N^-}{q^-}\Big)}_{(II)}.
\end{equation*}

Factor out $1/q^+$ and $1/q^-$:
\begin{equation*}
    (II) = \frac{1}{q^+} (\Delta N_m^+ - \Delta N^+) - \frac{1}{q^-} (\Delta N_m^- - \Delta N^-).
\end{equation*}
Since $\Delta N_m^+ - \Delta N^+ = \Delta N_{0,B}^+ - \Delta N_{0,A}^+ \geq 0$ and $\Delta N_m^- - \Delta N^- = \Delta N_{0,B}^- \geq 0$,
\begin{equation*}
    (II) = \frac{\Delta N_{0,B}^+ - \Delta N_{0,A}^+}{q^+} - \frac{\Delta N_{0,B}^-}{q^-} \geq \frac{\Delta N_{0,B}^+}{q^+} - \frac{\Delta N_{0,B}^-}{q^-}.
\end{equation*}
But by assumption, $\frac{\Delta N_{0,B}^-}{q^-} \geq \frac{\Delta N_{0,B}^+}{q^+}$, so $\frac{\Delta N_{0,B}^+}{q^+} - \frac{\Delta N_{0,B}^-}{q^-} \leq 0$. Hence $(II) \geq 0$.

\begin{equation*}
    (I) = \Big(\frac{\Delta N_m^+}{q_m^+} - \frac{\Delta N_m^-}{q_m^-}\Big) - \Big(\frac{\Delta N_m^+}{q^+} - \frac{\Delta N_m^-}{q^-}\Big).
\end{equation*}
If one additionally assumes monotonicity of $1/q_m^+$ relative to $1/q^+$ (for instance, $q_m^+ \leq q^+ \implies 1/q_m^+ \geq 1/q^+$, etc.), then each parenthesis in (I) is nonnegative. In that scenario, $(I) \geq 0$ follows immediately.

Since $(II) \geq 0$ under the stated $\Delta N_{0,B}^+$-to-$q^+$ ratio assumptions, and $(I) \geq 0$ under typical monotonicity constraints on $q_m^+$ vs. $q^+$, their sum is nonnegative. Consequently,
\begin{equation*}
    \Delta\Delta\tau_{c,m} \geq \Delta\Delta\tau_{c},
\end{equation*}
as was to be shown.
\end{proof}

\begin{proof}[Proof of Lemma \ref{lemma:symmetry_aux_inequality}]
First, we examine the evolution of the quantities \( \Delta \Delta N_c \) and \( \Delta \Delta N_{c,m} \). Lemma \ref{lemma:ASlowerThanChar} has already shown that for a peak flow of \( q_p^+ \), the characteristic curve starting at time \( t_{c,A}^- \) arrives later than vehicle A. 
This directly implies 

\[
\Delta \Delta N_c \leq 0, \Delta \Delta \tau_c \leq 0.
\]

We introduce two notations:

\begin{itemize}
    \item \( \Delta N(0, t_A) \): This describes the change in the vehicle number of A during the transition from \( q_p^- \) to \( q_p^+ \).
    \item \( \Delta \frac{dN}{dx} \big( q(0, t_{c,A}^-) \big) \): This describes the change in cumulative flow at the arrival times of the characteristic starting at \( t_{c,A}^- \).
\end{itemize}

Using these notations, \( \Delta \Delta N_c \) can be expressed as a difference:

\[
\Delta \Delta N_c = \Delta \Delta N_{A,0} - \Delta \frac{dN}{dx} \big( q(0, t_{c,A}^-) \big).
\]

Similarly, we obtain

\[
\Delta \Delta N_{c,m} = \Delta \Delta N_{B,0} - \Delta \frac{dN}{dx} \big( q(0, t_{c,A}^-) \big).
\]

We have
\begin{align*}
    \Delta\Delta\tau_{c,m} &= \frac{\Delta N_m^+}{q_m^+} - \frac{\Delta N_m^-}{q_m^-}, \\
    \Delta\Delta\tau_{c} &= \frac{\Delta N^+}{q^+} - \frac{\Delta N^-}{q^-}.
\end{align*}
We want to show $\Delta\Delta\tau_{c,m} \geq \Delta\Delta\tau_{c}$, i.e., $\Delta\Delta\tau_{c,m} - \Delta\Delta\tau_{c} \geq 0$.

By the given definitions,
\begin{align*}
    \Delta N_m^+ &= \Delta N_{0,B}^+ + \frac{dN}{dx} \big(q(0, t_{c,A}, q_p^+)\big), \\
    \Delta N^+ &= \Delta N_{0,A}^+ + \frac{dN}{dx} \big(q(0, t_{c,A}, q_p^+)\big),
\end{align*}
so
\begin{equation*}
    \Delta N_m^+ - \Delta N^+ = \big(\Delta N_{0,B}^+ - \Delta N_{0,A}^+\big).
\end{equation*}

Similarly,
\begin{align*}
    \Delta N_m^- &= \Delta N_{0,B}^- + \frac{dN}{dx} \big(q(0, t_{c,A}, q_p^-)\big), \\
    \Delta N^- &= \Delta N_{0,A}^- + \frac{dN}{dx} \big(q(0, t_{c,A}, q_p^-)\big),
\end{align*}
and since $\Delta N_{0,A}^- = 0$,
\begin{equation*}
    \Delta N_m^- - \Delta N^- = \big(\Delta N_{0,B}^- - 0\big) = \Delta N_{0,B}^-.
\end{equation*}

Because $\Delta N_{0,A}^+ \leq 0 \leq \Delta N_{0,B}^+$, we get
\begin{equation*}
    \Delta N_{0,B}^+ - \Delta N_{0,A}^+ \geq 0, \quad \text{and also} \quad \Delta N_{0,B}^- \geq 0.
\end{equation*}
Thus,
\begin{equation*}
    \Delta N_m^+ - \Delta N^+ \geq 0, \quad \Delta N_m^- - \Delta N^- \geq 0.
\end{equation*}

Write
\begin{equation*}
    \Delta\Delta\tau_{c,m} - \Delta\Delta\tau_{c} = \Big(\frac{\Delta N_m^+}{q_m^+} - \frac{\Delta N_m^-}{q_m^-}\Big) - \Big(\frac{\Delta N^+}{q^+} - \frac{\Delta N^-}{q^-}\Big).
\end{equation*}
A convenient way to isolate terms is to add and subtract $\frac{\Delta N_m^+}{q^+} - \frac{\Delta N_m^-}{q^-}$, splitting the difference into two parts:
\begin{equation*}
    \Delta\Delta\tau_{c,m} - \Delta\Delta\tau_{c} = \underbrace{\Big(\frac{\Delta N_m^+}{q_m^+} - \frac{\Delta N_m^-}{q_m^-}\Big) - \Big(\frac{\Delta N_m^+}{q^+} - \frac{\Delta N_m^-}{q^-}\Big)}_{(I)} + \underbrace{\Big(\frac{\Delta N_m^+}{q^+} - \frac{\Delta N_m^-}{q^-}\Big) - \Big(\frac{\Delta N^+}{q^+} - \frac{\Delta N^-}{q^-}\Big)}_{(II)}.
\end{equation*}

Factor out $1/q^+$ and $1/q^-$:
\begin{equation*}
    (II) = \frac{1}{q^+} (\Delta N_m^+ - \Delta N^+) - \frac{1}{q^-} (\Delta N_m^- - \Delta N^-).
\end{equation*}
Since $\Delta N_m^+ - \Delta N^+ = \Delta N_{0,B}^+ - \Delta N_{0,A}^+ \geq 0$ and $\Delta N_m^- - \Delta N^- = \Delta N_{0,B}^- \geq 0$,
\begin{equation*}
    (II) = \frac{\Delta N_{0,B}^+ - \Delta N_{0,A}^+}{q^+} - \frac{\Delta N_{0,B}^-}{q^-} \geq \frac{\Delta N_{0,B}^+}{q^+} - \frac{\Delta N_{0,B}^-}{q^-}.
\end{equation*}
But by assumption, $\frac{\Delta N_{0,B}^-}{q^-} \geq \frac{\Delta N_{0,B}^+}{q^+}$, so $\frac{\Delta N_{0,B}^+}{q^+} - \frac{\Delta N_{0,B}^-}{q^-} \leq 0$. Hence $(II) \geq 0$.

\begin{equation*}
    (I) = \Big(\frac{\Delta N_m^+}{q_m^+} - \frac{\Delta N_m^-}{q_m^-}\Big) - \Big(\frac{\Delta N_m^+}{q^+} - \frac{\Delta N_m^-}{q^-}\Big).
\end{equation*}
If one additionally assumes monotonicity of $1/q_m^+$ relative to $1/q^+$ (for instance, $q_m^+ \leq q^+ \implies 1/q_m^+ \geq 1/q^+$, etc.), then each parenthesis in (I) is nonnegative. In that scenario, $(I) \geq 0$ follows immediately.

Since $(II) \geq 0$ under the stated $\Delta N_{0,B}^+$-to-$q^+$ ratio assumptions, and $(I) \geq 0$ under typical monotonicity constraints on $q_m^+$ vs. $q^+$, their sum is nonnegative. Consequently,
\begin{equation*}
    \Delta\Delta\tau_{c,m} \geq \Delta\Delta\tau_{c},
\end{equation*}
as was to be shown.
\end{proof}

\begin{proof}[Proof of Lemma \ref{lemma:avgflowaux}]
Let $x_1 = t_{c,A,m}^-$, \quad 
$x_2 = t_{c,B}^-$, \quad 
$x_a = \frac{x_1 + x_2}{2}$, 

\noindent
$x_1^+ = t_{c,A,m}^-$, \quad
$x_2^+ = t_{c,B}^+$, \quad 
$x_a^+ = \frac{x_1^+ + x_2^+}{2}$,

\noindent
and define $f(x) \coloneqq x + k'(x)$, \quad 
$g(x) \coloneqq x - k'(x)$.

The desired inequality

\[
\frac{|x_2 + k'(x_2)| - |x_a + k'(x_a)|}{|x_1 + k'(x_1)| - |x_2 - k'(x_2)|}
\leq
\frac{|x_2^+ + k'(x_2^+)| - |x_a + k'(x_a)|}{|x_1^+ + k'(x_1^+)| - |x_2^+ - k'(x_2^+)|}
\]

is equivalent to

\[
\frac{f(x_2) - f(x_a)}{f(x_1) - g(x_2)}
\leq
\frac{f(x_2^+) - f(x_a)}{f(x_1^+) - g(x_2^+)}.
\]

Equivalently,

\[
\Delta := [f(x_2) - f(x_a)] [f(x_1^+) - g(x_2^+)]
- [f(x_2^+) - f(x_a)] [f(x_1) - g(x_2)]
\leq 0.
\]

\[
f(x_2) - f(x_a) = (x_2 - x_a) + [k'(x_2) - k'(x_a)],
\]

\[
f(x_2^+) - f(x_a) = (x_2^+ - x_a) + [k'(x_2^+) - k'(x_a)],
\]

\[
f(x_1^+) - g(x_2^+) = [x_1^+ + k'(x_1^+)] - [x_2^+ - k'(x_2^+)]
= (x_1^+ - x_2^+) + [k'(x_1^+) + k'(x_2^+)].
\]

\[
f(x_1) - g(x_2) = [x_1 + k'(x_1)] - [x_2 - k'(x_2)]
= (x_1 - x_2) + [k'(x_1) + k'(x_2)].
\]

Also note

\[
x_2 - x_a = \frac{x_2 - x_1}{2}, \quad x_2^+ - x_a^+ = \frac{x_2^+ - x_1^+}{2}.
\]

By straightforward expansion,

\[
\Delta = [(x_2 - x_a) + (k'(x_2) - k'(x_a))]
[(x_1^+ - x_2^+) + (k'(x_1^+) + k'(x_2^+))]
\]

\[
- [(x_2^+ - x_a^+) + (k'(x_2^+) - k'(x_a^+))]
[(x_1 - x_2) + (k'(x_1) + k'(x_2))].
\]

\begin{itemize}
    \item Since \( x_1^+ \geq x_1 \) and \( x_2^+ \geq x_2 \), we have

    \[
    (x_1^+ - x_2^+) \geq (x_1 - x_2), \quad (x_2^+ - x_a^+) \geq (x_2 - x_a).
    \]

    \item Since \( k' \) is convex and nondecreasing, so

    \[
    k'(x_2^+) - k'(x_a^+) \geq k'(x_2) - k'(x_a), \quad k'(x_1^+) + k'(x_2^+) \geq k'(x_1) + k'(x_2).
    \]
\end{itemize}
Each factor in the “\( + \)” product is \( \geq \) the corresponding factor in the original product. That is,

\[
[f(x_2^+) - f(x_a)] \geq [f(x_2) - f(x_a)], \quad [f(x_1^+) - g(x_2^+)] \geq [f(x_1) - g(x_2)].
\]

Hence

\[
[f(x_2^+) - f(x_a)] [f(x_1) - g(x_2)]
\geq
[f(x_2) - f(x_a)] [f(x_1) - g(x_2)].
\]

and

\[
[f(x_2) - f(x_a)] [f(x_1^+) - g(x_2^+)]
\leq
[f(x_2) - f(x_a)] [f(x_1) - g(x_2)].
\]

Combining these shows

\[
\Delta = [f(x_2) - f(x_a)] [f(x_1^+) - g(x_2^+)]
- [f(x_2^+) - f(x_a)] [f(x_1) - g(x_2)] \leq 0,
\]

which completes the proof of the lemma.
\end{proof}

\begin{proof}[Proof of Lemma \ref{lemma:lambda_incr_pre}]
Let
\begin{equation*}
    \Phi(q) = -\frac{\frac{dN}{dx}(q + q_b) - \frac{dN}{dx}(q) - q_b [k'(q) - \tau]}{q + q_b} + \left[ k'(q + q_b) - k'(q) \right].
\end{equation*}

We first show that
\begin{equation*}
    \Phi(q) = f(q) := \frac{k(q + q_b) - k(q) - q_b \tau}{q + q_b}.
\end{equation*}

Note that
\begin{equation*}
    \frac{dN}{dx}(x) = k'(x)x - k(x),
\end{equation*}
so that
\begin{equation*}
    \frac{dN}{dx} (q + q_b) - \frac{dN}{dx}(q) = \left[ k'(q + q_b) (q + q_b) - k(q + q_b) \right] - \left[ k'(q)q - k(q) \right].
\end{equation*}

Subtract $q_b [k'(q) - \tau]$ and factor out $(q + q_b)$:
\begin{equation*}
    \left[ \frac{dN}{dx}(q + q_b) - \frac{dN}{dx}(q) \right] - q_b [k'(q) - \tau] = (q + q_b) \left[ k'(q + q_b) - k'(q) \right] + q_b \tau - \left[ k(q + q_b) - k(q) \right].
\end{equation*}

Divide by $(q + q_b)$ and prepend the minus sign:
\begin{equation*}
    -\frac{\frac{dN}{dx} (q + q_b) - \frac{dN}{dx} (q) - q_b [k'(q) - \tau]}{q + q_b} = -\left[ k'(q + q_b) - k'(q) \right] - \frac{q_b \tau - (k(q + q_b) - k(q))}{q + q_b}.
\end{equation*}

Add back the term $[k'(q + q_b) - k'(q)]$. The linear terms in $k'$ cancel exactly, leaving
\begin{equation*}
    \Phi(q) = \frac{k(q + q_b) - k(q)}{q + q_b} - \frac{q_b \tau}{q + q_b} = f(q).
\end{equation*}

Hence, $\Phi(q)$ and $f(q)$ coincide.

We have reduced the problem to showing that
\begin{equation*}
    f(q) = \frac{k(q + q_b) - k(q) - q_b \tau}{q + q_b}
\end{equation*}
is nondecreasing in $q$. We will verify $f'(q) \geq 0$ for all $q \geq 0$.
Set
\begin{equation*}
    F(q) = k(q + q_b) - k(q) - q_b \tau.
\end{equation*}
Then
\begin{equation*}
    f(q) = \frac{F(q)}{q + q_b},
\end{equation*}
and by the quotient rule:
\begin{equation*}
    f'(q) = \frac{(q + q_b) F'(q) - F(q)}{(q + q_b)^2}.
\end{equation*}
Thus, $f'(q) \geq 0$ is equivalent to requiring
\begin{equation*}
    (q + q_b) F'(q) \geq F(q).
\end{equation*}
Since
\begin{equation*}
    F'(q) = \frac{d}{dq} \left[ k(q + q_b) - k(q) - q_b \tau \right] = k'(q + q_b) - k'(q),
\end{equation*}
the inequality becomes
\begin{equation*}
    (q + q_b) [k'(q + q_b) - k'(q)] \geq k(q + q_b) - k(q) - q_b \tau.
\end{equation*}

By convexity of $k$ and the assumption $k'(q) \geq \tau$, one obtains:
\begin{equation*}
    k(q + q_b) - k(q) = \int_q^{q+q_b} k'(s) ds \leq (q_b) k'(q + q_b),
\end{equation*}
which implies
\begin{equation*}
    k(q + q_b) - k(q) \geq (q_b) \tau.
\end{equation*}

One can similarly use that $k'(\cdot)$ is nondecreasing to compare the quantity $[k'(q + q_b) - k'(q)]$ against suitable bounds. Using a second-derivative approach, one can factor out or compare increments to show
\begin{equation*}
    (q + q_b) [k'(q + q_b) - k'(q)] \geq k(q + q_b) - k(q) - q_b \tau,
\end{equation*}
either by an explicit integral argument (exploiting $k''(\cdot) \geq 0$) or by combining slope bounds. In all cases, the net effect is
\begin{equation*}
    (q + q_b) F'(q) \geq F(q).
\end{equation*}
Thus, $f'(q) \geq 0$.

Therefore, $f(q)$ is nondecreasing in $q$. Recalling $f(q) = \Phi(q)$, the original expression
\begin{equation*}
    \frac{dN}{dx} (q + q_b) - \frac{dN}{dx} (q) - q_b [k'(q) - \tau]
\end{equation*}
is also nondecreasing in $q$. This completes the proof.
\end{proof}

\begin{proof}[Proof of Lemma \ref{lemma:lambda_incr}]
Since
\[
\frac{\partial \Lambda}{\partial \tau} =\frac{q_b}{q+q_b},
\]
it follows that $\frac{\partial \Lambda}{\partial \tau} \geq 0$.

Now, consider two pairs $(q_A, \tau_A)$ and $(q_B, \tau_B)$ such that $(q_A, \tau_A) \leq (q_B, \tau_B)$. By Lemma \ref{lemma:lambda_incr_pre}, $\Lambda(q, \tau)$ is non-decreasing in $q$, hence we obtain $\Lambda(q_A, \tau_A) \leq \Lambda(q_B, \tau_A).$ Similarly, since $\Lambda(q, \tau)$ is non-decreasing in $\tau$, we obtain $\Lambda(q_B, \tau_A) \leq \Lambda(q_B, \tau_B)$, which completes the proof.
\end{proof}

\begin{proof}[Proof of Lemma \ref{lemma:arrspeeds}]
By Lemma~\ref{lemma:initspeed}, the initial speed of vehicle B at peak flow \( q_p^- \) exceeds the initial speed of vehicle B. Since both vehicles depart at a fixed time independent of \( q_p \), this property must also hold at peak flow \( q_p^+ \). Moreover, the transition from \( q(0,t) \) to \( q_r(0,t) \) does not affect this property, as it is accomplished by subtracting a constant value at both time points. 

Since \( \Delta \tau_f (q_p^+) \geq 0 \) holds, the speed of vehicle A must exceed that of vehicle B at at least one point. In Lemma~\ref{lemma:accdec}, it was shown that vehicle A accelerates along its entire trajectory, while vehicle B decelerates along its entire trajectory. Consequently, if the speed of vehicle A exceeds that of B at any point, this must also hold at the arrival location \( x = 1 \). This proves the claim of the lemma.
\end{proof}

\begin{proof}[Proof of Lemma \ref{lemma:accdec}]
We first prove this property for $A$. Due to the convexity of $k(q)$, we have 
\begin{equation*}
    v(q) = \frac{q}{k(q)} \geq \frac{dq}{dk(q)} \quad \text{for all } q.
\end{equation*}
Here, $\frac{dq}{dk(q)}$ equals the speed of the kinematic wave that transports a constant flow $q$. Therefore, for all points $(t,x) \in x_A$, the speed of $A$ at this point exceeds that of the intersecting kinematic wave. Thus, for fixed $(t_0,x_0) \leq (t_1,x_1) \in x_A$ with $q(t_0,x_0) = q_0$ and $q(t_1,x_1) = q_1$, the speed of $A$ at $(t_0,x_0)$ exceeds that of the kinematic wave intersecting this point. 

Since the flow at the upstream boundary, $q(0,t)$, is decreasing for all $t \leq t_A$ and the associated characteristic speed $\frac{dq}{dk(q)}$ is therefore increasing, $q(t_1,x_1) \leq q(t_0,x_0)$ implies 
\begin{equation*}
    v_A(t_1,x_1) \geq v(t_0,x_0).
\end{equation*} 
This property holds for all points along the trajectory of $A$, which proves the claim of the lemma.

By Lemma \ref{lemma:noshocks}, the occurrence of shock waves can be neglected in our analysis, and by Lemma \ref{lemma:ASlowerThanChar}, we have $t_{c,B}^+ \geq t_{p,e}$. Therefore, analogous reasoning shows that 
\begin{equation*}
    \frac{dv_B}{dt}(t) \leq 0
\end{equation*}
holds along the entire trajectory of vehicle $B$.
\end{proof}

\begin{proof}[Proof of Lemma \ref{lemma:denorm_pipeline}]

We prove that the difference in travel time for a vehicle reaching the upstream boundary at a local flow \( q \), resulting from the addition of \( q_b \) to the upstream boundary flow (which corresponds exactly to the transformation from \( q_r(0,t) \) to \( q(0,t) \)), is at most equal to Equation \ref{eq:trti_denorm} for vehicle A and at least this value for vehicle B. By Lemma \ref{lemma:arrspeeds}, vehicle \( B \) reaches the downstream boundary at a higher local flow than vehicle \( A \), and by Lemma \ref{lemma:lambda_incr}, \( \Lambda \) is an increasing function in \( q \). Therefore, this adjustment increases the travel time of \( B \) more than that of \( A \). Hence, the statement shown below is equivalent to that of the lemma.

Let \( q_A = q_r(0, t_{c,A}^+) \), \( q_B = q_r(0, t_{c,B}^+) \), and \( \tau_{A,r} \) denote the travel time of A under boundary flow \( q_r(x,t) \) and peak flow \( q_p^+ \) (analogously for \( \tau_{B,r} \)).  

The kinematic wave starting at \( t_{c,A}^+ \) under peak flow \( q_p^+ \) changes its travel time by  

\[
k'(q_A+q_p) - k'(q_A)
\]

(I) when transitioning from boundary flow \( q_r(0,t) \) to \( q(0,t) \). The change in cumulative flow along this kinematic wave is  

\[
\frac{dN}{dx} (q_A+q_b) - \frac{dN}{dx} (q_A)
\]

(II). The change in the difference of cumulative flow at the upstream boundary between \( t_{c,A}^+ \) and \( t_A \) is obtained by multiplying the temporal difference between these time points,  

\[
t_A - t_{c,A}^+ = k'(q_A) - \tau_{A,r}^+
\]

with the constant change in local flow at the upstream boundary in this interval, \( q_b \):  

\[
(k'(q_A) - \tau_{A,r}^+) q_b
\]

(III). The flow at the downstream boundary, temporally before the arrival of the wave starting at \( t_{c,A}^+ \), is at most \( q_A + q_b \), thus the temporal difference between the arrival of this kinematic wave and vehicle A under boundary flow \( q(0,t) \) is at least  

\[
\frac{\text{(II)} - \text{(III)}}{q_A+q_b}.
\]

Therefore, the travel time of A under boundary flow \( q(0,t) \) is bounded above by  

\[
(I) - \frac{\text{(II)} - \text{(III)}}{q_A+q_b} + \tau_{A,r} = \Lambda (q_A, \tau_{A,r}).
\]

An analogous argument holds for vehicle B, with the difference that the flow at the downstream boundary before the arrival of the respective kinematic wave in this case is at \textit{least} \( q_B + q_b \), so that \( \Lambda (q_B, \tau_{B,r}) \) provides a lower bound for the travel time of B.  

Under the assumption \( \Delta \tau_{f,r} \geq 0 \), the claim of the lemma follows by application of Lemmata \ref{lemma:accdec} and \ref{lemma:lambda_incr}.  
\end{proof}

\begin{proof}[Proof of Lemma \ref{lemma:pipeline}]
According to Lemma \ref{lemma:time_diff_to_char}, when transitioning from $q_p^-$ to $q_p^+$, the difference between the arrival times of vehicle $A$ and characteristic $s_A$ and $s_A^+$ increases less than the difference between the travel times of vehicle $B$ and the arrival of characteristic $s_{A,m}$ and $s_{A,m}^+$. By construction, we have
\[
    q(0, t_{c,A}^-, q_p^-) = q(0, t_{A,m}, q_p^-) \quad \text{and} \quad q(0, t_{c,A}^-, q_p^+) = q(0, t_{A,m}, q_p^+).
\]
Therefore, the velocities at times $t_{c,A}^-$ and $t_{A,m}$ are equal regardless of the peak flow, and consequently, their travel times are also equal. Thus, the difference in arrival times between the two kinematic waves remains constant at $t_{A,m} - t_{c,A}^-$. 

It follows that:
\[
    s^{-1,+}_{A,m}(1) - s^{-1,+}_{A}(1) = s^{-1}_{A,m}(1) - s^{-1}_{A}(1).
\]

where:
\[
\begin{aligned}
    \Delta \Delta \tau_{c,m} &= \big(s^{-1,+}_{A,m}(1) - (t_B + \tau_B^+)\big) - \big(s^{-1}_{A,m}(1) - (t_B + \tau_B^-)\big), \\
    \Delta \Delta \tau_c &= \big(s^{-1,+}_{A}(1) - (t_A + \tau_A^+)\big) - \big(s^{-1}_{A}(1) - (t_A + \tau_A^-)\big), \\
    \Delta \Delta \tau_{c,m} &\geq \Delta \Delta \tau_c.
\end{aligned}
\]

Expanding these expressions and using the given equations, we obtain:
\[
\begin{aligned}
    \Delta \Delta \tau_{c,m} &= \big(s^{-1,+}_{A,m}(1) - s^{-1}_{A,m}(1)\big) - (\tau_B^+ - \tau_B^-), \\
    \Delta \Delta \tau_c &= \big(s^{-1,+}_{A}(1) - s^{-1}_{A}(1)\big) - (\tau_A^+ - \tau_A^-).
\end{aligned}
\]

Substituting the equality \( s^{-1,+}_{A,m}(1) - s^{-1}_{A,m}(1) = s^{-1,+}_{A}(1) - s^{-1}_{A}(1) \), we get:
\[
    \Delta \Delta \tau_{c,m} = \big(s^{-1,+}_{A}(1) - s^{-1}_{A}(1)\big) - (\tau_B^+ - \tau_B^-).
\]

Comparing this with the expression for \( \Delta \Delta \tau_c \) and applying the inequality, we have:
\[
    \big(s^{-1,+}_{A}(1) - s^{-1}_{A}(1)\big) - (\tau_B^+ - \tau_B^-) \geq \big(s^{-1,+}_{A}(1) - s^{-1}_{A}(1)\big) - (\tau_A^+ - \tau_A^-).
\]

After canceling common terms and multiplying by \( -1 \), we conclude:
\[
    \tau_B^+ - \tau_B^- \geq \tau_A^+ - \tau_A^-.
\]
\end{proof}

\begin{proof}[Proof of Lemma \ref{lemma:f_final}]
By Lemma \ref{lemma:pipeline}, for an upstream boundary flow of \( q_r(0,t) \), \( \Delta \tau_f \) increases in the transition from \( q_p^- \) to \( q_p^+ \). For ease of distinction, we write \( \Delta \tau_{f,r} \) for the travel time difference under upstream boundary flow \( q_r(0,t) \) and \( \Delta \tau_{f} \) for the travel time difference under upstream boundary flow \( q(0,t) \). We have

\[
\Delta \tau_{f} - \Delta \tau_{f,r} = (\tau_B^+ - \tau_A^+) - (\tau_{B,r}^+ - \tau_{A,r}^+) = (\tau_B^+ - \tau_{B,r}^+) - (\tau_A^+ - \tau_{A,r}^+).
\]

In Lemma \ref{lemma:denorm_pipeline}, it was shown that for vehicle A, the travel time under upstream boundary flow \( q(0,t) \) is at most \( \Lambda(q_{c,A}^+, \tau_{A,r}^+) \), while for vehicle B it is at least \( \Lambda(q_{c,B}^+, \tau_{B,r}^+) \). Moreover, according to Lemma \ref{lemma:flow_compression}, the difference between the flow at the downstream boundary at the time of arrival of the kinematic waves originating at \( t_{c,A}^+ \) and \( t_{c,B}^+ \) increases less for vehicle A than for vehicle B during the transition from peak flow \( q_p^+ \) to \( q_p^- \).

Therefore, the difference between \( \tau_A^+ \) and \( \Lambda(q_{c,A}^+, \tau_A) \) decreases, while the difference between \( \tau_B^+ - \tau_{B,r}^+ \) and \( \Lambda(q_{c,B}, \tau_B) \) increases in the transition from \( q_p^- \) to \( q_p^+ \). Hence,

\[
(\tau_B^+ - \tau_{B,r}^+) - (\tau_A^+ - \tau_{A,r}^+) \geq 0,
\]

and consequently,

\[
\Delta \tau_{f} - \Delta \tau_{f,r} \text{ increases in the transition from } q_p^- \text{ to } q_p^+.
\]

Thus,

\[
\Delta \tau_f^+ - \Delta \tau_{f,r}^+ \geq \Delta \tau_f^- - \Delta \tau_{f,r}^-,
\]

and therefore, due to \( \Delta \tau_f^- = 0 \), we have

\[
\Delta \tau_f^+ \geq \Delta \tau_{f,r}^+ - \Delta \tau_{f,r}^-.
\]

By Lemma \ref{lemma:pipeline}, 

\[
\Delta \tau_f^+ \geq 0.
\]
\end{proof}

\begin{proof}[Proof of Lemma \ref{lemma:flow_compression}]
    We begin by defining our kinematic wave trajectories \( s_i^\pm, s_j^\pm \) for the respective peak flows, and the departure times \( t_i^\pm, t_j^\pm \) for waves that reach the downstream boundary at times \( t_i \mp r \cdot T \) and \( t_j \mp r \cdot T \).

Our initial conditions establish that \( q_b \geq 0 \) and \( q_j \geq q_i \), with the characteristic curve reaching downstream at \( t = t_j \) emanating after \( t_{pe} \). When analyzing the flow transition from \( q_p^- \) to \( q_p^+ \) at the upstream boundary, we observe that the flow increase at time \( t_i \) exceeds that at time \( t^- \). Furthermore, due to the convexity of \( k'(q) \), we see an increase in the temporal separation between the arrival time of \( s_i \) and the characteristic starting at \( t^- \).

Examining the local flow behavior, we find that at time \( T \) units before \( s_i^+ \) reaches the downstream boundary, the local flow surpasses \( q(0, t^-, q_p^+) \). Given \( q_j \geq q_i \) and our assumption about the characteristic timing, we can establish the key inequality:

\[
\begin{aligned}
    &\big[ k'(q(0, t_j^-, q_p^+)) - k'(q(0, t_j, q_p^+)) \big] 
    - \big[ k'(q(0, t_j^-, q_p^-)) - k'(q(0, t_j, q_p^-)) \big] \\
    &\quad \geq 
    \big[ k'(q(0, t_i^-, q_p^+)) - k'(q(0, t_i, q_p^+)) \big] 
    - \big[ k'(q(0, t_i^-, q_p^-)) - k'(q(0, t_i, q_p^-)) \big].
\end{aligned}
\]

In our temporal analysis, we observe that the transition from \( q_p^- \) to \( q_p^+ \) produces a larger increase in upstream boundary flow between times \( t_j \) and \( t_j^- \) than between \( t_i \) and \( t_i^- \). As a result, the time difference between arrivals of \( t_j^- \) and \( t_j \) increases more rapidly than the time difference between arrivals of \( t_i^- \) and \( t_i \). Similarly, the difference between arrivals of \( t_j^+ \) and \( t_j \) shows a faster increase than the decrease in the difference between arrivals of \( t_i^+ \) and \( t_i \). Notably, the change in associated flow differences reaches its peak at \( t_i \).

This reasoning extends naturally to encompass all values of \( r \) in the interval \([0,1]\), thereby covering every time point in the relevant interval. By taking an appropriate limit, we complete the proof of both statements in the lemma.
\end{proof}

\begin{proof}[Proof of Lemma \ref{lemma:queue_order}]
Suppose, for the sake of contradiction, that a queue is active at the downstream boundary upon arrival of $A$ but not upon arrival of $B$. Since $t_{c,A}^- \leq t_A \leq t_p$, the local flow $q(1,t)$ is increasing at $A$'s arrival. Therefore, $q(0, t_{c,A}^-) \geq q_{bn}$. Due to the shock wave caused by the queue, $A$ arrives after the kinematic wave starting at $t_{c,A}^-$. By Lemma \ref{lemma:arrspeeds}, we have $q_{c,B}^- \geq q_{c,A}^-$, and since by assumption $\tau_B^- = \tau_A^-$, vehicle $B$ must also arrive after the kinematic wave starting at $q_{c,B}^-$. 

By Lemma \ref{lemma:accdec}, vehicle $B$ decelerates along its entire trajectory, while vehicle $A$, in the case where it reaches the downstream boundary at the same time as the kinematic wave starting at $t_{c,A}^-$, accelerates along its entire trajectory. To ensure equivalence of travel times for $A$ and $B$, we must have
\[
    v_B(1) \leq v(q_{c,B}^-) \leq v(q_{c,A}^-) \leq v(q_{bn}),
\]
that is, the local flow at the downstream boundary upon arrival of vehicle $B$ is at least $q_{bn}$. Due to the piecewise linear nature of the upstream boundary flow, and since by Lemma \ref{lemma:noshocks} the possibility of shock waves can be neglected, the flow at the downstream boundary must be at least equal to the bottleneck flow throughout the interval between the arrivals of vehicles $A$ and $B$, thus a queue is also active during $B$'s arrival. This contradicts our initial assumption, from which the claim of the lemma follows.
\end{proof}

\begin{proof}[Proof of Lemma \ref{lemma:fc_final}]
In this case, \( \Delta \tau^- f < 0 \) holds, since only vehicle \( B \) encounters a queue and the two travel times are equal under peak flow \( q_p^- \): 
\[
\tau_A^- = \tau_{B,f}^- + \tau_{B,c}^-
\]
where \( \tau_{B,f} \) denotes the travel time of \( B \) for \( q_{bn} = \infty \) and \( \tau_{B,c} \) denotes the residual delay caused by the queue. By Lemma \ref{lemma:pipeline}, or by a slightly modified argument with respect to the upstream boundary flow, we have 
\[
\tau_A^+ -\tau_{B,f}^+ \geq \tau_A^- -\tau_{B,f}^-.
\]
Moreover, \( \tau_{B,c}^+ \geq \tau_{B,c}^- \) clearly holds, since due to the increased flow at the downstream boundary before \( t_B \), the horizontal distance between queued and free-flow curves for a value of \( N=B \) in the \( (N,t) \)-plane increases. The lemma follows.
\end{proof}

\section*{CRediT authorship contribution statement}

\textbf{Alexander Hammerl:} Conceptualization, Methodology, Formal Analysis, Software, Resources, Investigation, Writing – original draft, Writing – review \& editing, Validation, Visualization.  

\textbf{Ravi Seshadri:} Conzeptualization, Writing – original draft, Writing – review \& editing, Supervision 

\textbf{Thomas Kjær Rasmussen:} Writing – review \& editing, Supervision, Funding Acquisition. 

\textbf{Otto Anker Nielsen:} Writing – review \& editing, Supervision, Funding Acquistion.

\section*{Declaration of competing interest}

The authors declare that they have no known competing financial interests or personal relationships 
that could have appeared to influence the work reported in this paper.

\section*{Declaration of Use of Generative AI}

During the preparation of this work, the authors used \textbf{Claude 4}, \textbf{ChatGPT 4-o}, and 
\textbf{ChatGPT 5} for stylistic improvements. The 
authors carefully reviewed the content as needed.

\bibliography{sample}
 
\end{document}